\documentclass{egpubl}
\usepackage{eg2027}
 
\ConferenceSubmission   % uncomment for Conference submission
\usepackage[T1]{fontenc}
\usepackage{dfadobe}  
\usepackage{microtype}  

\usepackage{cite}  % comment out for biblatex with backend=biber
\BibtexOrBiblatex
\electronicVersion
\PrintedOrElectronic
\ifpdf \usepackage[pdftex]{graphicx} \pdfcompresslevel=9
\else \usepackage[dvips]{graphicx} \fi

\usepackage{egweblnk} 
\usepackage{adjustbox}
\usepackage{amsmath}
\usepackage{amsfonts}
\usepackage{algpseudocode}
\usepackage{algorithm}
\usepackage{pifont}
\usepackage{amsthm}
\usepackage{todonotes}
\usepackage{booktabs}
\usepackage[table]{xcolor}
\usepackage{longtable}
\usepackage{multirow}
\usepackage{subcaption}
\theoremstyle{definition}

\newtheorem{prop}{Proposition}
\newtheorem{theorem}{Theorem}

\usepackage{xcolor}
\definecolor{dark_green}{RGB}{0,100,0}

\title[VoroDuce]%
      {VoroDuce: Massively Parallel 3D Voronoi Simplification with Volume-Aware Edge Collapse}

\author[F. Windisch et al.]
{\parbox{\textwidth}{\centering
  F. Windisch$^{1}$\orcid{0009-0001-8174-9907},
  L. Radl$^{1}$\orcid{0009-0008-4075-5877},
  T. K\"{o}hler$^{1}$\orcid{0000-0002-4654-3241},
  S. Govindarajan$^{2}$\orcid{0000-0002-3546-8223},
  M. Pur$^{1}$,
  A. Tagliasacchi$^{2,3,4}$\orcid{0000-0002-2209-7187}
  and M. Steinberger$^{1,5}$\orcid{0000-0001-5977-8536}
  }
  \\
  {\parbox{\textwidth}{\centering
    $^1$Graz University of Technology, Austria \quad
    $^2$Simon Fraser University, Canada \quad
    $^3$University of Toronto, Canada \\
    $^4$Wayve Labs, Canada \quad
    $^5$Huawei Technologies, Austria
  }
}}
\usepackage{thmtools, thm-restate}
\begin{document}

% uncomment for using teaser
% \teaser{
%  \includegraphics[width=0.9\linewidth]{eg_new}
%  \centering
%   \caption{New EG Logo}
% \label{fig:teaser}
%}

\maketitle
%-------------------------------------------------------------------------
\begin{abstract}
3D Voronoi diagrams provide a powerful mathematical framework for representing volumetric functions and geometry with applications to Materials Science, Biology, Simulation and Astrophysics. 
However, accurately resolving fine structures often requires millions of cells, imposing severe computational and memory constraints. 
Traditional mesh decimation or point cloud simplification techniques fail to account for the effects re-meshing has on neighbouring cells.
In this work, we formalize the Voronoi reduction problem and propose a greedy edge-collapsing algorithm on the dual Delaunay tetrahedralisation that accounts for changes in the local geometry. 
%To address these shortcomings, we introduce VoroDuce, the first GPU-accelerated reduction algorithm tailored for dense 3D Voronoi representations. 
Further, we demonstrate a novel volumetric error approximation technique that models convex cell intersections as truncated prismatoids, circumventing computationally prohibitive exact volume calculations and enabling massive GPU parallelization. 
Evaluations across Synthetic, Medicine and Material Science data demonstrate that VoroDuce decimates models with up to a million primitives, successfully compressing the representation while preserving the underlying data better than na\"ive decimation approaches.
%-------------------------------------------------------------------------
%  ACM CCS 1998
%  (see https://www.acm.org/publications/computing-classification-system/1998)
% \begin{classification} % according to https://www.acm.org/publications/computing-classification-system/1998
% \CCScat{Computer Graphics}{I.3.3}{Picture/Image Generation}{Line and curve generation}
% \end{classification}
%-------------------------------------------------------------------------
%  ACM CCS 2012
%   (see https://www.acm.org/publications/class-2012)
%The tool at \url{http://dl.acm.org/ccs.cfm} can be used to generate
% CCS codes.
%Example:
%\begin{CCSXML}
%<ccs2012>
%<concept>
%<concept_id>10010147.10010371.10010352.10010381</concept_id>
%<concept_desc>Computing methodologies~Collision detection</concept_desc>
%<concept_significance>300</concept_significance>
%</concept>
%<concept>
%<concept_id>10010583.10010588.10010559</concept_id>
%<concept_desc>Hardware~Sensors and actuators</concept_desc>
%<concept_significance>300</concept_significance>
%</concept>
%<concept>
%<concept_id>10010583.10010584.10010587</concept_id>
%<concept_desc>Hardware~PCB design and layout</concept_desc>
%<concept_significance>100</concept_significance>
%</concept>
%</ccs2012>
%\end{CCSXML}

%\ccsdesc[300]{Computing methodologies~Collision detection}
%\ccsdesc[300]{Hardware~Sensors and actuators}
%\ccsdesc[100]{Hardware~PCB design and layout}

%\printccsdesc   
\end{abstract}  
%-------------------------------------------------------------------------
\section{Introduction}
In 3D geometry representation and processing, Voronoi diagrams serve as a fundamental tool for modelling volumetric structures. By mathematically partitioning three-dimensional space into convex polyhedral cells based on distance to a discrete set of seed points, 3D Voronoi diagrams naturally capture the underlying topology and medial axis of complex shapes. This unique characteristic makes them indispensable across a variety of crucial computational tasks, including high-quality surface meshing~\cite{vorocrust}, controller synthesis~\cite{Blackbox}, multi-agent collision avoidance~\cite{CollisionAvoidance}, and robust shape analysis~\cite{ShapeAnalysis}. %Ultimately, they provide a powerful framework that translates raw point clouds or surface data into a meaningful, mathematically sound volume.

 \begin{figure}
     \centering
     \includegraphics[width=0.55\linewidth, angle=180]{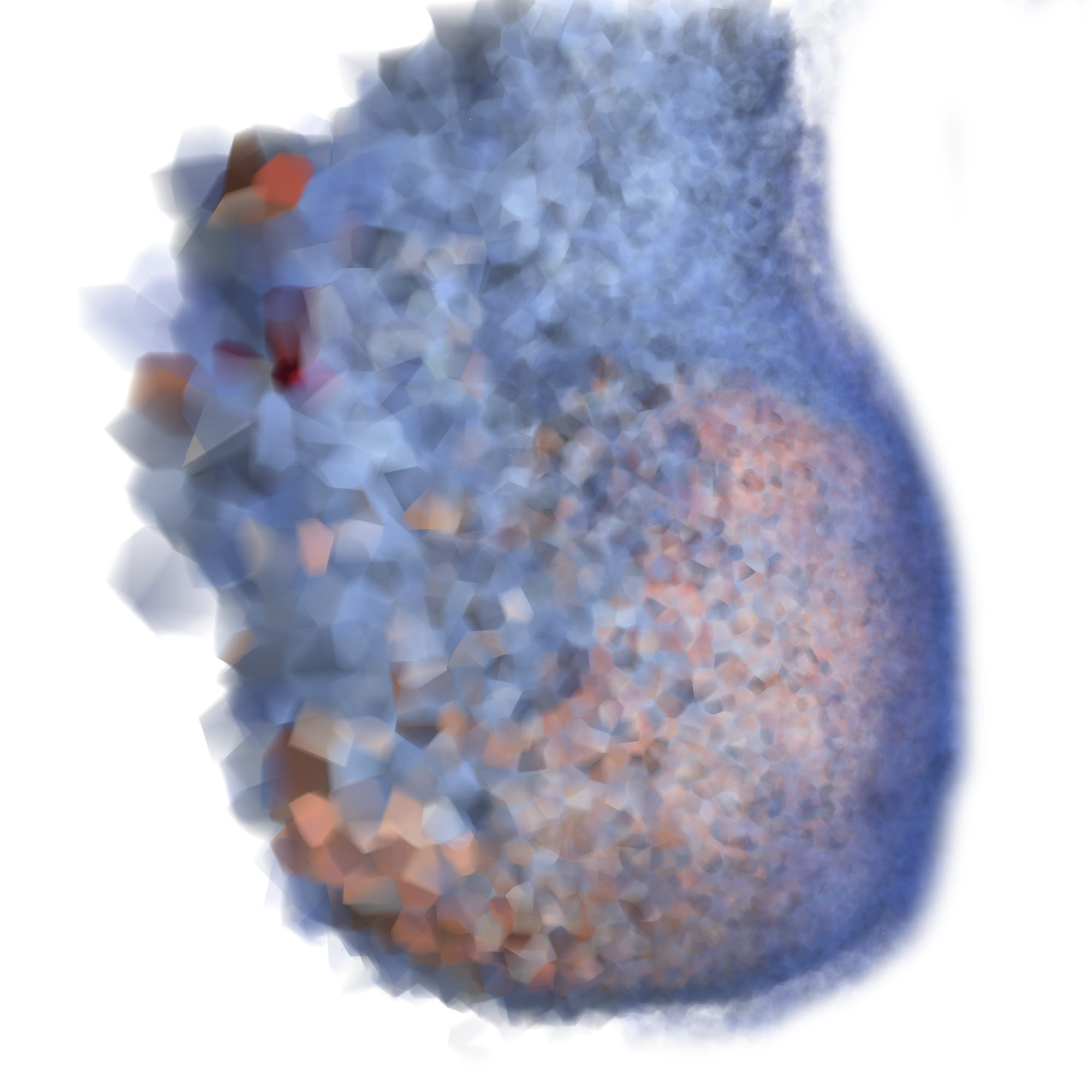}
     \caption{Volumetric function of a brain scan \cite{stanford_volume} represented as a 3D Voronoi diagram that is reduced in detail from left to right.}
     \label{fig:teaser}
 \end{figure}

Beyond purely geometric representation, Voronoi diagrams are useful for approximating and evaluating continuous functions over three-dimensional domains. By associating functional data with the generating seed points, the diagram establishes a piecewise constant volumetric field. When compared to standard 3D triangulations, which are notoriously susceptible to poorly conditioned "sliver" tetrahedra, Voronoi diagrams possess distinctly desirable properties: Every Voronoi cell is mathematically guaranteed to be a convex polytope, providing a well-defined volume ideal for finite volume methods and precise gradient estimation. This intrinsic resilience to highly irregular or anisotropic point distributions allows Voronoi-based function representations to maintain computational stability. While less popular than the dual Delaunay triangulation, Voronoi cells have found several applications for directly representing geometry in Materials Science \cite{Material1, Material2}, Biomedical Engineering~\cite{BioMed}, Architecture~\cite{Architecture}, Astrophysics~\cite{AstroPhysics} and inverse rendering~\cite{radfoam, powerfoam}. While these representations offer high fidelity and flexibility, they do so at a severe computational cost: High-quality Voronoi models routinely require millions of primitives to faithfully resolve fine internal structures. This massive primitive count severely limits their utility in real-time rendering, interactive design, and memory-constrained environments, highlighting a critical need for effective simplification techniques. 

Standard geometry processing algorithms for simplification and decimation are ill-equipped to handle 3D Voronoi representations: Traditional mesh decimation relies on surface connectivity, which is usually absent or prohibitively complex for these models. Similarly, point cloud downsampling techniques typically focus on preserving boundaries or uniform density, largely ignoring the topological interconnectedness that defines the structure of 3D Voronoi representations.

To bridge this gap, we present a novel geometry processing algorithm specifically designed to reduce the site counts of 3D Voronoi representations that approximate some underlying volumetric function while preserving their macroscopic characteristics. Our approach employs a greedy edge-collapsing algorithm on the dual Delaunay triangulation to minimize integrated squared error. Our employed cost function takes into account how point removal affects neighbouring cells and overall topology, while remaining efficient in time and memory up to millions of input sites. Figure~\ref{fig:teaser} visualizes the process of Voronoi reduction.

Our \emph{contributions} can be summarized as follows:
\begin{itemize}
\item We formalize the Voronoi reduction problem to decimate volume representations using the integrated squared error metric.
\item We formulate an efficient edge-collapsing algorithm on the dual Delaunay triangulation.
\item We provide a fully featured GPU-accelerated implementation of our algorithm that scales to millions of primitives.
\item We generate synthetic datasets and collect real-world volumetric functions to establish a benchmark that demonstrates that our algorithm preserves overall structure and details more faithfully than baseline approaches. 
\end{itemize}
%-------------------------------------------------------------------------
\section{Related Work}
In this section, we review related work on geometry reduction and convex cells.

%\subsection{3D Voronoi Applications}
%Voronoi diagrams are a central geometric datastructure that finds application in diverse fields. 
%Maderbacher et al.~\cite{Blackbox} generate control functions for continuous blackbox systems using .

\subsection{Geometry Reduction}
Like Voronoi reduction, point cloud downsampling aims to reduce the number of
points while preserving the underlying geometry. Classic techniques rely on geometric heuristics: Farthest point sampling~\cite{FPS}, Poisson disk sampling~\cite{PoissonDisk} and Centroidal Voronoi Tessellations~\cite{CVT, Lloyd} aim for well-spaced, approximately uniform density in the output point cloud, while others focus on preserving high-frequency details and sharp edges by considering surface normals and curvature~\cite{PointCloud1, PointCloud2, PointCloud3, PointSimplification}.  
More recently, learning-based approaches train sampling
networks jointly with a downstream task such as classification or
segmentation~\cite{SNet, SampleNet, APES}.
None of these methods transfers
directly to 3D Voronoi models. They assume that the points sample a surface,
whereas Voronoi sites represent volumetric cells, and they ignore the
neighborhood structure defined by the Voronoi diagram. Gaussian Mixture reduction is a related problem~\cite{GaussianMixture1, GaussianMixture2, GaussianMixture3}, which also features discrete optimization of volume error, but lacks the topology defined by Voronoi diagrams.

Volumetric mesh simplification typically operates on tetrahedral
meshes~\cite{tetmesh1, tetmesh2, tetmesh3, tetmesh4} rather than on the dual
Voronoi structure. Even when the input is a Delaunay tetrahedralization,
collapse operations generally do not preserve the Delaunay property, so the
result no longer corresponds to a valid Voronoi diagram.
Adaptive thinning
methods~\cite{AdaptiveThinning1, AdaptiveThinning2} greedily remove points
from a Delaunay triangulation according to the approximation error of the
resulting piecewise linear interpolant, but they were developed primarily
for bivariate scattered data.

%Mesh simplification operates on 2-manifolds instead of volumes, but also considers vertex topology similarly 
Numerous important works for mesh decimation employ a sequence of edge collapses ordered by a
priority queue, with costs ranging from global energy
minimization~\cite{Hoppe1996} to local volume and boundary
preservation~\cite{Lindstrom1998} and quadric
error metric~\cite{QEM}.

%Our approach is inspired by the seminal work of Garland and
%Heckbert~\cite{QEM}, who simplify triangle meshes by greedily contracting
%vertex pairs in order of a quadric error metric. While their method operates
%on 2-manifold surfaces, we directly approximate the impact of 3D Voronoi cell merges on volumetric error.

\subsection{Convex Cell Intersection}
Computing the intersection of two convex cells is the core operation of our
algorithm, and several methods for it exist in the literature.
The Sutherland-Hodgman algorithm~\cite{SutherlandHodgman} finds the exact intersection of two convex cells by iteratively slicing the volume of one cell using the faces of the other. In each iteration, one face of the clipping cell is selected and all edges of the other cell are clipped against it.
The intersection of two convex cells can alternatively be computed by finding a point inside the intersection using Linear Programming, dualizing all of the half-spaces and computing their convex hull, before mapping the faces back to primal space, yielding the vertices of the intersection, which can be used to compute the volume~\cite{DualSpace}. 
Notably, while asymptotically efficient, both of these approaches face great difficulties when parallelizing the operation on the GPU due to sequential dependencies and path divergence. In order to scale Voronoi reduction to millions of cells, a massively parallelizable way to approximate the intersection volume of two convex cells is necessary.
Monte Carlo methods have high parallelization potential and provide an efficient way to compute volume of convex cells in high dimensions \cite{MonteCarlo1, MonteCarlo2}, but offer little performance benefit over exact methods in estimating the volume for 3D convex cells and struggle in particular with thin cell intersections.
\section{Preliminaries}
In this section, we summarize important properties of Voronoi diagrams necessary for the algorithm construction.
\subsection{Voronoi Diagrams and Delaunay Triangulations}
A 3D Voronoi diagram is a partition of space based on a discrete set of seed points. For each seed $\mathbf{x}_i$, the corresponding Voronoi cell $V_i$ consists of all points in $\mathbb{R}^d$ with smaller or equal distance to $\mathbf{x}_i$ than to any other seed in the set. Mathematically, each cell is defined as:
\begin{equation}
    V_i :=  \left\{ \mathbf{p} \in \mathbb{R}^d \mid \|\mathbf{p} - \mathbf{x}_i\| \leq \|\mathbf{p} - \mathbf{x}_j\|\  \forall j \neq i  \right\}.
\end{equation}
These cells are convex polyhedra, and their boundaries represent the equidistant interfaces between neighboring seeds. 
 In the dual graph of the Voronoi diagram, known as the Delaunay triangulation $\mathcal{T}= \{\{\mathbf{x}_i\}_{i=1}^N, \mathcal{T}_\mathcal{E}\}$, two vertices are connected by an edge $e \in \mathcal{T}_\mathcal{E}$ iff their cells share a face in the Voronoi diagram. The Delaunay edge condition states that two cells share a face in the Voronoi diagram iff there exists a supporting sphere that intersects only those two points:
\begin{equation}\mathcal{T}_\mathcal{E} = \left\{ (i, j) |  \exists \mathbf{c} \in \mathbb{R}^d \text{ s.t. } \|\mathbf{x}_i - \mathbf{c}\| = \|\mathbf{x}_j - \mathbf{c}\| < \|\mathbf{x}_k - \mathbf{c}\| \ \forall k \not \in \{i,j\}\right\}. \end{equation}
We define the neighbour function $\mathcal{N}(i)$ as the set of neighbours of a point in the Delaunay triangulation
\begin{equation}\mathcal{N}(i):= \{j|(i, j) \in \mathcal{T}_\mathcal{E}\}.\end{equation}

and denote the perpendicular bisector $\Pi_{\mathbf{a}, \mathbf{b}}$ between points $a, b$ as:
\begin{equation}
    \Pi_{\mathbf{a}, \mathbf{b}} = \left\{ \mathbf{x} \in \mathbb{R}^3 \;\middle|\; \left(\mathbf{x} - \frac{\mathbf{a} + \mathbf{b}}{2}\right) \cdot (\mathbf{b} - \mathbf{a}) = 0 \right\}
\end{equation}

Voronoi cells can alternatively be defined as the intersection of half-spaces $\Pi^+_{\mathbf{x}_i, \mathbf{x}_n}$ for all points $\mathbf{x}_n \in \mathcal{N}(i)$:   
\begin{equation}
    V_i = \bigcap_{n \in \mathcal{N}(i)} \Pi_{\mathbf{x}_i, \mathbf{x}_n}^+,
\end{equation}
where $\Pi^+_{\mathbf{x}_i, \mathbf{x}_n}$ is the half-space bounded by $\Pi_{\mathbf{x}_i, \mathbf{x}_n}$ that contains $\mathbf{x}_i$.
A Voronoi face $\mathcal{F}_{i, j}$ shared by cells $V_i, V_j$ with $j \in \mathcal{N}(i)$ is then a convex polytope defined by:
\begin{align}
    \mathcal{F}_{i, j} &=  V_i \cap V_j 
    \\&= V_i \cap \bigcap_{n \in  \mathcal{N}(j)} \Pi^+_{\mathbf{x}_j, \mathbf{x}_n}
    \\&=\Pi_{\mathbf{x}_i, \mathbf{x}_j} \cap \bigcap_{n \in \mathcal{N}(i) \cap \mathcal{N}(j)} \Pi^+_{\mathbf{x}_j, \mathbf{x}_n}
    \label{Eq:Face}
\end{align}

%\begin{figure}[h]
%    \centering
%    \includegraphics[width=0.7\linewidth]{Figures/voronoi_delaunay.png}
%    \caption{Voronoi cells (blue) for a set of points in 2D space with dual Delaunay Triangulation (gray).}
%    \label{fig:VoronoiDelaunay}
%\end{figure}

Throughout, we will assume the sites are in
general position, i.e.\ no four sites are coplanar and no five are co-spherical.
In particular, two Voronoi cells intersect if and only if their sites share a
Delaunay edge.

\subsection{The Voronoi Reduction Problem}
To our knowledge, the problem of reducing the site counts of Voronoi diagrams representing volumetric functions has not been formalized in previous work. In this section, we provide a definition.

Let $g$ be a function $\mathbb{R}^{d_i} \rightarrow \mathbb{R}^{d_o}$ and $\mathcal{S}$ be a set of $N$ samples at locations $\mathbf{x}_i$:
$\mathcal{S} = \left\{\big(\mathbf{x}_i, \mathbf{y}_i\big)\right\}_{i=1}^N$ with $\mathbf{y}_i = g(\mathbf{x}_i)$. In the following, we will assume input dimension $d_i = 3$. 

The original approximated function $f(\mathbf{x})$ assigns the value of the closest sample to any point $\mathbf{x}$ (commonly referred to as ``nearest-neighbour interpolation''). Using the indicator function $\mathbb{I}_{V_i}(\mathbf{x})$ (which is 1 if $\mathbf{x} \in V_i$ and 0 otherwise), we can write this piecewise constant function as:
\begin{equation}
f(\mathbf{x}) = \sum_{i=1}^N \mathbf{y}_i \cdot \mathbb{I}_{V_i}(\mathbf{x}).
\end{equation}
The goal is to construct a smaller set of $M<N$ samples. Let this reduced set be denoted as $\hat{\mathcal{S}} = \{(\hat{\mathbf{x}}_k, \hat{\mathbf{y}}_k)\}_{k=1}^M.$ 
The reduced approximated function, $\hat{f}(\mathbf{x})$, is defined analogously:
\begin{equation}\hat{f}(\mathbf{x}) = \sum_{k=1}^M \hat{\mathbf{y}}_k \cdot \mathbb{I}_{\hat{V}_k}(\mathbf{x}),\end{equation}
with $V$ denoting the Voronoi cells of the original point set $\mathcal{S}$ and $\hat{V}$ those of the reduced point set $\hat{\mathcal{S}}$.
The reduced point set $\hat{\mathcal{S}}$ should minimize the \textit{integrated squared error} $ISE(\hat{\mathcal{S}}, \mathcal{S}) $ between the original and reduced approximations:
\begin{align}
  \hat{\mathcal{S}}^* &= \underset{%\hat{\mathcal{S}}
  }{\mathrm{argmin}} \ ISE(\hat{\mathcal{S}}, \mathcal{S}) \\  
   &= \underset{%\hat{\mathcal{S}
   }{\mathrm{argmin}
  } \int_{\Omega} \left( f(\mathbf{x}) - \hat{f}(\mathbf{x})\right)^2 \, d\mathbf{x}\\
   &= \underset{}{\mathrm{argmin}
   } \int_{\Omega} \left( \sum_{i=1}^N \mathbf{y}_i \mathbb{I}_{V_i}(\mathbf{x}) - \sum_{k=1}^M \hat{\mathbf{y}}_k \mathbb{I}_{\hat{V}_k}(\mathbf{x}) \right)^2 \, d\mathbf{x},
\end{align}
subject to $|\hat{\mathcal{S}}| = M < N$ on the domain $\Omega$ (e.g. bounding box of samples). Figure \ref{fig:2D_reduction} illustrates the problem using the example of downsampling a 2D image. 

Because both $f$ and $\hat{f}$ are piecewise constant, this integral can be broken down into a sum of integrals over the intersections of the original and new Voronoi cells ($V_i \cap \hat{V}_k$). Inside each intersection, the error is simply the constant $(\mathbf{y}_i - \hat{\mathbf{y}}_k)^2$. Thus, the error can also be written as:
\begin{align}
ISE(\hat{\mathcal{S}}, \mathcal{S}) &= \sum_{i=1}^N \sum_{k=1}^M \int_{V_i \cap \hat{V}_k} (\mathbf{y}_i - \hat{\mathbf{y}}_k)^2 \, d\mathbf{x} \\
&= \sum_{i=1}^N \sum_{k=1}^M (\mathbf{y}_i - \hat{\mathbf{y}}_k)^2 \cdot \text{Vol}(V_i \cap \hat{V}_k),    
\end{align}

where $\text{Vol}(\cdot)$ denotes the Lebesgue measure (i.e. area in 2D, volume in 3D) of the overlapping region.

\begin{figure}
    \centering
    \begin{subfigure}[b]{0.24\linewidth}
        \centering
        \includegraphics[width=0.95\linewidth]{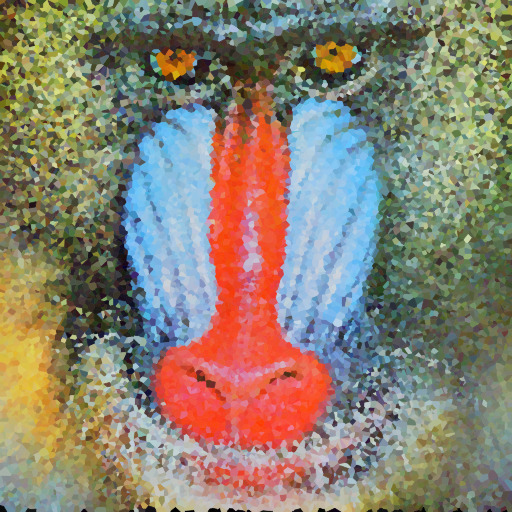}
        \caption{15k cells}
    \end{subfigure}%
    \begin{subfigure}[b]{0.24\linewidth}
        \centering
        \includegraphics[width=0.95\linewidth]{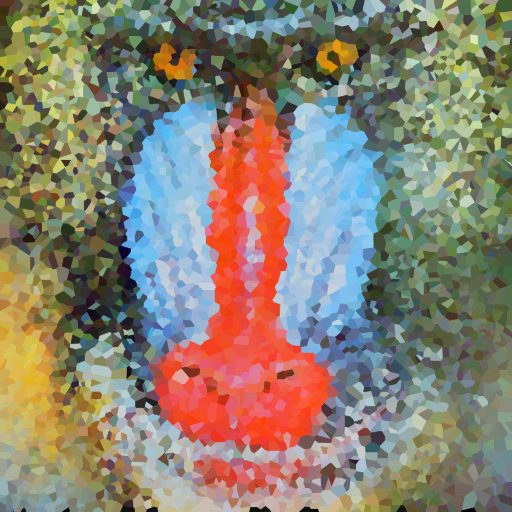}
        \caption{5k cells}
    \end{subfigure}
     \begin{subfigure}[b]{0.24\linewidth}
        \centering
        \includegraphics[width=0.95\linewidth]{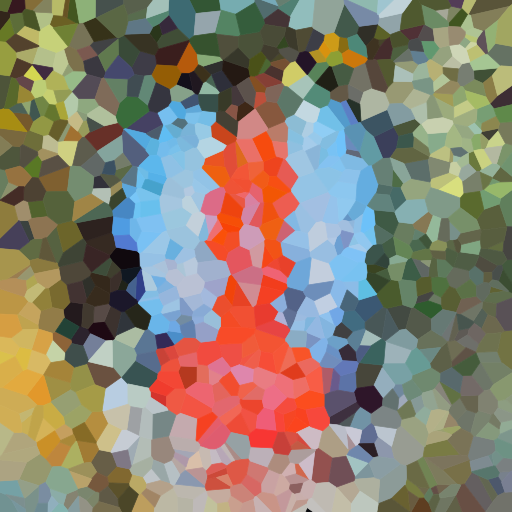}
        \caption{1k cells}
    \end{subfigure}
    \begin{subfigure}[b]{0.24\linewidth}
        \centering
        \includegraphics[width=0.95\linewidth]{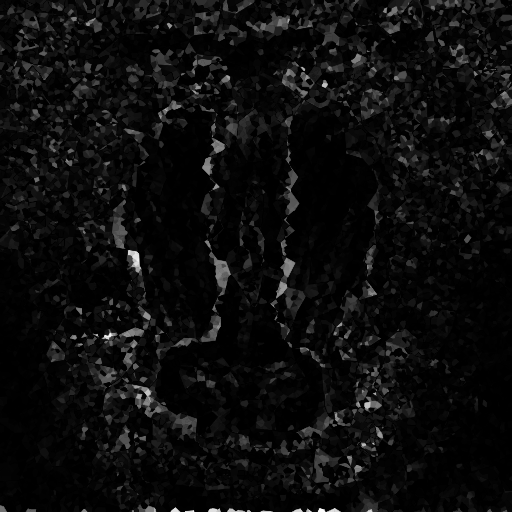}
        \caption{MSE (1k)}
    \end{subfigure}
    \caption{Example of Voronoi reduction in 2D. An image is represented using a set of Voronoi cells. The goal is to reduce the number of sites while minimizing mean squared error. }
    \label{fig:2D_reduction}
\end{figure}

%\subsection{NP-hardness}
%I think we can show NP-hardness by reduction to Planar K-Means
%\begin{theorem}
%The Voronoi Reduction Problem is NP-Hard. 
%\end{theorem}
%\begin{proof}
%    
%\end{proof}

%\subsection{Radiant Foam}
%
%RadFoam~\cite{radfoam} leverages 3D Voronoi diagrams to represent radiance fields by associating color $\mathbf{c}_i$ and density $\sigma_i$ parameters with each cell. 
%Cells are then rendered by raytracing the cells and blending them together according to the Volume Rendering equation to find the pixel color $\mathbf{c}_{p}$:
%$$\mathbf{c}_{p} = \sum_{n=1}^{N} \prod_{j=1}^n \Big(\textrm{exp}(-\sigma_j \delta_j)\Big) \cdot (1 - \textrm{exp}(-\sigma_n \delta_n)) \cdot \mathbf{c}_n \cdot d_n,$$
%where $d_n$ represents the distance over which the ray intersects the $n$-th cell. 
%Because the Voronoi tessellation is differentiable with respect to the seed positions, the system can optimize the geometry (the "foam") and the appearance simultaneously, allowing the cells to move and adapt to the underlying structure of the 3D scene. 

\section{Method}
%Our method employs a greedy cell-merging algorithm to decimate the representation. %The central difficulty is that collapsing a single Delaunay edge will also affect the geometry of all neighbouring cells in complex ways and may cause an arbitrary amount of edge flips in the Delaunay triangulation. Without modelling these interactions, the model quickly degenerates. 
Removing a Voronoi cell has complex effects on neighbouring cells and may cause an arbitrary number of edges in the Delaunay triangulation to flip (an illustrative example is included in Supplemental~\ref*{app:example}). As such, an effective Voronoi reduction algorithm must account both for the geometric changes of the cells and the topological changes to the dual triangulation.

\subsection{Iterative Cell Merging}
A natural procedure to reduce the number of Voronoi cells is to iteratively merge neighbouring cell pairs by combining two sites which share a Voronoi boundary. Neighbouring Voronoi cells are connected by a Delaunay edge, so this algorithm can alternatively be viewed as iterative edge collapse on the dual Delaunay triangulation. Figure~\ref{fig:CellMerge} shows a 2D example of a cell merge.
To determine the order of collapse, we select cell pairs greedily based on the integrated squared error that would be introduced by the merge. Fully re-building the Delaunay triangulation after every edge collapse and re-computing all cell-pair costs is too expensive in practice, so we attempt to collapse $\lfloor \beta \cdot |\mathcal{S}| \rfloor$ edges in each batch before re-computing edges and costs. 
Costs only need to be re-computed for neighbours of merged cells and for new Delaunay edges introduced through edge flips. 
%We choose $\beta = 0.02$ as a default, striking a good trade-off between performance and accuracy. 
Further discussion of this parameter is included in Supplemental~\ref*{app:B_ablation}.

%The number of re-computations per batch is, just like the number of edge flips and cell neighbours, constant in expectation. 
The optimal value for the merged cell $\mathbf{y}_{new}^*$ is the mean of $f$ over the merged cell
as derived in Supplemental~$\ref*{app:merge}$. Note that the cost computation in each step measures the error to the state before the merge, not to the original input. 
The final procedure is sketched in Algorithm \ref{alg:iterative_collapse}.
\begin{algorithm}
\caption{Iterative Delaunay Edge Collapse}
\label{alg:iterative_collapse}
\begin{algorithmic}[1]
\Require Set of initial samples $\mathcal{S}$, target sample count M, collapse batch size $\beta$
\State $\mathcal{T}_\mathcal{E} \gets \text{\textsc{ComputeDelaunayEdges}}(\mathcal{S})$
\For{\textbf{each} $e \in \mathcal{T}_\mathcal{E}$}
    \State $c(e) \gets \text{\textsc{ComputeCost}}(e)$
\EndFor
\While{$|\mathcal{S}| > M$}
    \State $\mathcal{E}_{sorted} \gets \text{\textsc{SortEdgesByCostAscending}}(\mathcal{T}_\mathcal{E}, c)$
    \State $B \gets \lfloor \beta \cdot |\mathcal{S}| \rfloor$
    \State $\mathcal{E}_{collapse} \gets \mathcal{E}_{sorted}[0:B]$ \Comment{Cheapest B edges}
    
    \State $\mathcal{S}_{removed} \gets \emptyset$
    \State $\mathcal{S}_{new} \gets \emptyset$
    
    \For{\textbf{each} $(u, v) \in \mathcal{E}_{collapse}$}
        \If{$u \notin \mathcal{S}_{removed}$ \textbf{and} $v \notin \mathcal{S}_{removed}$}
            \State $\mathbf{x}_{new} \gets \frac{\mathbf{x}_u + \mathbf{x}_v}{2}$ 
            \State $\mathbf{y}_{new} \gets \text{\textsc{InterpolateValue}}(\mathbf{y}_u, \mathbf{y}_v)$
            
            \State $\mathcal{S} \gets (\mathcal{S} \setminus \{(\mathbf{x}_u, \mathbf{y}_u), (\mathbf{x}_v, \mathbf{y}_v)\}) \cup \{(\mathbf{x}_{new}, \mathbf{y}_{new})\}$
            
            \State $\mathcal{S}_{removed} \gets \mathcal{S}_{removed} \cup \{u, v\}$
            \State $\mathcal{S}_{new} \gets \mathcal{S}_{new} \cup \{(\mathbf{x}_{new}, \mathbf{y}_{new})\}$
        \EndIf
    \EndFor
    
    \State $\mathcal{T}_\mathcal{E} \gets \text{\textsc{RecomputeDelaunayEdges}}(\mathcal{S})$
    \State $\mathcal{E}_{affected} \gets \text{\textsc{IdentifyAffectedEdges}}(\mathcal{T}_\mathcal{E},  \mathcal{\mathcal{S}}_{removed}, \mathcal{\mathcal{S}}_{new})$
    \Comment{Re-compute cost for neighbours of merged cells}
    \For{\textbf{each} $e \in \mathcal{E}_{affected}$} 
        \State $c(e) \gets \text{\textsc{ComputeCost}}(e)$
    \EndFor
    
    %\State $\mathcal{T}_\mathcal{E} \gets \mathcal{E}_{new}$
\EndWhile
\State \Return $\mathcal{S}$
\end{algorithmic}
\end{algorithm}

%The algorithm iteratively computes sets $\mathcal{S}_1, \mathcal{S}_2, ..., \mathcal{S}_n$ with $|\mathcal{S}_i| < |\mathcal{S}_{i+1}|$, by choosing the edge collapse which minimizes $ISE(\mathcal{S}_i,\mathcal{S}_{i+1})$

\begin{figure}[h]
    \centering
    \includegraphics[width=0.7\linewidth]{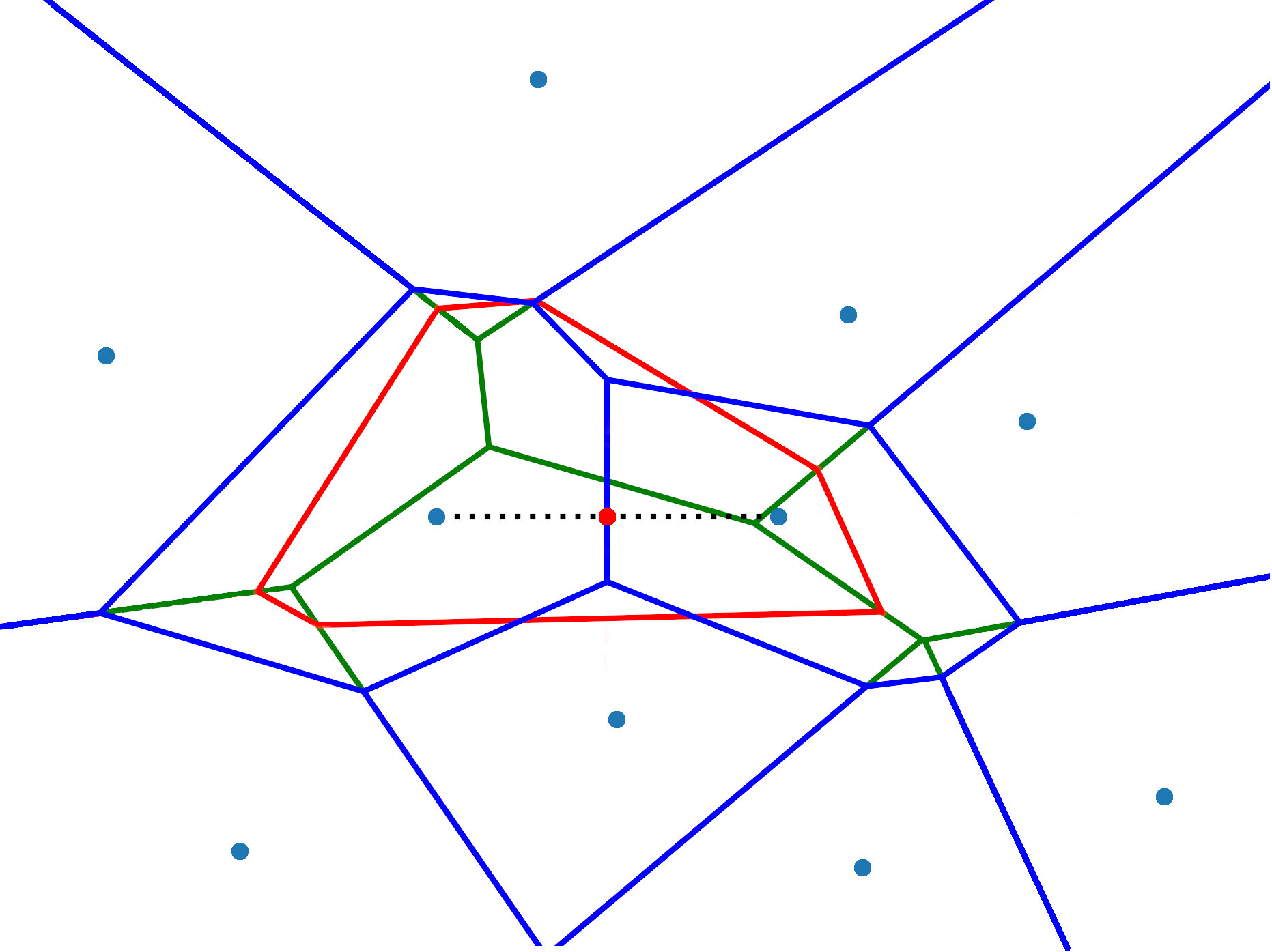}
    \caption{Collapsing a Delaunay edge (dotted) of a Voronoi diagram (blue) creates a merged cell (red). Green lines represent the Voronoi diagram without the point pair. }
    \label{fig:CellMerge}
\end{figure}

\subsection{Computing the cost}
Recall that after collapsing an edge $(\mathbf{x}_i, \mathbf{x}_j) \in \mathcal{T}_{\mathcal{E}}$ to its midpoint $\mathbf{x}_{new}=\frac{\mathbf{x}_i+\mathbf{x}_j}{2}$, the introduced $ISE$ to the point set $\mathcal{S}$ before the collapse is given by:

\begin{equation}
c(i, j) = ISE(\hat{\mathcal{S}}, \mathcal{S}) = \sum_{k=1}^N \sum_{l=1}^M (\mathbf{y}_k - \hat{\mathbf{y}}_l)^2 \cdot \text{Vol}(V_k \cap \hat{V}_l), 
\label{CostBreakdown}
\end{equation}
where $\hat{\mathcal{S}} = \left(\mathcal{S} \setminus \{(\mathbf{x}_i, \mathbf{y}_i), (\mathbf{x}_j, \mathbf{y}_j)\} \right) \cup (\mathbf{x}_{new}, \mathbf{y}_{new})$ is the reduced set after collapsing $(\mathbf{x}_i, \mathbf{x}_j)$. Assume for the notation that adding and removing points does not shift indices.

Due to the properties of the Voronoi diagram, collapsing a Delaunay edge only increases the error locally. 
Specifically, reducing the sample set from $\mathcal{S}$ to $\hat{\mathcal{S}}$
 only causes changes to the function in the union of the volume of the merged and original cells:
\begin{equation}
      \forall \mathbf{x} \in \mathbb{R}^{d_i} \setminus (V_i \cup V_j \cup \hat{V}_{new}):f(\mathbf{x}) = \hat{f}(\mathbf{x})   
\end{equation}

We can split the error terms from Equation~\ref{CostBreakdown} into gained and lost volume of neighbour cells, changes to the merged cells and effects on non-neighbouring cells. 
%Therefore, the vast majority of the terms in Equation \ref{CostBreakdown} vanish and the remaining error is given by:
\begin{align}
  c(i, j) &= \sum_{k \in \mathcal{N}(i) \setminus j}  (\mathbf{y}_i - \mathbf{y}_k)^2 \cdot \text{Vol}(V_i \cap \hat{V}_k)  \label{eq1} \\
  &+ \sum_{k \in \mathcal{N}(j) \setminus i}  (\mathbf{y}_j - \mathbf{y}_k)^2 \cdot \text{Vol}(V_j \cap \hat{V}_k) \label{eq2}\\
  &+ \sum_{k \in \mathcal{N}(i) \cup \mathcal{N}(j) \setminus \{i,j\}} (\hat{\mathbf{y}}_{new} - \mathbf{y}_k)^2 \cdot \text{Vol}(\hat{V}_{new} \cap V_k) \label{eq3} \\
  &+ (\hat{\mathbf{y}}_{new} - \mathbf{y}_i)^2 \cdot \text{Vol}(\hat{V}_{new} \cap V_i)\label{eq4}\\ 
  &+ (\hat{\mathbf{y}}_{new} - \mathbf{y}_j)^2 \cdot \text{Vol}(\hat{V}_{new} \cap V_j), \label{eq5}\\
  &+ \sum_{k \in \mathcal{S} \setminus (\mathcal{N}(i) \cup \mathcal{N}(j))} (\hat{\mathbf{y}}_{new} - \mathbf{y}_k)^2 \cdot \text{Vol}(\hat{V}_{new} \cap V_k) \label{eq6} 
\end{align}
with $\hat{V}_k$ representing the cell $k$ after the merge has taken place.
Equations \ref{eq1} and \ref{eq2} represent the volume gained by neighbouring cells from $V_i$ and $V_j$ respectively. Term \ref{eq3} represents the volume that neighbouring cells lose to the merged cell $\hat{V}_{new}$ and \ref{eq4} / \ref{eq5}  is the volume that the merged cell $\hat{V}_{new}$ ``inherits'' from its parent cells $V_i$ and $V_j$ respectively. Cases can occur in which the merged cell $\hat{V}_{new}$ intersects with a cell that was not a neighbour of $i$ or $j$, which is accounted for in Equation~\ref{eq6}. However, these cases are both infrequent and have negligible effect on the error, as explored in Supplemental~\ref*{app:Flip_Frequency}. Because handling this rare and insignificant case would be expensive, we will assume that the term is zero for practical cost computation.  

Figure \ref{fig:voronoi_volume_costs} visualizes the first five components of the edge collapse cost and Figure \ref{fig:3D_volumes} shows the individual intersection volumes for a 3D cell pair.  
%\todo{Shri: I think it would be better to explain this more clearly with refernece to the figure. Make something like initial sites were $x_0$, $x_1$ with voronoi cells $V_0$, $V_1$ as shown in figure. Then the reduction to the red site(mid point) in figure leads to $V_2$ and we can measure the difference in volume using the purple regions whose cost can be computed the following way, then the equations}
\begin{figure}[h]
    \centering
    \includegraphics[width=0.75\linewidth]{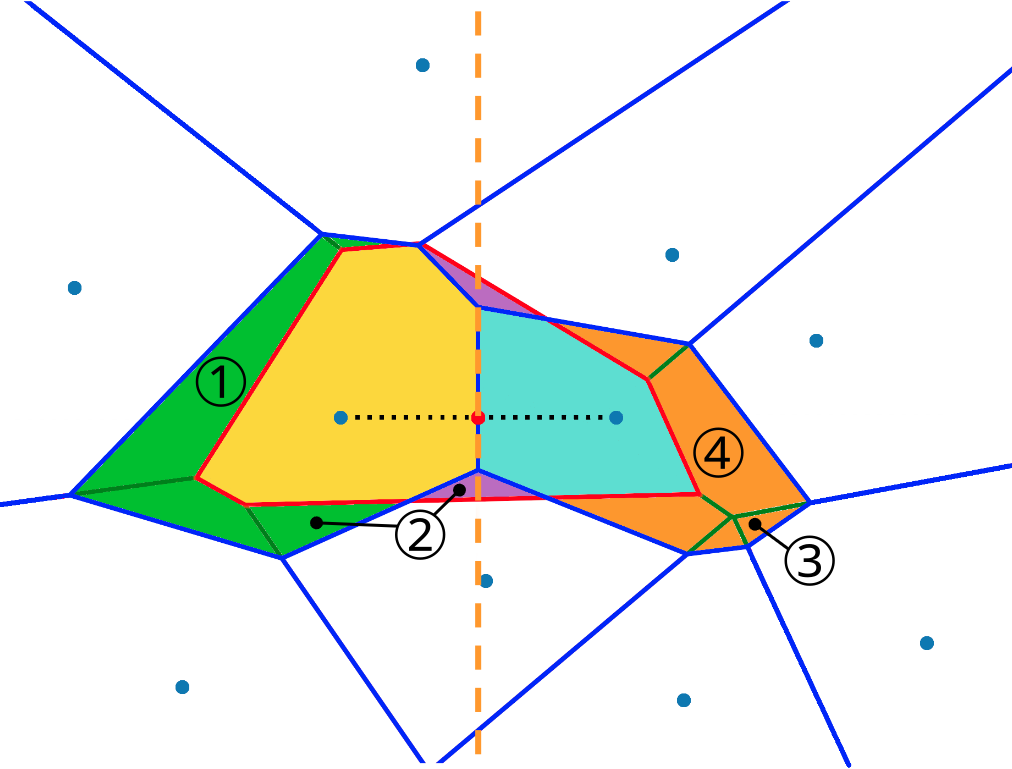}
    \caption{Two cells (blue) are merged along a Delaunay edge (dotted) to a new cell (red). In colored volumes the nearest neighbour has changed. The green and orange regions correspond to Equations \ref{eq1}/\ref{eq2} and the cyan and yellow regions to \ref{eq4}/\ref{eq5}. The volumes from Equation \ref{eq3} are marked in purple. Circled numbers indicate special cases. }
    \label{fig:voronoi_volume_costs}
\end{figure}

\begin{figure}[h]
    \centering
    \includegraphics[width=0.6\linewidth]{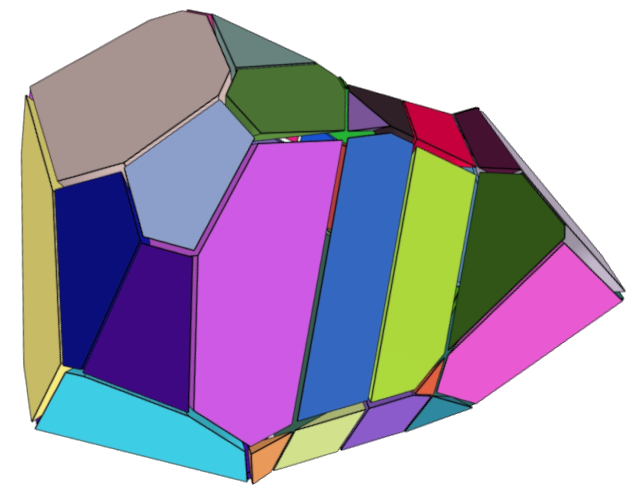}
    \caption{For a pair of 3D Voronoi cells, the volume gained and lost through the edge collapse is visualized for each neighbour cell in a different color.}
    \label{fig:3D_volumes}
\end{figure}

\subsubsection{Computing Intersection Volumes}
\label{sec:approximation}
To compute the intersection of convex cells, we use a multi-threaded implementation of Sutherland-Hodgman \cite{SutherlandHodgman}. Computing these intersections represents the main computational burden of the algorithm, representing up to $92\%$ of total runtime. With around 2000 amortized volume intersections for every single performed edge collapse, the optimization of this routine is paramount to the performance of the reduction, necessitating the use of approximations.

%For random point distributions, the expected vertex degree of a 3D Delaunay triangulation is approximately $16$ \cite{AverageDegree}. It follows that computing the gained and lost volume for every neighbour of both cells requires approximately $64$ convex cell intersections. After collapsing the edge, this cost needs to be re-computed for the expected 16 new edges of the merged point $\mathbf{x}_{new}$, resulting in a total of $64\cdot16=1024$ volume computations for removing a single point. Decimating Voronoi diagrams with millions of points requires billions of cell intersections in total, making the acceleration of this operation absolutely vital for the scalability of the algorithm.  

\providecommand{\figidea}[1]{\par\noindent\textcolor{blue}{\textbf{[Figure idea]} #1}\par}
\providecommand{\TODO}[1]{\textcolor{red}{\textbf{TODO}: #1}}
\providecommand{\Vol}{\operatorname{Vol}}
\providecommand{\conv}{\operatorname{conv}}
 
\paragraph*{Approximating Neighbour Intersection Volumes}
\label{sec:neighbour_volumes}

%Throughout this section, the sites $\mathbf{x}_i$ and $\mathbf{x}_j$ are merged
%into $\mathbf{x}_{new} = \frac{1}{2}(\mathbf{x}_i + \mathbf{x}_j)$, and quantities
%after the merge are marked with a hat: $\hat V_k$, $\hat{\mathcal{N}}(k)$,
%$\hat{\mathcal{F}}_{k,l}$ and $\hat{\mathcal{T}}$. 
 Let points $\mathbf{x}_i$ and $\mathbf{x}_j$ be merged
into $\mathbf{x}_{new} = \frac{1}{2}(\mathbf{x}_i + \mathbf{x}_j)$.
For $k \in \mathcal{N}(i)$ (we choose $i$ w.l.o.g. over $j$), the cost terms in Equations~\ref{eq1} and
\ref{eq3} require two volumes:
\begin{align}
  G_k &:= V_i \cap \hat V_k      && \text{(volume $V_k$ gains from $V_i$)}, \label{eq:gain_region}\\
  L_k &:= V_k \cap \hat V_{new}  && \text{(volume $V_k$ loses to $\hat V_{new}$)}. \label{eq:loss_region}
\end{align}
In this section we will assume that $k \not \in \mathcal{N}(j)$ and detail handling for the converse case at the end. 
These can be thought of as the volume swept between the faces $\mathcal{F}_{orig} = \mathcal{F}_{\mathbf{x}_{i}, \mathbf{x}_{k}}$ and $\mathcal{F}_{new} = \mathcal{F}_{\mathbf{x}_{new}, \mathbf{x}_{k}}$ constrained by the halfspaces that define $\mathcal{F}_{\mathbf{x}_{i}, \mathbf{x}_{k}}$, which we call the \emph{neighbour tunnel} of $k$ towards $i$:  
\begin{align}
  \mathcal{H}_{k,i} &:= \{\Pi^+_{\mathbf{x}_k,\mathbf{x}_l} \mid l\in\mathcal{N}(k)\cap\mathcal{N}(i)\}.
 % \mathcal{C}_{k,i} &:= \textstyle\bigcap \mathcal{H}_{k,i}
\end{align}

%Every half-space in
%$\mathcal{H}_{k,i}$ remains a constraint of $\hat V_k$ after the merge,
%since $j\notin\mathcal{N}(k)$ and all other common neighbours remain sites.
%Hence
%\begin{equation}
%  \hat V_k\subseteq\textstyle\bigcap \mathcal{H}_{k,i}  .
%  \label{eq:tunnel_contains}
%\end{equation}
%For a face $\mathcal{F}$ of $V_k$ or $\hat V_k$, we denote by
%\begin{equation}
%  \mathcal{E}(\mathcal{F}) := \{\,l \mid \mathcal{F}\cap\Pi_{\mathbf{x}_k,\mathbf{x}_l}
%  \text{ is an edge of }\mathcal{F}\}
%\end{equation}
%the sites whose wall supports an edge of $\mathcal{F}$. We have
%$\mathcal{E}(\mathcal{F}_{k,i})\subseteq\mathcal{N}(k)\cap\mathcal{N}(i)$.
%The walls of the remaining common neighbours are redundant on
%$\mathcal{F}_{k,i}$ (cf.\ Equation~\ref{eq:face}), but they may become active
%between the old and the new face. 
The original face can then be alternatively defined as 
\begin{equation}
\mathcal{F}_{orig} := \Pi_{\mathbf{x}_k,\mathbf{x}_i}\cap \textstyle\bigcap \mathcal{H}_{k,i}  
 \label{eq:old_face}
\end{equation}
 (cf. Equation~\ref{Eq:Face}) and so long as the underlying Delaunay triangulation $\mathcal{T}$ is unaffected by the merge, 
 \begin{equation}
 \hat{\mathcal{F}}_{new} := \Pi_{\mathbf{x}_k,\mathbf{x}_{new}}\cap \textstyle\bigcap \mathcal{H}_{k,i}.
 \label{eq:new_face}
 \end{equation} 
Figure~\ref{fig:Prismatoid_Tunnel} illustrates these relationships on an example intersection volume as pictured in Figure~\ref{fig:3D_volumes}, corresponding to the 2D case \ding{192} in Figure~\ref{fig:voronoi_volume_costs}.

%The original and new faces are
%\begin{equation}
%  \mathcal{F}_{orig} := \Pi_{\mathbf{x}_k,\mathbf{x}_i}\cap \textstyle\bigcap \mathcal{H}_{k,i},
%  \qquad
%  \mathcal{F}_{new}  := \Pi_{\mathbf{x}_k,\mathbf{x}_{new}}\cap \textstyle\bigcap \mathcal{H}_{k,i}.
%  \label{eq:tunnel_faces}
%\end{equation}
%A convex polygon is the intersection of the half-planes supporting its edges.
%Hence $\mathcal{F}_{orig}=\mathcal{F}_{k,i}$ exactly. Whether
%$\mathcal{F}_{new}$ equals the true face $$ depends on
%the changes to the triangulation, as discussed below.
 
%\figidea{3D rendering of a cell $V_k$ with the face $\mathcal{F}_{orig}$
%highlighted and the tunnel walls $\Pi_{\mathbf{x}_k,\mathbf{x}_l}$,
%$l\in\mathcal{N}(k)\cap\mathcal{N}(i)$, extended as translucent planes towards $\mathbf{x}_i$.
%The tilted plane $\Pi_{\mathbf{x}_k,\mathbf{x}_{new}}$ cuts the tunnel in
%$\mathcal{F}_{new}$, and the region between the two faces is shaded.}
%\begin{figure}
%    \centering
%    \includegraphics[width=0.65\linewidth]{Figures/Primatoid_New.png}
%    \caption{The change in volume of a neighbouring cell is given as the truncated prismatoid formed by the convex hull of the intersection of $\Pi_{\mathbf{x}_k,\mathbf{x}_i}$ and $\Pi_{\mathbf{x}_k,\mathbf{x}_{new}}$ with the neighbour tunnel.
%    }
%    \label{fig:Prismatoid_Tunnel}
%\end{figure}

The face $\mathcal{F}_{new}$ may intersect $\mathcal{F}_{orig}$ in the line $\Pi_{\mathbf{x}_k,\mathbf{x}_i} \cap \Pi_{\mathbf{x}_k,\mathbf{x}_{new}}$, in which case it separates the faces into $\mathcal{F}^+_{new}/\mathcal{F}^-_{new}$ and $\mathcal{F}^+_{orig}/\mathcal{F}^-_{orig}$. The swept volume between $\mathcal{F}^+_{new}, \mathcal{F}^+_{orig}$ then represents volume gained by $V_k$ and $\mathcal{F}^-_{new}, \mathcal{F}^-_{orig}$ volume lost, each forming non-intersecting pairs of convex polygons. This case is illustrated in Figure~\ref{fig:double_wedge} and indicated with \ding{193} in the 2D illustration of Figure~\ref{fig:voronoi_volume_costs}.

%\begin{figure}
%    \centering
%    \includegraphics[width=0.75\linewidth]{Figures/Double_Wedge_new.png}
%    \caption{If the faces before and after the merge intersect, some volume $G_k$ (green) is gained and some $L_k$ is lost (purple) from cell $V_k$. }
%    \label{fig:double_wedge}
%\end{figure}

%Define the affine function
%\begin{equation}
%  \sigma_i(\mathbf{p}) := \|\mathbf{p}-\mathbf{x}_{new}\|^2 - \|\mathbf{p}-\mathbf{x}_i\|^2 .
%\end{equation}
%Its zero set is the bisector $\Pi_{\mathbf{x}_i,\mathbf{x}_{new}}$. By
%Equation~\ref{eq:midpoint_identity}, this plane is parallel to
%$\Pi_{\mathbf{x}_i,\mathbf{x}_j}$, halfway between $\mathbf{x}_i$ and
%$\Pi_{\mathbf{x}_i,\mathbf{x}_j}$. For a polygon $\mathcal{F}$, let
%$\mathcal{F}^{\pm} := \mathcal{F}\cap\{\pm\sigma_i\ge 0\}$.
% 
%\begin{lemma}[Separation of gain and loss]
%\label{lem:separation}
%$G_k \subseteq \{\sigma_i \ge 0\}$ and $L_k \subseteq \{\sigma_i\le 0\}$.
%\end{lemma}
%\begin{proof}
%For $\mathbf{p}\in G_k$ we have
%$\|\mathbf{p}-\mathbf{x}_i\|\le\|\mathbf{p}-\mathbf{x}_k\|\le\|\mathbf{p}-\mathbf{x}_{new}\|$.
%For $\mathbf{p}\in L_k$ we have
%$\|\mathbf{p}-\mathbf{x}_{new}\|\le\|\mathbf{p}-\mathbf{x}_k\|\le\|\mathbf{p}-\mathbf{x}_i\|$.
%\end{proof}
 
%On $\Pi_{\mathbf{x}_k,\mathbf{x}_i}$ and on $\Pi_{\mathbf{x}_k,\mathbf{x}_{new}}$,
%the two faces can only meet on the line
%$\Pi_{\mathbf{x}_k,\mathbf{x}_i}\cap\Pi_{\mathbf{x}_i,\mathbf{x}_{new}}$.
%Hence the plane $\Pi_{\mathbf{x}_i,\mathbf{x}_{new}}$ splits both faces
%consistently into a gained part and a lost part.

We will refer to the convex hull of two convex polygons whose interiors do not intersect as a \emph{truncated prismatoid}.
 Unlike a classical prismatoid, the polygons need not be parallel. The definition
includes wedges, whose two bases share an edge, and pyramids, where one base is
a point. The volume of a truncated prismatoid can be efficiently computed using tetrahedral
decomposition (cf. Supplemental~\ref{PismaVolume}). 
%\begin{definition}[Truncated prismatoid]
%A \emph{truncated prismatoid} is a convex polyhedron $P$ defined by the convex hull of two convex polygons whose interior does not intersect. %whose vertices all lie
%%in the union of two planes $E_1, E_2$, with $P = \conv\big((P\cap E_1)\cup(P\cap E_2)\big)$.
%%The two \emph{bases} $P\cap E_1$ and $P\cap E_2$ are convex polygons and may
%%degenerate to a segment or a point.
%\end{definition}

\paragraph*{The flip-free case}
 
We first consider merges that change the Delaunay tetrahedralisation only
through the collapse of the edge $(i,j)$. Formally, let
$\hat{\mathcal{T}} = \mathcal{T}/(i,j)$ be the contraction of $\mathcal{T}$
along $(i,j)$: tetrahedra containing both $i$ and $j$ are removed, and $i$ and
$j$ are identified with $new$ in all remaining tetrahedra.
 \begin{theorem}[Exactness without flips]
\label{thm:noflip}
Let $k\in\mathcal{N}(i)\setminus\mathcal{N}(j)$ and $\hat{\mathcal{T}} = \mathcal{T}/(i,j)$.
Assume $\mathcal{S}\cup\{\mathbf{x}_{new}\}$ is in general position, and $V_i$ and
$\hat V_{new}$ are bounded. Then
\begin{enumerate}
  \item $\Pi_{\mathbf{x}_k,\mathbf{x}_{new}}\cap \textstyle\bigcap \mathcal{H}_{k,i} = \hat{\mathcal{F}}_{k,new}$;
  \item if $\mathcal{F}_{new} \cap \mathcal{F}_{orig} = \emptyset$, then
        $G_k = \conv\!\big(\mathcal{F}_{orig}\cup\mathcal{F}_{new}\big)$ and
        $L_k = \emptyset$ (or vice versa);
  \item otherwise, $G_k = \conv\!\big(\mathcal{F}^+_{orig}\cup\mathcal{F}^+_{new}\big)$ and
        $L_k = \conv\!\big(\mathcal{F}^-_{orig}\cup\mathcal{F}^-_{new}\big)$.
        Both are truncated prismatoids with positive volume.
\end{enumerate}
\end{theorem}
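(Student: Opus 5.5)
The plan is to work entirely with the tunnel $C := \bigcap \mathcal{H}_{k,i}$ and to use the flip-free hypothesis $\hat{\mathcal{T}} = \mathcal{T}/(i,j)$ to identify neighbourhoods combinatorially. The first step is purely combinatorial. Since $k \notin \mathcal{N}(j)$, no tetrahedron of $\mathcal{T}$ contains both $k$ and $j$. Hence every tetrahedron of $\hat{\mathcal{T}}$ containing $k$ and $new$ comes from a tetrahedron of $\mathcal{T}$ containing $k$ and $i$, with $i$ renamed to $new$, and none of these is deleted by the contraction. So the link of the edge $(k,new)$ in $\hat{\mathcal{T}}$ equals the link of $(k,i)$ in $\mathcal{T}$; in particular $j$ never appears in it. Likewise $\hat{\mathcal{N}}(k) = (\mathcal{N}(k)\setminus\{i\})\cup\{new\}$ and $\hat{\mathcal{N}}(k)\cap\hat{\mathcal{N}}(new) = \mathcal{N}(k)\cap\mathcal{N}(i)$. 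Applying the face formula (Equation~\ref{Eq:Face}) in the reduced diagram to the pair $(k,new)$ then gives item 1 directly. The same argument gives $\hat V_k \subseteq C$, because every wall of $C$ is still a constraint of $\hat V_k$.

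Next I set up the sign split. Let $\sigma(\mathbf{p}) := \|\mathbf{p}-\mathbf{x}_{new}\|^2 - \|\mathbf{p}-\mathbf{x}_i\|^2$, which is affine with zero set $\Pi_{\mathbf{x}_i,\mathbf{x}_{new}}$. On both faces, the locus where the two faces could meet is the line $\Pi_{\mathbf{x}_k,\mathbf{x}_i}\cap\Pi_{\mathbf{x}_k,\mathbf{x}_{new}}$, and $\sigma$ vanishes there. Define $\mathcal{F}^{\pm} := \mathcal{F}\cap\{\pm\sigma\ge 0\}$. For $\mathbf{p}\in G_k$ we have $\|\mathbf{p}-\mathbf{x}_i\|\le\|\mathbf{p}-\mathbf{x}_k\|\le\|\mathbf{p}-\mathbf{x}_{new}\|$, so $G_k\subseteq\{\sigma\ge0\}$; symmetrically $L_k\subseteq\{\sigma\le0\}$.

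For the inclusion $\conv(\mathcal{F}^+_{orig}\cup\mathcal{F}^+_{new})\subseteq G_k$, I use convexity of $G_k$ and check that both bases lie in $G_k$.
\begin{itemize}
\item A point of $\mathcal{F}^+_{orig}$ lies in $V_i\cap V_k$. It is no closer to $new$ than to $i$, hence no closer to $new$ than to $k$, so it lies in $\hat V_k$.
\item A point of $\mathcal{F}^+_{new}$ lies in $\hat V_k$ by item 1. It is closer to $i$ than to $new$. Because $\mathbf{x}_{new}$ is the midpoint, $\|\mathbf{p}-\mathbf{x}_{new}\|^2 = \tfrac12(\|\mathbf{p}-\mathbf{x}_i\|^2+\|\mathbf{p}-\mathbf{x}_j\|^2) - \tfrac14\|\mathbf{x}_i-\mathbf{x}_j\|^2$, so it is also closer to $i$ than to $j$. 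Every other site is no closer than $k$, which is farther than $i$. Hence the point lies in $V_i$.
\end{itemize}
The analogous argument with roles swapped handles $L_k$. In the disjoint case (item 2), $\sigma$ has constant sign on both faces and one of $G_k, L_k$ is empty.

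For the reverse inclusion, write $R := C\cap\{\mathbf{p} : \|\mathbf{p}-\mathbf{x}_i\|\le\|\mathbf{p}-\mathbf{x}_k\|\le\|\mathbf{p}-\mathbf{x}_{new}\|\}$, so that $G_k\subseteq R$. It remains to show $R = \conv(\mathcal{F}^+_{orig}\cup\mathcal{F}^+_{new})$, i.e.\ that every vertex of $R$ lies on one of the two bisector planes. This is the step I expect to be the main obstacle. A bad vertex would be an intersection of three tunnel walls $\Pi_{\mathbf{x}_k,\mathbf{x}_l}$, $\Pi_{\mathbf{x}_k,\mathbf{x}_m}$, $\Pi_{\mathbf{x}_k,\mathbf{x}_{m'}}$ strictly inside the slab. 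The plan is to argue along wall edges. Each edge of $R$ not contained in a bisector plane lies on a line $\Pi_{\mathbf{x}_k,\mathbf{x}_l}\cap\Pi_{\mathbf{x}_k,\mathbf{x}_m}$, which is the Voronoi edge dual to a Delaunay triangle $klm$. Its endpoint on $\Pi_{\mathbf{x}_k,\mathbf{x}_i}$ is the circumcentre of the tetrahedron $klmi\in\mathcal{T}$, and its endpoint on $\Pi_{\mathbf{x}_k,\mathbf{x}_{new}}$ is the circumcentre of $klm\,new\in\hat{\mathcal{T}}$. A third wall crossing this segment in between would produce an empty-sphere tetrahedron $klmm'$ present in one triangulation but not the other, contradicting $\hat{\mathcal{T}}=\mathcal{T}/(i,j)$; general position rules out ties. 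Boundedness of $V_i$ and $\hat V_{new}$ guarantees that all these Voronoi edges are finite, so $R$ is a bounded polytope.

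Positive volume in item 3 then follows because, in general position, the two bases are non-degenerate polygons lying in distinct planes on the same side of their common line. I would double-check the circumcentre-ordering argument carefully, since the rest is routine.
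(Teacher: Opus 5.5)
Your forward inclusion $\operatorname{conv}(\mathcal{F}^+_{orig}\cup\mathcal{F}^+_{new})\subseteq G_k$ is sound. The reverse inclusion, the step you flagged, does not go through as set up.

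The region $R$ is cut out only by the walls of the common neighbours $\mathcal{N}(k)\cap\mathcal{N}(i)$. So a putative bad vertex $\mathbf{v}$ on walls $l,m,m'$ is not known to lie in $V_k$, $\hat V_k$ or $V_i$: a site outside the tunnel, e.g.\ some $q\in\mathcal{N}(k)\setminus(\mathcal{N}(i)\cup\{i\})$, may be strictly closer to $\mathbf{v}$ than $\mathbf{x}_k$. The sphere centred at $\mathbf{v}$ through $\mathbf{x}_k,\mathbf{x}_l,\mathbf{x}_m,\mathbf{x}_{m'}$ is then empty in neither triangulation, and no contradiction with $\hat{\mathcal{T}}=\mathcal{T}/(i,j)$ arises. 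The remaining ingredients are assumed rather than derived:
\begin{itemize}
\item Two common neighbours $l,m$ need not span a Delaunay triangle with $k$.
\item The claim that every wall edge of $R$ runs from the circumcentre of $klmi\in\mathcal{T}$ to that of $klm\,new\in\hat{\mathcal{T}}$ is exactly the no-bad-vertex statement you are proving. An edge ending at a bad vertex does not join the two planes.
\item Boundedness of $R$ does not follow from boundedness of $V_i$ and $\hat V_{new}$, since $R\subseteq V_i$ is not known.
\end{itemize}

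The missing idea is to run the vertex argument on $G_k$ itself, described by \emph{all} walls of $k$ rather than only the tunnel walls. This is the paper's route. Set $W:=\bigcap_{l\in\mathcal{N}(k)\setminus\{i\}}\Pi^+_{\mathbf{x}_k,\mathbf{x}_l}$. Your identity $\hat{\mathcal{N}}(k)=(\mathcal{N}(k)\setminus\{i\})\cup\{new\}$ gives $V_k=W\cap\Pi^+_{\mathbf{x}_k,\mathbf{x}_i}$ and $\hat V_k=W\cap\Pi^+_{\mathbf{x}_k,\mathbf{x}_{new}}$. Together with your midpoint identity, this gives $G_k=W\cap\Pi^+_{\mathbf{x}_i,\mathbf{x}_k}\cap\Pi^+_{\mathbf{x}_k,\mathbf{x}_{new}}$, which is bounded because $G_k\subseteq V_i$.

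Now take a vertex of $G_k$ on neither bisector plane. It lies on three walls $l_1,l_2,l_3\in\mathcal{N}(k)\setminus\{i\}$ and in $\hat V_k$, so $\tau=\{k,l_1,l_2,l_3\}\in\hat{\mathcal{T}}$. It also lies strictly on $\mathbf{x}_i$'s side of $\Pi_{\mathbf{x}_k,\mathbf{x}_i}$, so $\mathbf{x}_i$ is strictly inside the circumsphere and $\tau\notin\mathcal{T}$. Since $\tau$ avoids $i,j,new$, this contradicts $\hat{\mathcal{T}}=\mathcal{T}/(i,j)$. For $L_k$, argue the same way with $\mathbf{v}\in V_k$ and $\mathbf{x}_{new}$ in place of $\mathbf{x}_i$.

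Finally, $W\cap\Pi_{\mathbf{x}_k,\mathbf{x}_i}=\mathcal{F}_{orig}$ and $W\cap\Pi_{\mathbf{x}_k,\mathbf{x}_{new}}=\mathcal{F}_{new}$. Hence the two plane sections of $G_k$ are $\mathcal{F}^+_{orig}$ and $\mathcal{F}^+_{new}$, which yields both inclusions at once.

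Three smaller repairs are also needed.
\begin{itemize}
\item The identity $\hat{\mathcal{N}}(k)\cap\hat{\mathcal{N}}(new)=\mathcal{N}(k)\cap\mathcal{N}(i)$ is false in general: any $l\in(\mathcal{N}(k)\cap\mathcal{N}(j))\setminus(\mathcal{N}(i)\cup\{i\})$ becomes a common neighbour. Item 1 should instead come from your link equality, since the edges of $\hat{\mathcal{F}}_{k,new}$ are dual to link triangles.
\item ``Roles swapped'' does not give $\mathcal{F}^-_{new}\subseteq V_k$. The midpoint identity provides no bound $\|\mathbf{p}-\mathbf{x}_k\|\le\|\mathbf{p}-\mathbf{x}_j\|$; that bound needs $V_k=W\cap\Pi^+_{\mathbf{x}_k,\mathbf{x}_i}$, i.e.\ $j\notin\mathcal{N}(k)$.
\item In item 2 you must show that $\sigma$ has the same sign on both faces. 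If the signs differed, $\mathcal{F}^+_{new}$ would be empty and $G_k$ would reduce to the flat set $\mathcal{F}_{orig}$. The paper rules this out with a half-ball contained in $G_k$ around a relative-interior point of $\mathcal{F}_{orig}$.
\end{itemize}
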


The full proof is included in Supplemental \ref{app:proofs}.
In the flip-free case, the vertices of $\mathcal{F}_{orig}$ and
$\mathcal{F}_{new}$ have a one-to-one correspondence along the Voronoi edges
$\Pi_{\mathbf{x}_k,\mathbf{x}_{l}}\cap\Pi_{\mathbf{x}_k,\mathbf{x}_{l'}}$.
%All lateral faces are therefore planar quadrilaterals, and pairing vertices by
%index in Algorithm~\ref{alg:prism} is exact. In the double wedge, the two
%crossing points are shared by both bases.
 
%The sign test replaces the polygon intersection test
%$\mathcal{F}_{M}\cap\mathcal{F}_{orig}=\emptyset$ of the original cost
%algorithm. We evaluate the affine function $\sigma_i$ at the vertices of both
%faces and clip both faces with the single plane
%$\Pi_{\mathbf{x}_i,\mathbf{x}_{new}}$.
 \begin{figure}[h]
    \centering
    \begin{subfigure}[t]{0.48\linewidth}
        \centering
        \includegraphics[width=\linewidth]{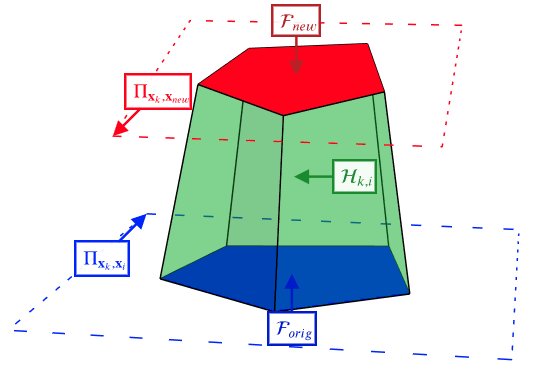}
        \caption{The change in volume of a neighbouring cell is given as the truncated prismatoid formed by the convex hull of the intersection of $\Pi_{\mathbf{x}_k,\mathbf{x}_i}$ and $\Pi_{\mathbf{x}_k,\mathbf{x}_{new}}$ with the neighbour tunnel.}
        \label{fig:Prismatoid_Tunnel}
    \end{subfigure}
    \hfill
    \begin{subfigure}[t]{0.48\linewidth}
        \centering
        \includegraphics[width=\linewidth]{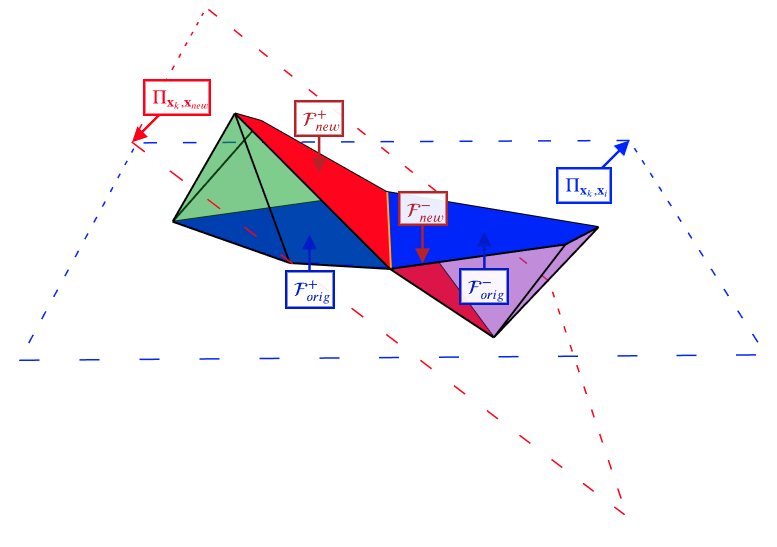}
        \caption{If the faces before and after the merge intersect, some volume $G_k$ (green) is gained and some $L_k$ (purple) is lost from cell $V_k$.}
        \label{fig:double_wedge}
    \end{subfigure}
    \caption{Volume changes of neighbouring cells during a merge.}
    \label{fig:volume_change}
\end{figure}

\subsubsection*{Merges with flips}
 
In general, the merge may also change Delaunay topology around $\mathbf{x}_k$ through bistellar flips. The polytope $G_k$ is then bounded by
$\Pi_{\mathbf{x}_k,\mathbf{x}_i}$, possibly by $\Pi_{\mathbf{x}_k,\mathbf{x}_{new}}$
and additional half-spaces not contained in $\mathcal{H}_{k,i}$. We distinguish
three situations, which may occur simultaneously. Our algorithm treats all of
them uniformly: it always evaluates Equations~\ref{eq:old_face},\ref{eq:new_face} and falls
back to a separate approximation only in Case~3. To characterize these cases in isolation, we will assume $L_k=\emptyset$ w.l.o.g. (i.e. the original and new face do not cross). 
 
\paragraph*{Case 1: Vertex loss.}
In this case every wall bounding $G_k$ belongs to $\mathcal{H}_{k,i}$, but one of the walls $\Pi_{\mathbf{x}_k,\mathbf{x}_l}$ with $l \in \mathcal{N}(k) \cap \mathcal{N}(i)$ will be ``cut off'' by two adjacent walls before intersecting $\Pi_{\mathbf{x}_k,\mathbf{x}_{new}}$. The
lateral face on $\Pi_{\mathbf{x}_k,\mathbf{x}_l}$ then ``pinches'' out at a vertex
$\mathbf{v}$ strictly between the two planes, and $\mathcal{F}_{new}$ loses one vertex. 
The face $\mathcal{F}_{new}$ can still be exactly computed by Equation~\ref{eq:new_face}, but the
additional vertex $\mathbf{v}$ turns $G_k$ into a general convex polytope.
 
In our implementation, we approximate $G_k$ by the truncated prismatoid
$\conv(\mathcal{F}_{orig}\cup\mathcal{F}_{new})$, underestimating the true volume. The missing volume
lies in small caps around the pinch vertices. Figure \ref{fig:pinch_vertex} visualizes this case.
%Importantly, vertex loss does not
%change which cells are neighbours:
 
%\begin{prop}[Vertex loss preserves neighbourhoods]
%\label{prop:vertex_loss}
%Assume every wall bounding $G_k$ belongs to $\mathcal{H}_{k,i}$, and let $l$ be
%the site of a pinched lateral face with adjacent walls $l_1, l_2$. Then:
%\begin{enumerate}
%  \item every $l'$ whose wall supports an edge of
%        $\mathcal{F}^+_{orig}$ remains a neighbour of $k$; in particular,
%        $l\in\hat{\mathcal{N}}(k)$;
%  \item every $l'\in\mathcal{E}(\hat{\mathcal{F}}_{k,new})$ is a neighbour of $new$; in
%        particular, $l_1,l_2\in\hat{\mathcal{N}}(new)$;
%  \item the triangle $\{k,new,l\}$ is not in $\hat{\mathcal{T}}$, while the
%        tetrahedron $\{k,l_1,l,l_2\}$ is.
%\end{enumerate}
%\end{prop}
%
% 
%In the simplest instance, this is a $2\!\to\!3$ bistellar flip. The two
%tetrahedra $\{k,new,l_1,l\}$ and $\{k,new,l,l_2\}$ of $\mathcal{T}/(i,j)$ are
%replaced by $\{k,new,l_1,l_2\}$, $\{new,l,l_1,l_2\}$ and $\{k,l_1,l,l_2\}$. This
%creates the edge $(l_1,l_2)$ but removes no edge incident to $k$ or $new$.
 In dual space, $\mathcal{F}_{new}$ corresponds to an edge and each of its vertices is dual to a tetrahedron. Thus, $|\mathcal{F}_{new}| < |\mathcal{F}_{orig}|$ indicates that one of the tetrahedra has flipped, for instance by creating a new edge between two neighbours, while $\mathcal{N}(k)$ remains unaffected.
Such a flip has no counterpart in 2D, where Voronoi faces are segments with
exactly two vertices, and by planarity every flip adds or removes an edge
incident to $k$.
 
%Proposition~\ref{prop:vertex_loss} concerns the gained side. On the lost side,
%a vertex loss of $\mathcal{F}^-_{new}$ means that $\hat V_{new}$ swallows a
%Voronoi vertex of $V_k$. This can remove a neighbour of $k$ entirely, since the
%region is given exactly by $L_k = V_k\cap\Pi^+_{\mathbf{x}_{new},\mathbf{x}_k}$.

%\figidea{Pinching in 3D: the tunnel between $\mathcal{F}_{orig}$ (blue) and
%$\mathcal{F}_{new}$ (red). One lateral face is a triangle ending at the pinch
%vertex $\mathbf{v}$. Include an inset with the corresponding $2\!\to\!3$ flip,
%showing the tetrahedra before and after and the new edge $(l_1,l_2)$.}
 
\paragraph*{Case 2: New walls.}
As $V_k$ expands into the volume $G_k$, it might gain a new neighbour cell $V_l$ with $l\in\mathcal{N}(i)$. Geometrically, this means that $\Pi_{\mathbf{x}_k, \mathbf{x}_l}$ intersects with  $\bigcap\mathcal{H}_{k, i}$ between $\mathcal{F}_{orig}$ and $\mathcal{F}_{new}$.
In dual space, the new connection $(k,l) \in \mathcal{T}_\mathcal{E}$ is a result of a bi-stellar flip. 
Because $l\in\mathcal{N}(i)$ must hold, it would be possible to identify these cases through a search over $\mathcal{N}(i)$.
However, computing its
volume means clipping a general convex polytope, with a data-dependent number
of active planes and vertices. This causes divergent control flow and memory
access on the GPU.
 
As an engineering choice, we therefore ignore the walls of new neighbours
$l\in\mathcal{N}(i)\setminus\mathcal{N}(k)$. We compute $\mathcal{F}_{new}$
from the tunnel alone, as if the neighbourhood of $k$ were unchanged. The approximated $\mathcal{F}_{new}$ is then not clipped by $\Pi_{\mathbf{x}_k,\mathbf{x}_l}$, thus
overestimating the gained volume. Figure \ref{fig:new_neighbour} pictures an example of this case in 3D and marker \ding{194} in Figure \ref{fig:voronoi_volume_costs} marks the 2D equivalent case. 
Note that cases~1 and~2 can occur for the same
cell, with opposite biases.
%Note that pruning the tunnel to $\mathcal{E}(\mathcal{F}_{k,i})$, i.e.\ to
%the walls that are active on the old face, is not lossless. Walls of common
%neighbours that are redundant on $\mathcal{F}_{orig}$ can become active between
%the faces, and pruning would move them into Case~2.

%\figidea{Case 2: the same tunnel as in the Case 1 figure, with an additional
%wall $\Pi_{\mathbf{x}_k,\mathbf{x}_l}$, $l\in\mathcal{N}(i)\setminus\mathcal{N}(k)$,
%cutting a corner off the region between the faces. Shade the overestimated
%volume.}

\begin{figure*}[htbp]
    \centering
    \begin{subfigure}[t]{0.32\textwidth}
             \centering
     \includegraphics[width=0.45\linewidth]{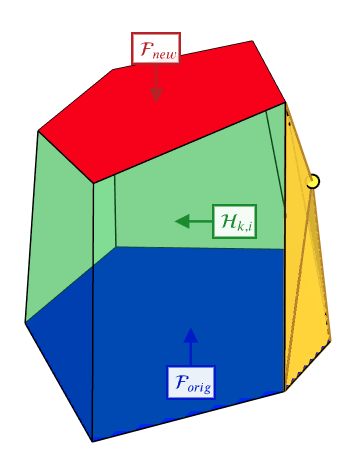}
     \caption{\textbf{Case 1.} When $|\mathcal{F}_{orig}|>|\mathcal{F}_{new}|$, a vertex occurs instead between the two faces, creating a pinch. Approximating the volume as a truncated prismatoid underestimates the volume by a tetrahedron spanned by the pinch vertex $\mathbf{v}$ (yellow).}
     \label{fig:pinch_vertex}
    \end{subfigure}
    \hfill
    \begin{subfigure}[t]{0.32\textwidth}
        \centering
     \includegraphics[width=0.8\linewidth]{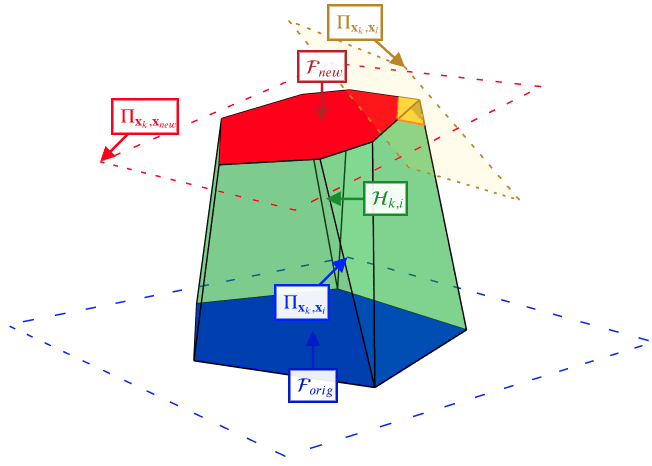}
     \caption{\textbf{Case 2.} When $\mathcal{N}(k)$ contains a new cell $l$ after the merge, $\mathcal{F}_{new}$ is bounded by a new half-space $\Pi_{\mathbf{x}_k, \mathbf{x}_l}$ and $(l,k) \in \mathcal{T}_\mathcal{E}$. Approximating the volume as a truncated prismatoid overestimates by the yellow volume. }
     \label{fig:new_neighbour}
    \end{subfigure}
    \hfill
    \begin{subfigure}[t]{0.32\textwidth}
        \centering
     \includegraphics[width=0.65\linewidth]{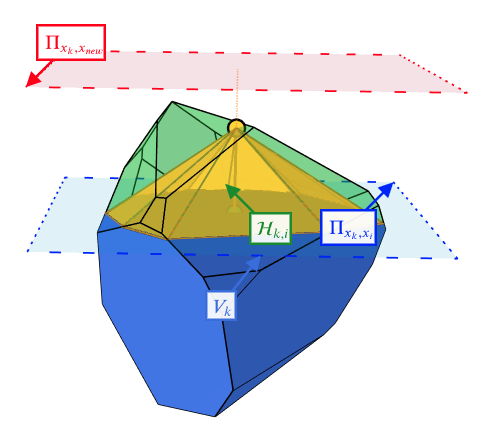}
     \caption{\textbf{Case 3.} When $\mathcal{F}_{new} = \emptyset$, the edge $(i, k) \in \mathcal{T}_\mathcal{E}$ has flipped. We approximate the intersection volume (green) as a pyramid (yellow) with base $\mathcal{F}_{orig}$. }
     \label{fig:pyramid}
    \end{subfigure}
    \caption{Delaunay flips introduce special cases into our approximation.}
    \label{fig:main_figure}
\end{figure*}
 
\paragraph*{Case 3: Tunnel closure.}
In this case the tunnel closes before reaching
$\Pi_{\mathbf{x}_k,\mathbf{x}_{new}}$, meaning $\mathcal{F}_{new}=\emptyset$, because $k$ and $new$
are no longer Delaunay neighbours, i.e. $k$ does not inherit its edge to $i$.
%\begin{prop}[Tunnel closure]
%\label{prop:closure}
%\begin{enumerate}
%  \item $\hat{\mathcal{F}}^+_{k,new}\subseteq\mathcal{F}^+_{new}$. Hence, if
%        the tunnel closes, $G_k$ shares no facet with $\hat V_{new}$.
%  \item If $(k,new)\notin\hat{\mathcal{T}}_\mathcal{E}$, then $L_k=\emptyset$,
%        and only the gained volume contributes to the cost.
%\end{enumerate}
%\end{prop}
Thus, $G_k$ is given by the general convex polytope $\Pi^+_{\mathbf{x}_i, \mathbf{x}_k} \bigcap \mathcal{H}_{k,i}$ and we do not gain any geometric information from $\mathcal{F}_{new}$.
In this case, we approximate the closed region by a pyramid with base $\mathcal{F}_{orig}$. Let
$\mathbf{c}$ be the vertex centroid of $\mathcal{F}_{orig}$, and let
$\mathbf{n} = (\mathbf{x}_i-\mathbf{x}_k)/\|\mathbf{x}_i-\mathbf{x}_k\|$ be its
normal pointing into $V_i$. The apex is placed at $\mathbf{c}+h\,\mathbf{n}$,
where $h$ is the distance at which the ray $(\mathbf{c}, \mathbf{n})$ first leaves the tunnel:
\begin{equation}
  h = \min_{\substack{l\in\mathcal{N}(k)\cap\mathcal{N}(i)\\ \mathbf{n}\cdot(\mathbf{x}_l-\mathbf{x}_k)>0}}
      \frac{\big(\tfrac12(\mathbf{x}_k+\mathbf{x}_l)-\mathbf{c}\big)\cdot(\mathbf{x}_l-\mathbf{x}_k)}
           {\mathbf{n}\cdot(\mathbf{x}_l-\mathbf{x}_k)} .
\end{equation}
The volume of pyramids can be computed extremely efficiently as
$\Vol = \tfrac13\,\operatorname{Area}(\mathcal{F}_{orig})\,h$ without requiring triangulation of the lateral surface. The pyramid may over- or
underestimate the true intersection volume, depending if $G_k$ is bounded by bisectors with non-neighboring points like in case 2.
An example of the pyramid case in 3D is illustrated in Figure \ref{fig:pyramid} and marker \ding{195} in Figure \ref{fig:voronoi_volume_costs} marks the 2D equivalent case. Table~\ref{tab:cases} summarises the approximations for all cases and Supplemental~\ref*{app:Flip_Frequency} describes empirical frequencies for Delaunay flips during merges.
 
%The test is conservative. By Proposition~\ref{prop:closure}, closure of the
%tunnel implies closure of the true region on the gained side. The converse does
%not hold: if the true region closes only because of a wall outside
%$\mathcal{H}_{k,i}$, the tunnel face is non-empty and the neighbour is handled
%as in Cases~1 and~2.

%\figidea{Case 3: the tunnel converging to a tip before reaching
%$\Pi_{\mathbf{x}_k,\mathbf{x}_{new}}$, with the approximating pyramid (red)
%overlaid on the true convex cell (blue). This extends Figure~9a with the ray
%from $\mathbf{c}$ along $\mathbf{n}$ and the wall it hits first.}
 
\subsubsection*{Neighbours of both merged cells}
All previously discussed techniques assume that $k \not \in \mathcal{N}(i) \cap \mathcal{N}(j)$. These cases can however be effectively handled by splitting the cell along $\Pi_{\mathbf{x}_i, \mathbf{x}_j}$, creating two disjoint convex cells $k_1 \in \mathcal{N}(i)$ and $k_2 \in \mathcal{N}(j)$. Then, the volumes $\hat{V}_{k_1} \cap V_i$ and $\hat{V}_{k_2} \cap V_j$ can be straightforwardly approximated using above procedure without special handling.

%If $k\in\mathcal{N}(i)\cap\mathcal{N}(j)$, we split the gained region along
%$\Pi_{\mathbf{x}_i,\mathbf{x}_j}$ into $\hat V_k\cap V_i$ and
%$\hat V_k\cap V_j$. We treat each part with the construction above, applied to
%the pairs $(k,i)$ and $(k,j)$ respectively. The wall
%$\Pi^+_{\mathbf{x}_k,\mathbf{x}_j}$ of the site $j$, which no longer exists
%after the merge, is replaced by $\Pi^+_{\mathbf{x}_i,\mathbf{x}_j}$ in
%$\mathcal{H}_{k,i}$, and analogously for $\mathcal{H}_{k,j}$. By
%Lemma~\ref{lem:absorb}, the two parts are separated by the interior of
%$\hat V_{new}$, so no volume is counted twice. The lost region $L_k$ is split
%along the same plane, and both parts are weighted with
%$(\hat{\mathbf{y}}_{new}-\mathbf{y}_k)^2$.

\begin{table}[t]
\centering
\caption{Handling of neighbour intersection volumes for
$k\in\mathcal{N}(i)\setminus\mathcal{N}(j)$.}
\label{tab:cases}
\small
\begin{tabular}{@{}llll@{}}
\toprule
Configuration & Computed as & Detected & Error \\
\midrule
No flip & prismatoid(s) & -- & exact \\
Case 1: vertex loss & prismatoid%, pinch vertices dropped 
& yes & underestimate \\
Case 2: new walls & prismatoid%, new walls ignored 
& no & overestimate \\
Case 3: tunnel closure & pyramid over $\mathcal{F}_{orig}$ & yes & either \\
\bottomrule
\end{tabular}
\end{table}

\begin{algorithm}[t]
\caption{Compute Cost}
\label{alg:cost_revised}
\begin{algorithmic}[1]
\Function{ComputeCost}{$(i,j)\in\mathcal{T}_\mathcal{E}$}
  \State $\mathbf{x}_{new}\gets\frac12(\mathbf{x}_i+\mathbf{x}_j)$
  \State $cost\gets 0$
  \For{$m\in\{i,j\}$}
    \State $o\gets$ $\{i,j\}\setminus\{m\}$
    \State $\hat V\gets\Vol(V_m)$ \Comment{Volume inherited by $\hat V_{new}$}
    \For{$k\in\mathcal{N}(m)\setminus\{i,j\}$}
      \State $\mathcal{H}\gets$ \Call{Tunnel}{$k,m$} \Comment{Walls of $\mathcal{N}(k)\cap\mathcal{N}(m)$}
      \If{$k\in\mathcal{N}(o)$} \Comment{Split $V_k$ along $\Pi_{\mathbf{x}_i,\mathbf{x}_j}$}
        \State $\mathcal{H}\gets\big(\mathcal{H}\setminus\{\Pi^+_{\mathbf{x}_k,\mathbf{x}_o}\}\big)\cup\{\Pi^+_{\mathbf{x}_m,\mathbf{x}_o}\}$
      \EndIf
      \State $\mathcal{F}_{orig}\gets\Pi_{\mathbf{x}_k,\mathbf{x}_m}\cap\mathcal{H}$
      \State $\mathcal{F}_{new}\gets\Pi_{\mathbf{x}_k,\mathbf{x}_{new}}\cap\mathcal{H}$
      \State $\mathcal{F}^\pm_{orig},\mathcal{F}^\pm_{new}\gets$ split by $\Pi_{\mathbf{x}_m,\mathbf{x}_{new}}$
      \If{$\mathcal{F}^+_{new} \cup\mathcal{F}^-_{new}=\emptyset$} \Comment{Case 3}
        \State $V^+\gets$ \Call{PyramidVolume}{$\mathcal{F}^+_{orig},\mathcal{H}$}
      \Else
        \State $V^+\gets$ \Call{PrismVolume}{$\mathcal{F}^+_{orig},\mathcal{F}^+_{new}$}
      \EndIf
      \State $V^-\gets$ \Call{PrismVolume}{$\mathcal{F}^-_{orig},\mathcal{F}^-_{new}$}
       \State $\hat{\mathbf{y}}_{new}\gets$ \Call{InterpolateValue}{$\mathbf{y}_i,\mathbf{y}_j$}
      \State $cost\gets cost+V^+(\mathbf{y}_m-\mathbf{y}_k)^2+V^-(\hat{\mathbf{y}}_{new}-\mathbf{y}_k)^2$
      \State $\hat V\gets\hat V-V^+$ \Comment{Proposition~\ref{prop:NoNeedForNewCellVolume}}
    \EndFor
    \State $cost\gets cost+\hat V\,(\hat{\mathbf{y}}_{new}-\mathbf{y}_m)^2$
  \EndFor
  \State \Return $cost$
\EndFunction
\end{algorithmic}
\end{algorithm}

%\paragraph*{Parallelizability}
%By reducing the intersection volumes from arbitrary convex cells to the volume between two polygons of degree N (if the new face $\mathcal{F}_{new}$ has fewer vertices, it is padded with duplicates, which form no-volume tetrahedra), the volumes of all .  

\subsection{Merged Cell Intersections}
The approximation as the volume between two polygons works well for the intersections with neighbours, which rarely deviate from that shape, but fails to accurately capture the volume of the merged cell $\hat{V}_{new}$. Instead of performing the expensive exact computation, we can reframe the problem to completely avoid direct volume computation.    

\begin{prop}[Rest Volumes]
Let $\mathcal{S}$ be a set of samples, $(i,j) \in \mathcal{T}_\mathcal{E} (\mathcal{S})$ and $\hat{\mathcal{S}} = \left(\mathcal{S} \setminus \{(\mathbf{x}_i, \mathbf{y}_i), (\mathbf{x}_j, \mathbf{y}_j)\} \right) \cup (\mathbf{x}_{new}, \mathbf{y}_{new})$. Then,

\begin{equation} Vol(V_i) - \sum_{k \in \mathcal{N}(i) \setminus j} \text{Vol}(V_i \cap \hat{V}_k)  = \text{Vol}(\hat{V}_{new} \cap V_i)\end{equation} and analogously for $j$.    
\label{prop:NoNeedForNewCellVolume}
\end{prop}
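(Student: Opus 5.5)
The plan is to decompose $V_i$ according to the new Voronoi diagram of $\hat{\mathcal{S}}$ and then show that only the cells $\hat V_k$ with $k\in\mathcal{N}(i)\setminus j$, together with $\hat V_{new}$, can meet $V_i$ in positive volume. Since the cells $\{\hat V_l\}$ of $\hat{\mathcal{S}}$ tile $\mathbb{R}^3$ and overlap only in sets of measure zero (under general position of $\mathcal{S}\cup\{\mathbf{x}_{new}\}$), we have
\begin{equation*}
\text{Vol}(V_i) = \text{Vol}(V_i\cap\hat V_{new}) + \sum_{l\in\hat{\mathcal{S}}\setminus\{new\}} \text{Vol}(V_i\cap\hat V_l).
\end{equation*}
The proposition follows once we show that $\text{Vol}(V_i\cap\hat V_l)=0$ for every surviving site $l\notin\mathcal{N}(i)$. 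Note that $j$ itself no longer exists as a site, so the index set $\mathcal{N}(i)\setminus j$ is exactly the set of surviving neighbours.

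\textbf{Key step: non-neighbours cannot gain volume from $V_i$.} Take $\mathbf{p}\in V_i\cap\hat V_l$ with $l\neq new$. Then $\|\mathbf{p}-\mathbf{x}_l\|\le\|\mathbf{p}-\mathbf{x}_m\|$ for all surviving sites $m$. Because $\mathbf{p}\in V_i$, we also have $\|\mathbf{p}-\mathbf{x}_i\|\le\|\mathbf{p}-\mathbf{x}_m\|$ for every original site $m$.

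I would argue that $\mathbf{p}$ lies in $V_l$ as well. To do so, I need $\|\mathbf{p}-\mathbf{x}_l\|\le\|\mathbf{p}-\mathbf{x}_i\|$ and $\le\|\mathbf{p}-\mathbf{x}_j\|$. The comparison to all other surviving sites is inherited from $\hat V_l$, and the comparison to $j$ then follows from the one to $i$, since $\mathbf{p}\in V_i$.

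This is where an honest gap appears: removing $i$ means $\mathbf{x}_l$ only needs to beat $\mathbf{x}_{new}$, not $\mathbf{x}_i$. So the argument must be organized differently. Consider the segment from $\mathbf{p}$ toward $\mathbf{x}_i$, or more robustly use convexity. The set $V_i\cap\hat V_l$ is convex, and the set $\{\mathbf{q}: \|\mathbf{q}-\mathbf{x}_l\|\le\|\mathbf{q}-\mathbf{x}_i\|\}\cap V_i = V_i\cap\Pi^+_{\mathbf{x}_l,\mathbf{x}_i}$ has positive volume only if the bisector plane $\Pi_{\mathbf{x}_i,\mathbf{x}_l}$ cuts the interior of $V_i$.

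Since $V_i\cap\hat V_l\subseteq V_i\cap\Pi^+_{\mathbf{x}_l,\mathbf{x}_i}$ up to the points that are closer to $\mathbf{x}_i$ than to $\mathbf{x}_l$, I would instead split by whether $\mathbf{p}$ is closer to $\mathbf{x}_l$ or to $\mathbf{x}_i$. If $\mathbf{p}$ is closer to $\mathbf{x}_i$ than to $\mathbf{x}_l$, I want to show $\mathbf{p}$ is not in the interior of $\hat V_l$, which seems false in general. This suggests the statement really relies on the paper's standing assumption that the term in Equation~\ref{eq6} vanishes, i.e.\ that the merge introduces no non-neighbour intersections.

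\textbf{Main obstacle.} The main obstacle is exactly this locality claim: that a cell $\hat V_l$ of a non-neighbour $l$ cannot reach into $V_i$. I would first try to prove it by contradiction with the Delaunay empty-sphere characterization. Suppose an interior point $\mathbf{p}\in V_i\cap\hat V_l$ exists. Moving $\mathbf{p}$ along the segment toward $\mathbf{x}_l$ stays inside $\hat V_l$ by star-shapedness of Voronoi cells about their site. This path must exit $V_i$ through some face $\mathcal{F}_{i,m}$. One then shows that the path must cross $V_m$ with $m\in\mathcal{N}(i)$, and chaining gives that $l$'s original cell borders $V_i$.

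If this chaining fails, I would state the proposition under the same assumption used for Equation~\ref{eq6}: all cells intersecting $V_i$ after the merge are either $\hat V_{new}$ or former neighbours of $i$. With that assumption, the partition identity immediately yields the claim, and the same argument applies verbatim for $j$.
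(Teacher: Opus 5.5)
Your decomposition matches the paper's. The cells of $\hat{\mathcal{S}}$ partition $V_i$ up to measure zero, so the identity is equivalent to the covering $V_i\subseteq\hat V_{new}\cup\bigcup_{k\in\mathcal{N}(i)\setminus j}\hat V_k$, which the paper asserts directly from the locality of the collapse.

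The gap is that you never prove this covering. You suggest it may be false and then propose to assume it, which assumes the whole content of the proposition. That fallback is also not the assumption behind Equation~\ref{eq6}. That term concerns $\hat V_{new}$ reaching into original cells $V_k$ of non-neighbours, which does occur because $\mathbf{x}_{new}$ is a new site. Here the question is whether a surviving non-neighbour's cell $\hat V_l$ can reach into $V_i$, and it cannot. What misled you is that the pointwise claim $\mathbf{p}\in V_l$ really is false: points in the interior of the gained region $G_l$ are counterexamples. You only need $V_i\cap V_l\neq\emptyset$, and a different point witnesses that.

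Your own segment idea gives this directly, without any chaining. Let $\mathbf{p}\in V_i\cap\hat V_l$ with $l\neq new$.
\begin{itemize}
\item Since $\hat V_l$ is convex, $[\mathbf{p},\mathbf{x}_l]\subseteq\hat V_l$.
\item Since $\mathbf{x}_l\notin V_i$, the segment leaves the convex set $V_i$ at a boundary point $\mathbf{q}\in V_i\cap\hat V_l$. At $\mathbf{q}$ we have $\|\mathbf{q}-\mathbf{x}_i\|=\|\mathbf{q}-\mathbf{x}_m\|$ for some $m\neq i$.
\item If $m=j$, then $\|\mathbf{q}-\mathbf{x}_{new}\|^2=\|\mathbf{q}-\mathbf{x}_i\|^2-\tfrac14\|\mathbf{x}_i-\mathbf{x}_j\|^2<\|\mathbf{q}-\mathbf{x}_i\|^2\le\|\mathbf{q}-\mathbf{x}_l\|^2$. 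This contradicts $\mathbf{q}\in\hat V_l$.
\item If $m\neq j$, then $m$ survives, so $\|\mathbf{q}-\mathbf{x}_l\|\le\|\mathbf{q}-\mathbf{x}_m\|=\|\mathbf{q}-\mathbf{x}_i\|\le\|\mathbf{q}-\mathbf{x}_l\|$. Membership in $\hat V_l$ handles all sites other than $i,j$, and $\|\mathbf{q}-\mathbf{x}_j\|\ge\|\mathbf{q}-\mathbf{x}_i\|=\|\mathbf{q}-\mathbf{x}_l\|$. Hence $\mathbf{q}\in V_l$, so $V_i\cap V_l\neq\emptyset$ and $l\in\mathcal{N}(i)$ by general position.
\end{itemize}

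An equivalent route: for $\mathbf{p}$ interior to $V_i$, either $j$ is its second-nearest site, and then the midpoint beats every surviving site. Or the second-nearest site is a Delaunay neighbour of $i$, by the standard shrinking-empty-ball argument.

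With this lemma your partition identity closes the proof unconditionally, and the same argument applies to $j$.
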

Compare Figure~\ref{fig:voronoi_volume_costs} for visual intuition on this fact.
Computing $\sum_{k \in \mathcal{N}(i)} \text{Vol}(V_i \cap \hat{V}_k)$ is already necessary to find $c(i,j)$ and an initial pre-pass can be used to find the volumes $V(i)$ for all cells. Details on efficient pre-computation for the cell volumes is included in Supplemental~\ref*{App:Volume}.
Then, using Proposition \ref{prop:NoNeedForNewCellVolume}, we can compute $\text{Vol}(\hat{V}_{new} \cap V_i)$ as $ Vol(V_i) - \sum_{k \in \mathcal{N}(i)} \text{Vol}(V_i \cap \hat{V}_k)$ (and analogously for $V_j$) to bypass computing the more complex intersection of the original and merged cells. The full cost computation is sketched in Algorithm~\ref{alg:cost_revised}.

\paragraph*{Parallelization}
Algorithm~\ref{alg:cost_revised} is ideally suited for GPU acceleration: Each neighbour cell $k \in \mathcal{N}(i) \cup \mathcal{N}(j)$ can be handled independently and identically: First, find the half-spaces defining $\mathcal{F}_{orig}$ in angular order. Then, compute the vertices of $\mathcal{F}_{orig}$ and $\mathcal{F}_{new}$ through triple-plane intersections.
The key to the efficiency is that $max(|\mathcal{F}_{orig}|)$ is small in practice, with $|\mathcal{F}_{orig}| \leq 11$ in all our tests, meaning we can pad all tunnels $\mathcal{H}$ in a batch to the maximal degree in order to remove divergence with only minimal extra computation. 
Finally, we sort all the resulting face pairs by degree and compute their volumes in batches. This ensures that the tetrahedralization and volume calculation is parallelized over face pairs of equal degree, eliminating all warp divergence.   
%This powerful pipeline leads to 10\times speed-ups in cost computation over exact cell intersection.
%\subsection{Computational Complexity}
%The total computational complexity of reducing a Voronoi diagram with $N$ sites is $
%\mathcal{O}(N)$ in expectation.

\section{Evaluation}
%To evaluate our algorithm, we use models trained on Radfoam for common novel-view synthesis datasets. They represent real-world scenes as a color and density function that is defined by hundreds of thousands or even millions of 3D Voronoi sites, making it an ideal test case for our method. There are also obvious practical benefits to reduce the Voronoi site count for Radfoam models while preserving accurate detail such as faster rendering and reduced storage and memory requirements.
In this section, we compare the performance of our algorithm against baseline methods. 
\begin{figure}[htbp]
  \centering
  % Subfigure 1
  \hfill
  \begin{subfigure}[b]{0.4\linewidth}
    \centering
    \includegraphics[width=\linewidth, angle=270]{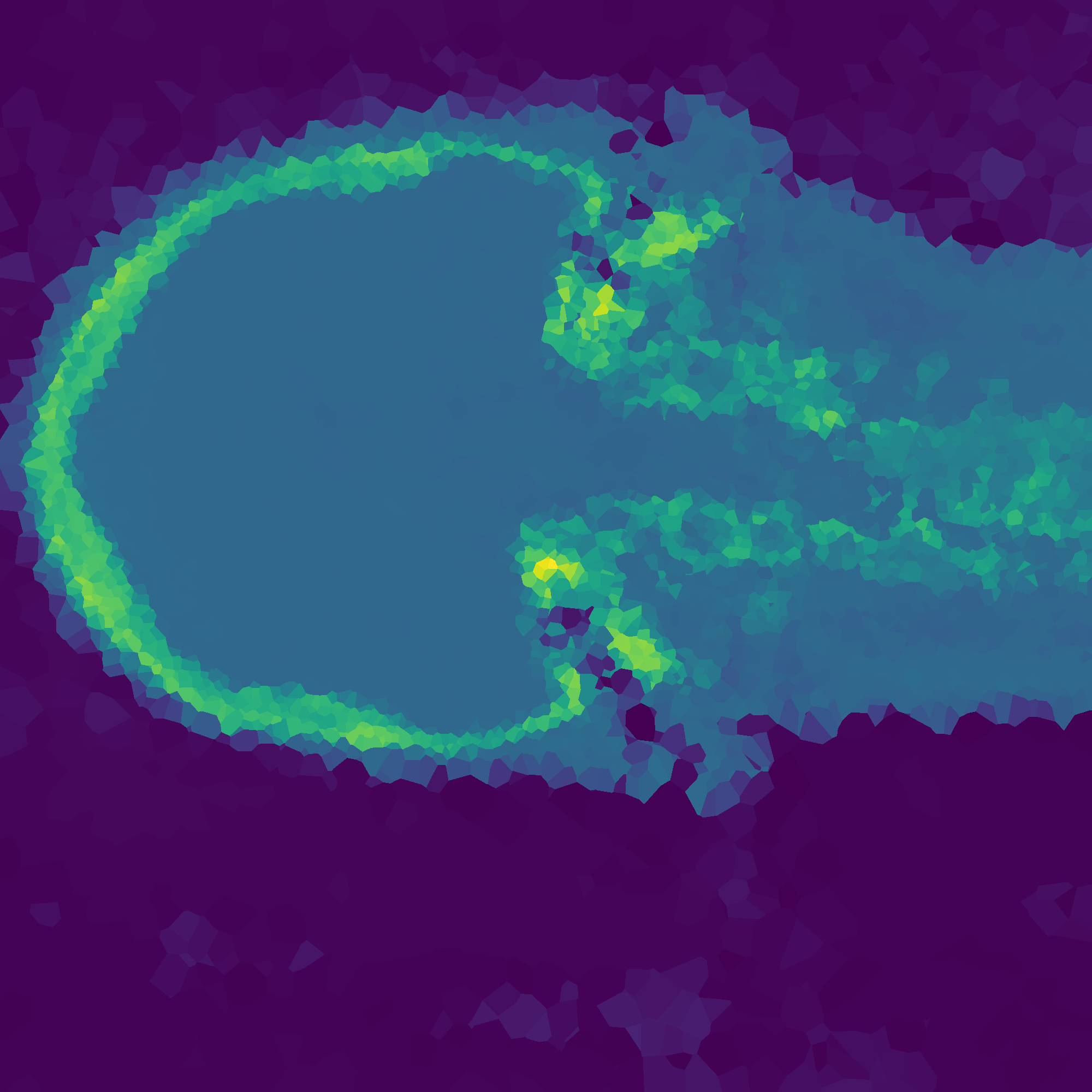}
    \caption{CTHead (uniform)}
    \label{fig:sub1}
  \end{subfigure}
  % Subfigure 2
  \hfill
  \begin{subfigure}[b]{0.4\linewidth}
    \centering
    \includegraphics[width=\linewidth]{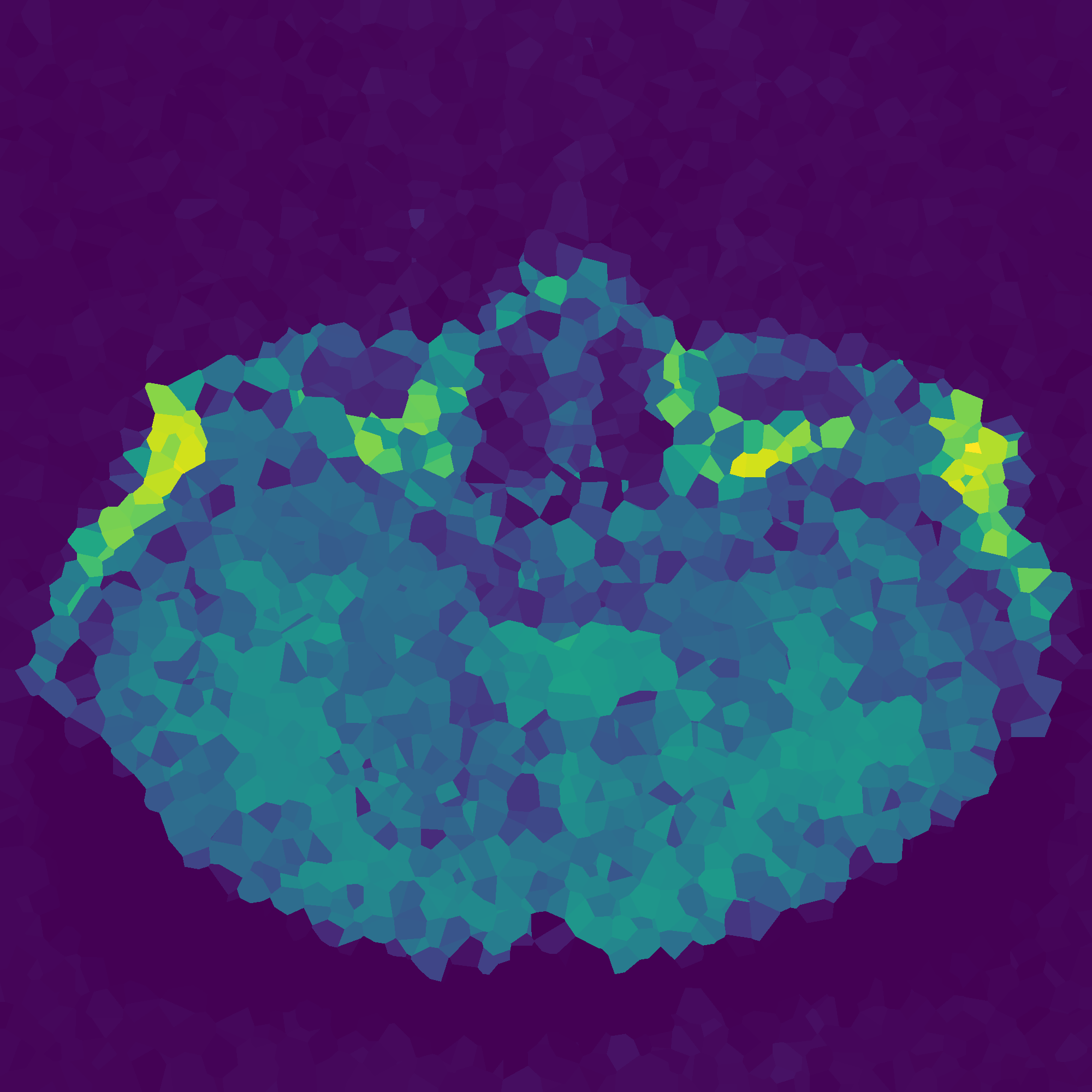}
    \caption{MRBrain (gradient)}
    \label{fig:sub2}
  \end{subfigure}
  \hfill\\
  \hfill
  % Subfigure 3
  \begin{subfigure}[b]{0.4\linewidth}
    \centering
    \includegraphics[width=\linewidth]{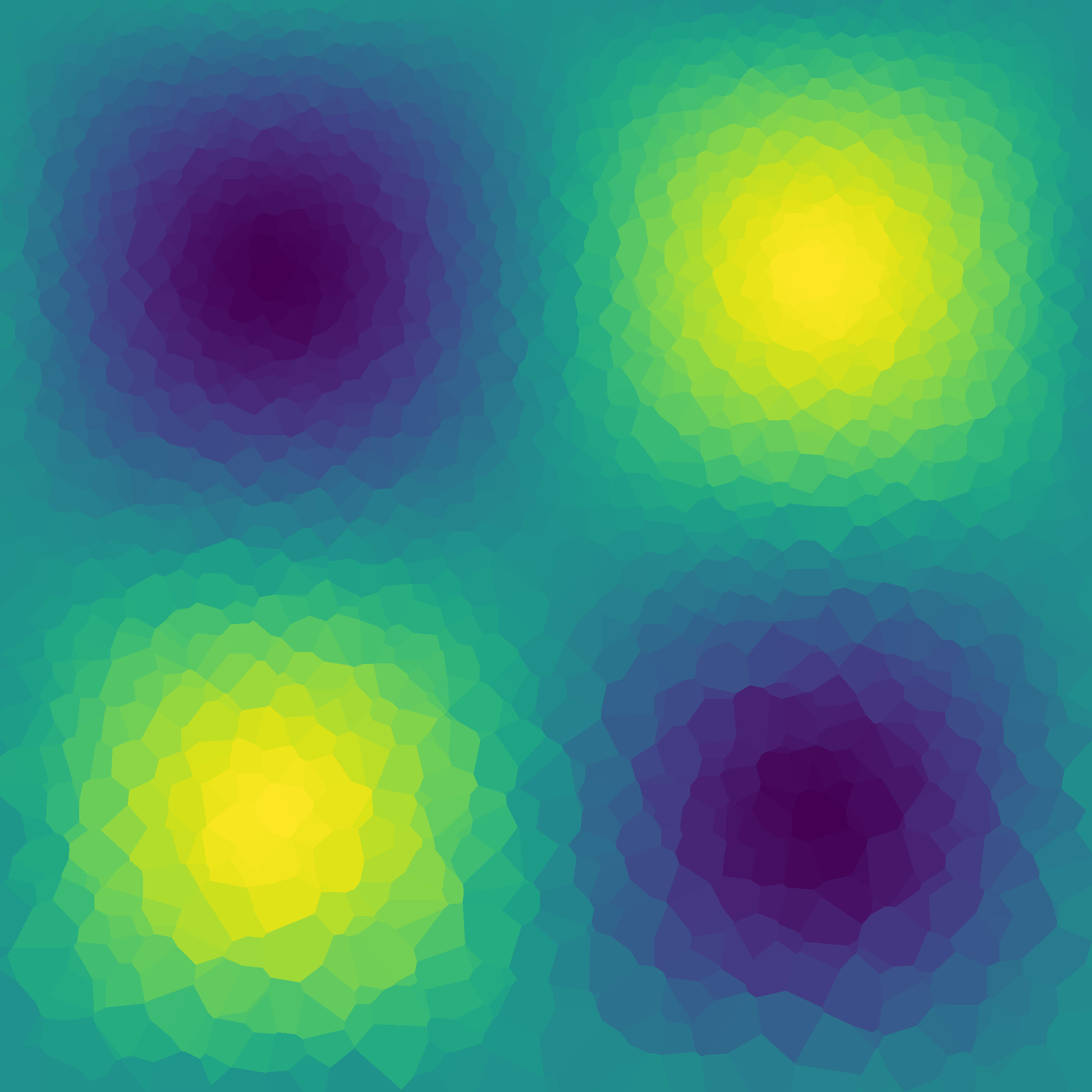}
    \caption{Sine Low (Exponential)}
    \label{fig:sub3}
  \end{subfigure}
  \hfill
  % Subfigure 4
  %\begin{subfigure}[b]{0.15\textwidth}
  %  \centering
  %  \includegraphics[width=\linewidth]{Figures/Dataset_Slices/gaussian_torus.png}
  %  \caption{Torus (Gaussian)}
  %  \label{fig:sub4}
  %\end{subfigure}\hfill
  % Subfigure 5
  \begin{subfigure}[b]{0.4\linewidth}
    \centering
    \includegraphics[width=\linewidth]{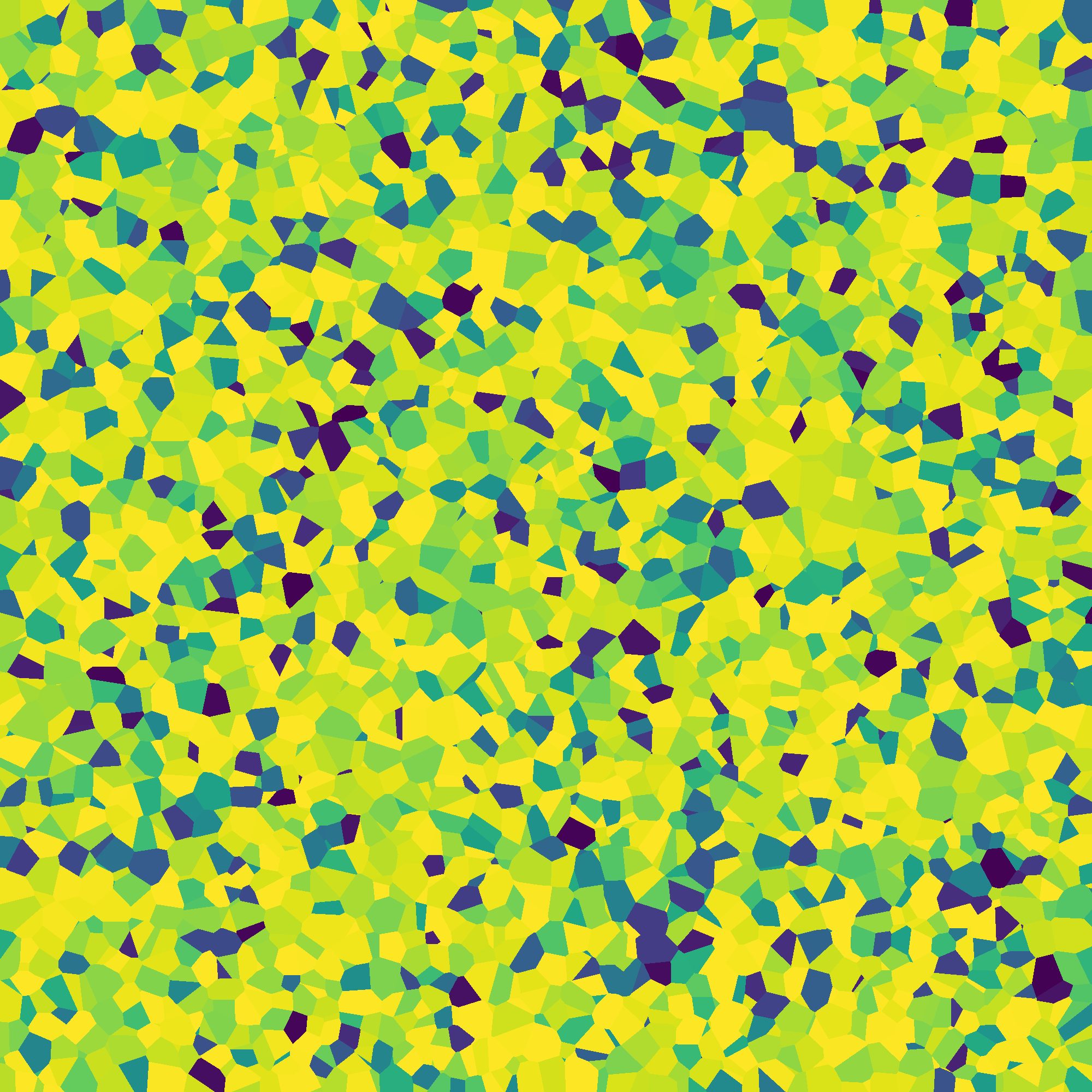}
    \caption{Hexagonal Titanium}
    \label{fig:sub5}
  \end{subfigure}
  \hfill
  \caption{Volume slices for four input scenes from evaluated datasets.}
  \label{fig:data_slices}
\end{figure}
\subsection{Compared methods}
As we are the first to our knowledge to propose the Voronoi reduction problem, we assemble baseline methods from point cloud down-sampling, volumetric mesh simplification, clustering and general heuristics to demonstrate the necessity of a more sophisticated algorithm:
\begin{enumerate}
   
   % \item \textbf{Remove least significant cells:} 
%Following~\cite{MiniSplatting}, we use the. 
%This baselines leverages \emph{image-based heuristics}, and completely neglects the spatial structure of the pre-trained model
    \item \textbf{Volume:} Cells are initially ranked by volume and then removed in order from smallest to largest.
    \item \textbf{Neighbour:} Cells where the function value is similar to their neighbours are good candidates for removal. Cells are ranked by the mean squared error of all neighbour values with the current cell and then greedily removed: 
    \begin{equation}
        c(i) = \sum_{n \in \mathcal{N}(i)} \frac{(\mathbf{y}_n - \mathbf{y}_i)^2}{|\mathcal{N}(i)|} % max(\sigma_n, \sigma_i) \cdot \big(\mathbf{c}_i -  \hat{\mathbf{c}}_n\big)^2 + \lambda_{\sigma} \big( \sigma_i - \hat{\sigma}_n \big)^2
    \end{equation}
    In essence, this baseline accounts for topology of the Voronoi diagram through the local neighbourhood, but neglects to account for the geometry, i.e. the volume, of the cells.
    \item \textbf{Farthest Point Sampling (FPS):} 
    Farthest point sampling (FPS) greedily builds a subset of points that is spread as evenly as possible over the input. It starts from a seed point $x_0$ (random or fixed) and keeps a set Q of selected points. At each step it adds the point whose distance to its nearest selected point is largest:
\begin{equation}
    x_{t+1} = argmax_{x \in S \setminus Q} min_{q \in Q} \| x - q\|.
\end{equation}
The reduction removes points in order from $x_{N-1}$ to $x_0$.
    \item \textbf{Grid:} Ranks points by iteratively building an Octree: At level 0, a single point in the bounding box is selected randomly. At level 1, up to eight points are selected for each non-empty octant and at level 2 up to 64 and so on recursively. Points are then sorted by Octree level descending and removed in that order.     
    \item \textbf{Cignoni et al.:} As a representative method for volumetric mesh reduction, we re-implement the work by Cignoni et al.~\cite{tetmesh2}, specifically the proposed \emph{QN} variant. % We adapt QN in two ways. Our models are defined by their sites rather than by an explicit tetrahedral complex, so a collapse deletes a site and the connectivity is recovered by re-triangulating the survivors. %Furthermore, QN's quadrics measure the error of the piecewise-linear field over the tetrahedra, whereas all methods here are evaluated on the piecewise-constant field over the resulting Voronoi diagram, so the baseline is ranked under a slightly different reconstruction than the one it is scored on.
    QN is formulated for edits to a fixed tetrahedral complex. In our setting, sites are instead deleted and survivors re-triangulated, so the proposed topological guards do not apply, and the quadrics measure a piecewise-linear tetrahedral field while we evaluate a piecewise-constant Voronoi one.    
     \item \textbf{Ward:} This baseline uses the same Delaunay edge collapse algorithm to the midpoint, but simplifies the cost function with the assumption that the merged cell covers the same volume as the original cell pair:
    \begin{equation}
        (\mathbf{y}_{new}-\mathbf{y}_i)^2\cdot \text{Vol}(V_i) + (\mathbf{y}_{new}-\mathbf{y}_j)^2\cdot \text{Vol}(V_j)
    \end{equation}
    with $\mathbf{y}_{new} = \frac{\text{Vol}(V_i) \mathbf{y}_i + \text{Vol}(V_j) \mathbf{y}_j}{\text{Vol}(V_i) + \text{Vol}(V_j)}$. This baseline essentially ignores the impact of merges on neighbouring cells and is equivalent to Ward's agglomerative clustering criterion \cite{Ward} on Delaunay edges.
    \item \textbf{Ours (Fast):} Computes Equations~\ref{eq1}-\ref{eq5} using the prismatoid approximation discussed in Section~\ref{sec:approximation}. 
    \item \textbf{Ours (Exact):} Computes Equations~\ref{eq1}-\ref{eq5} using exact convex cell intersection.
\end{enumerate}
%For \emph{Ward}, \emph{Ours (Fast)} and Ours \emph{(Exact)}, we set $\beta=0.02$ for the evaluation. 
As evaluated in Supplemental~\ref*{app:B_ablation}, higher values of $\beta$ often counter-intuitively result in better metrics. Instead of tuning this parameter, we choose a conservative value of $0.02$ for the evaluation, remaining close to iterative edge collapse, while keeping runtimes reasonable. We also use the same batch size for recomputing triangulation and MSE in the Neighbour method.

Full Code for reproducing our method and baselines is included in the Supplemental material.

\begin{figure*}[h]
    \centering
    \renewcommand{\arraystretch}{0.0}
\setlength{\tabcolsep}{1pt}
    \begin{tabular}{ccccccccc}
  \textbf{Input} & \textbf{Neighbour} & \textbf{Volume} & \textbf{FPS} & \textbf{Grid} & \textbf{Cignoni et al.} & \textbf{Ward} & \textbf{Ours (Fast)} & \textbf{Ours (Exact)} \\[1pt]
%\rotlabel{\scriptsize Gyroid smooth \\ (uniform)} &
\includegraphics[width=0.1\linewidth]{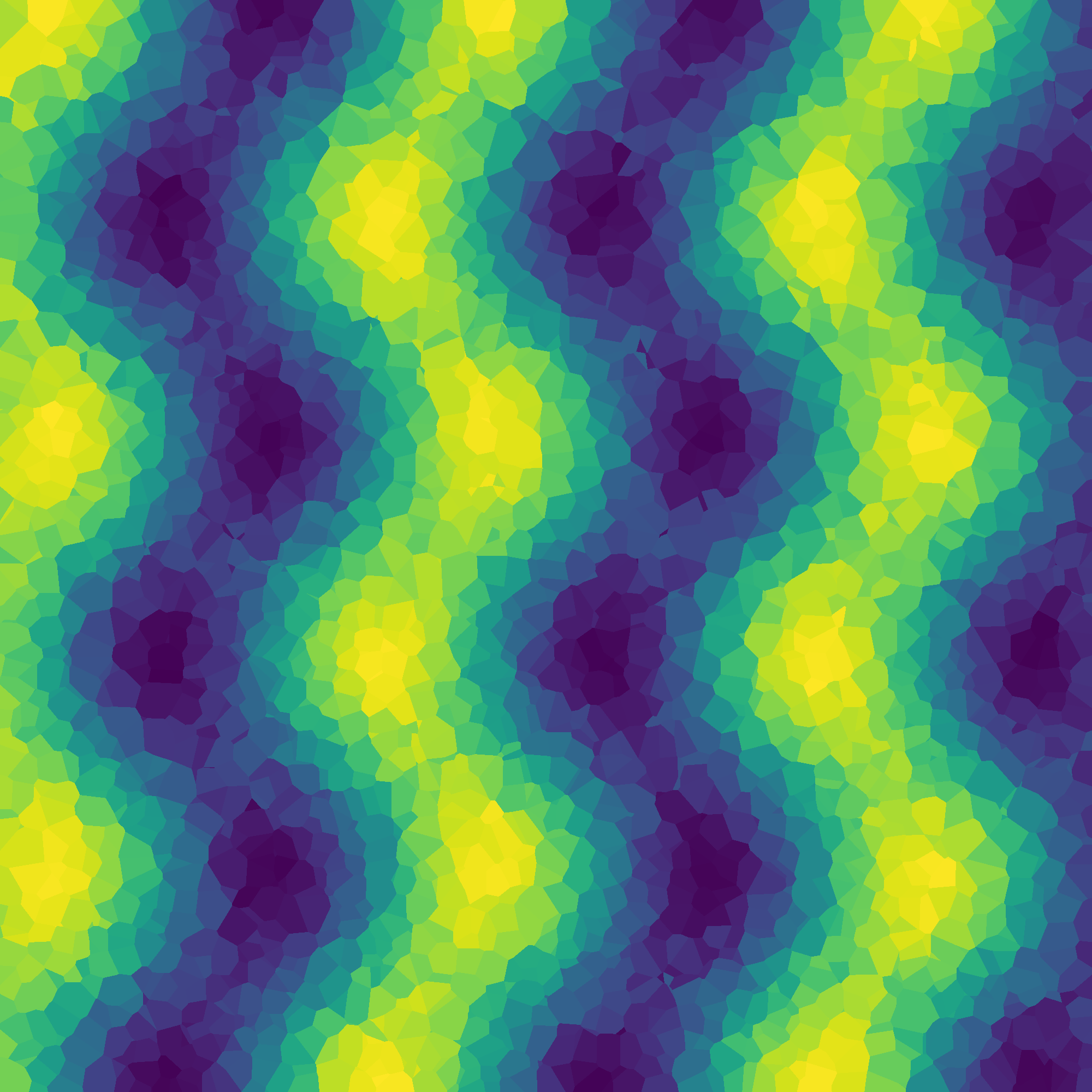} &
\includegraphics[width=0.1\linewidth]{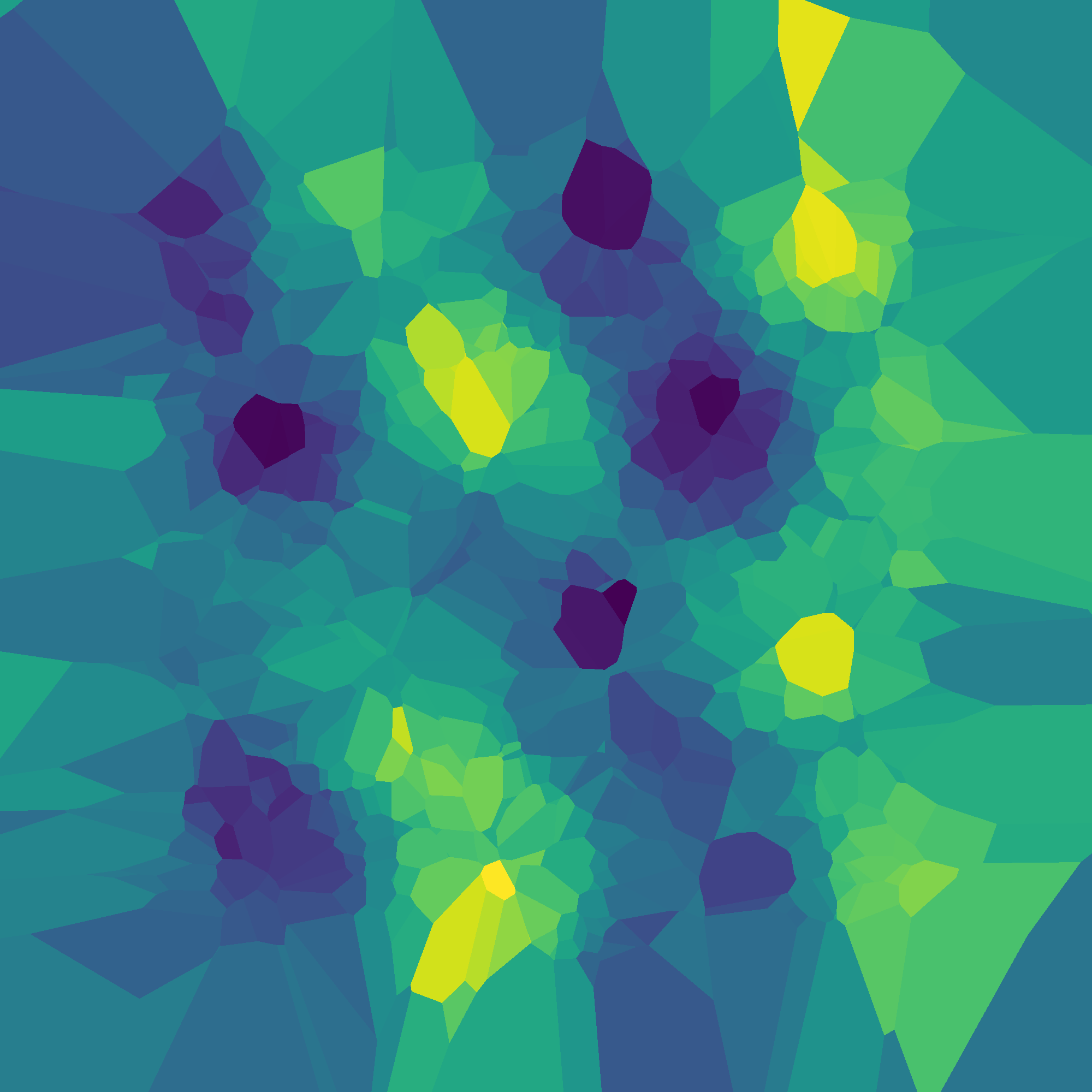} &
\includegraphics[width=0.1\linewidth]{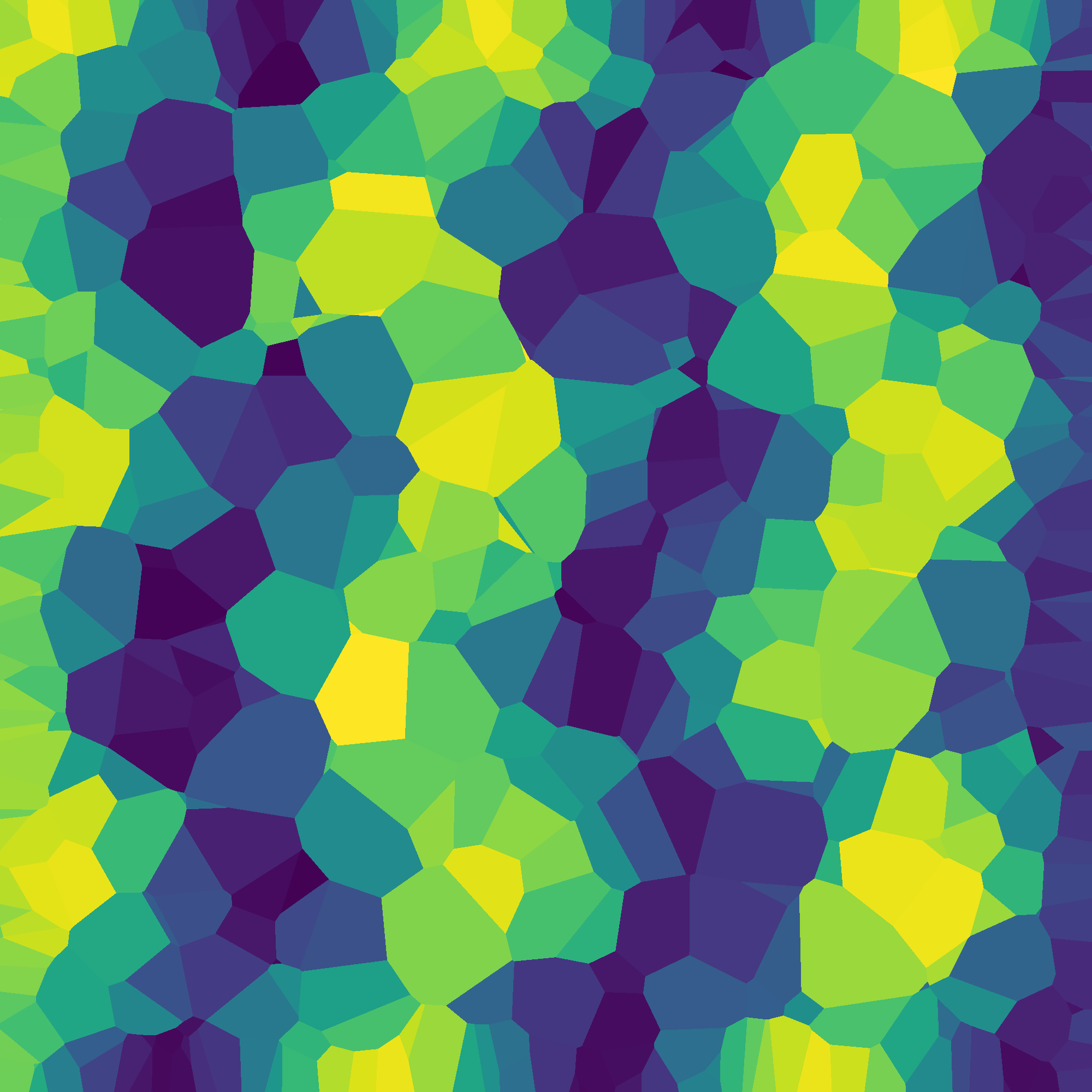} &
\includegraphics[width=0.1\linewidth]{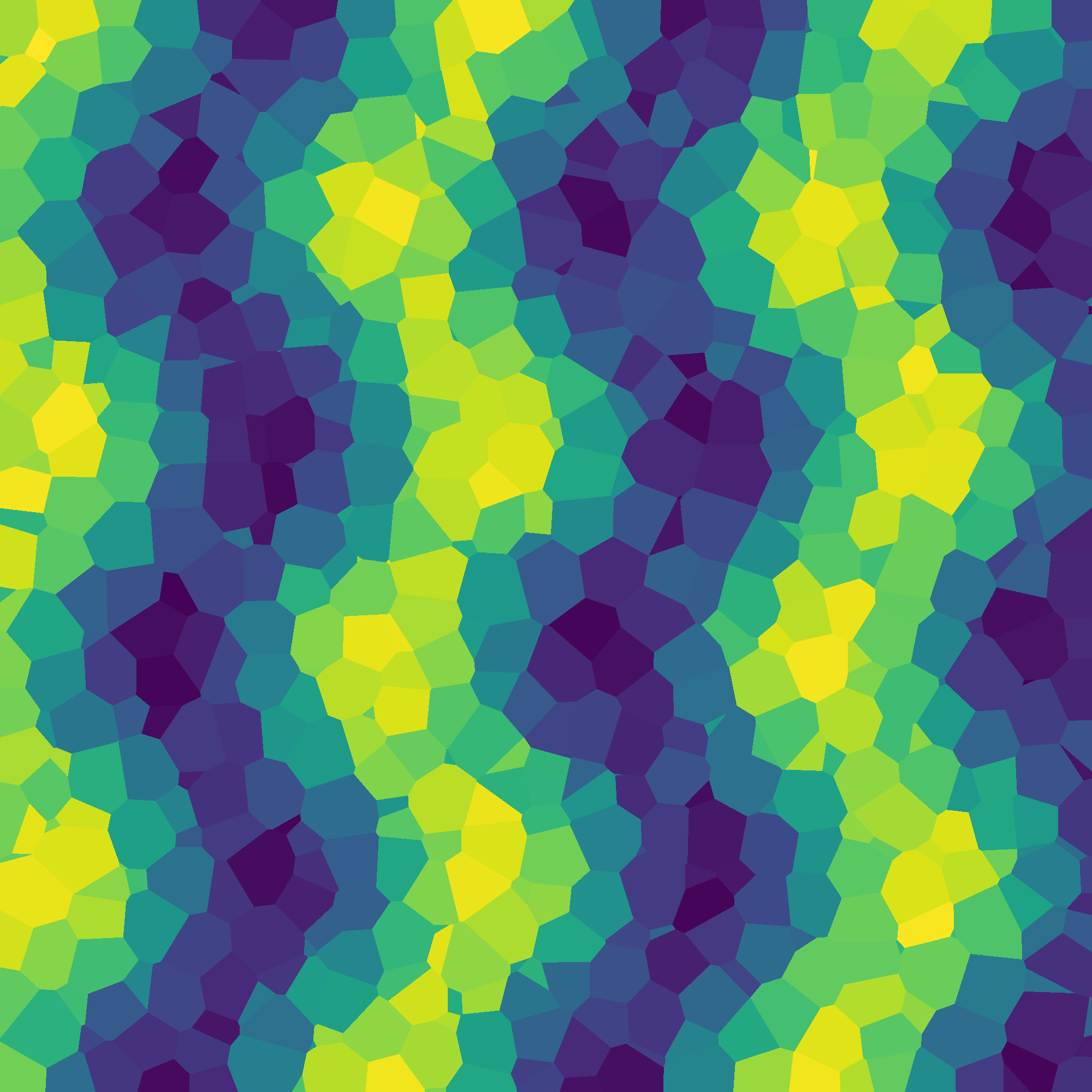} &
\includegraphics[width=0.1\linewidth]{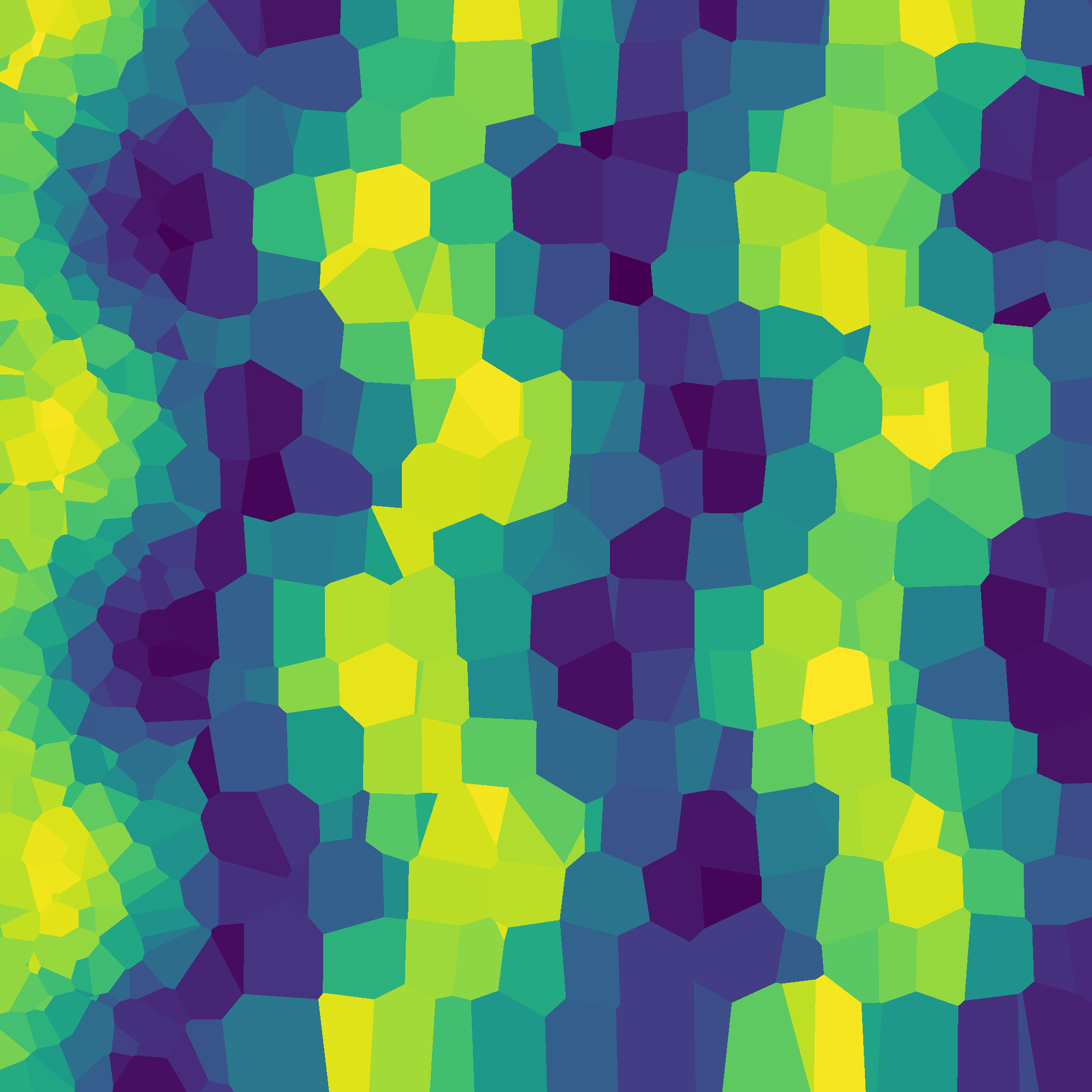} &
\includegraphics[width=0.1\linewidth]{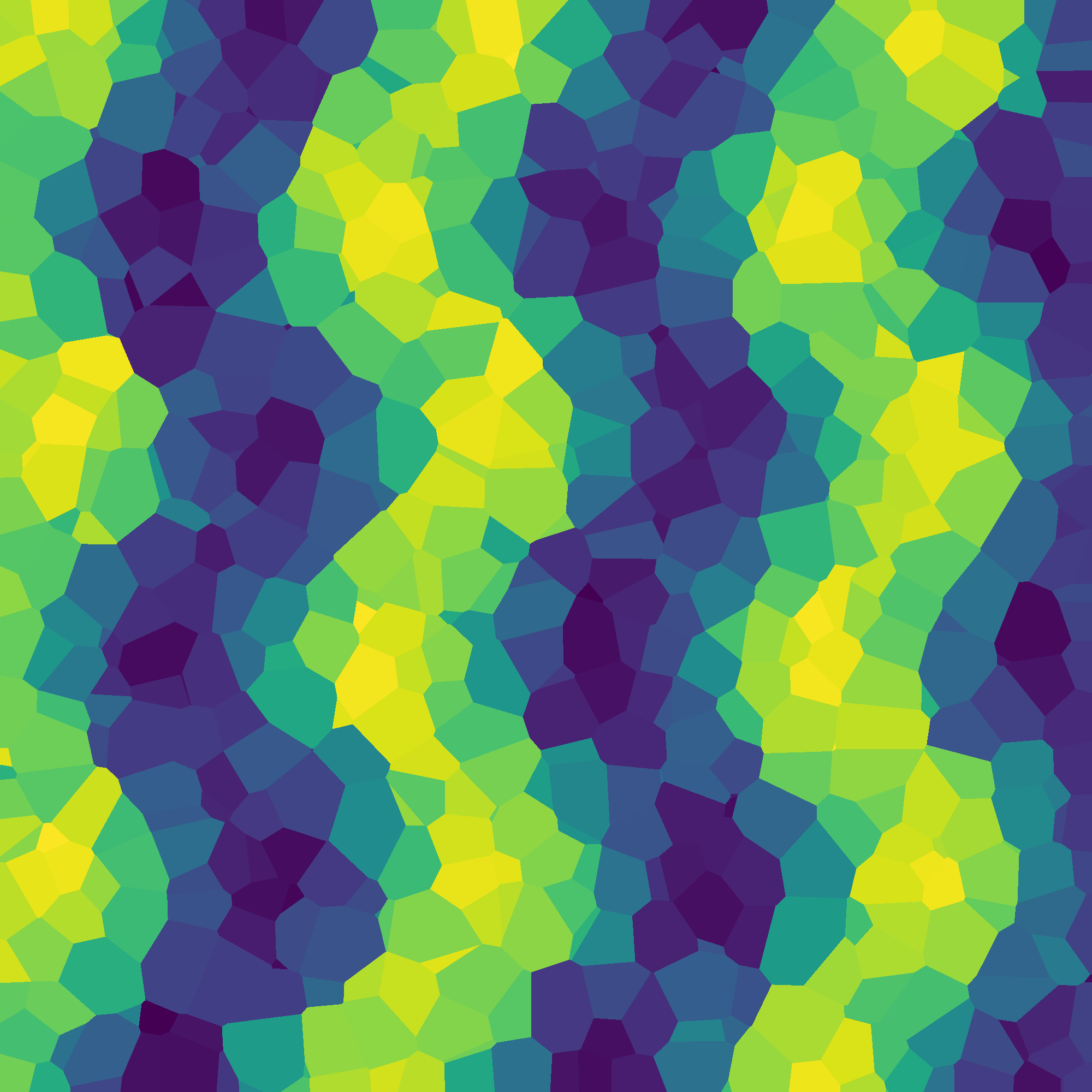} &
\includegraphics[width=0.1\linewidth]{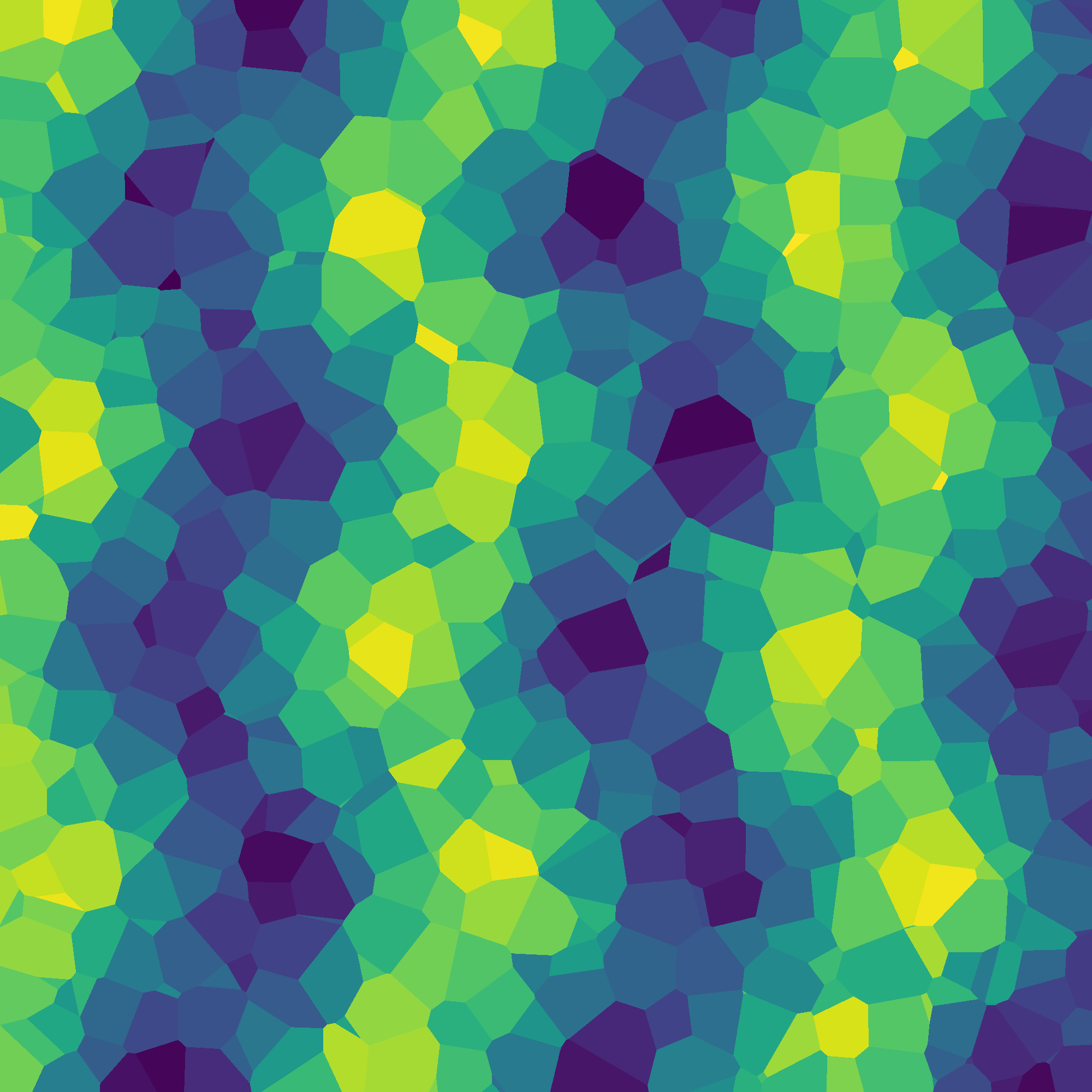} &
\includegraphics[width=0.1\linewidth]{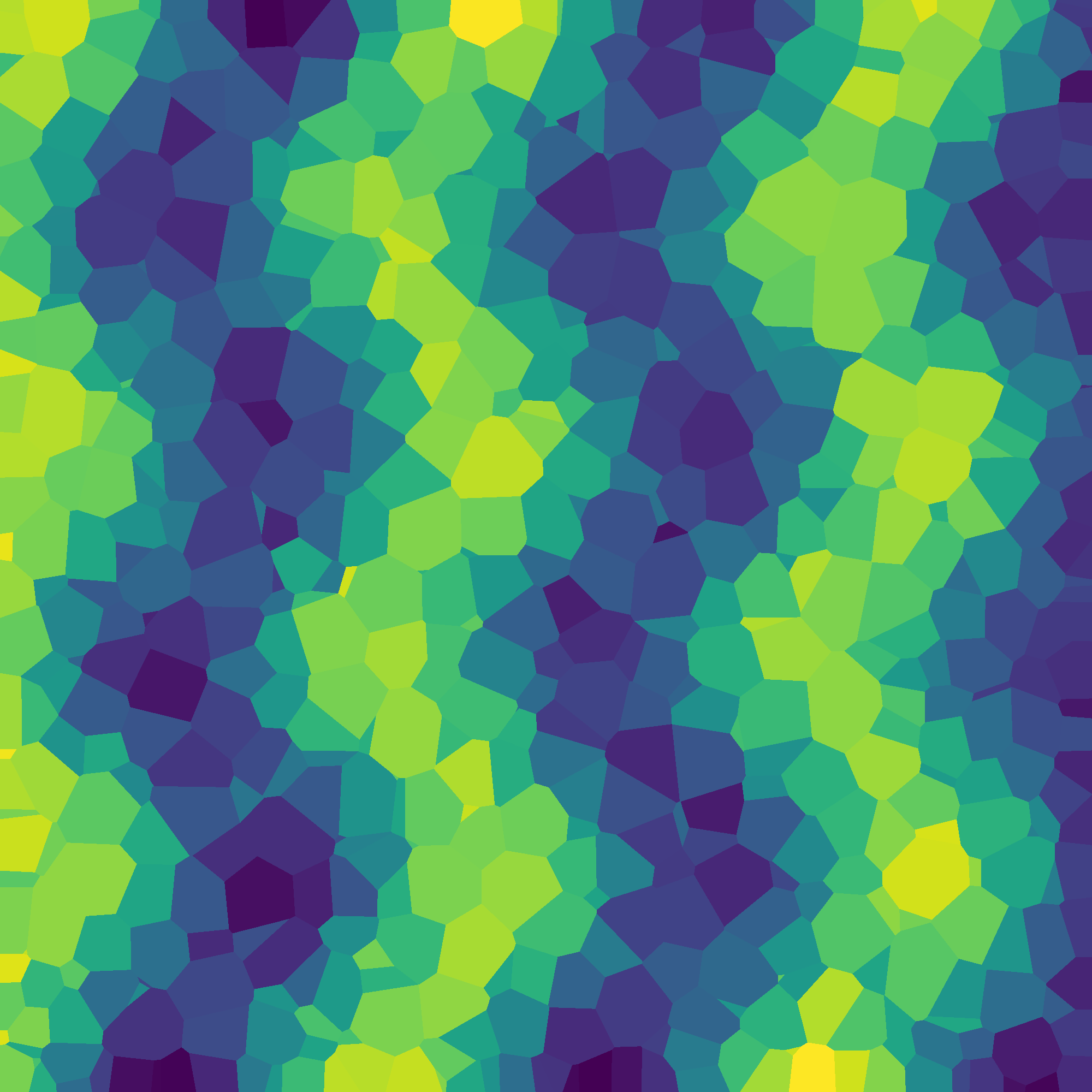} &
\includegraphics[width=0.1\linewidth]{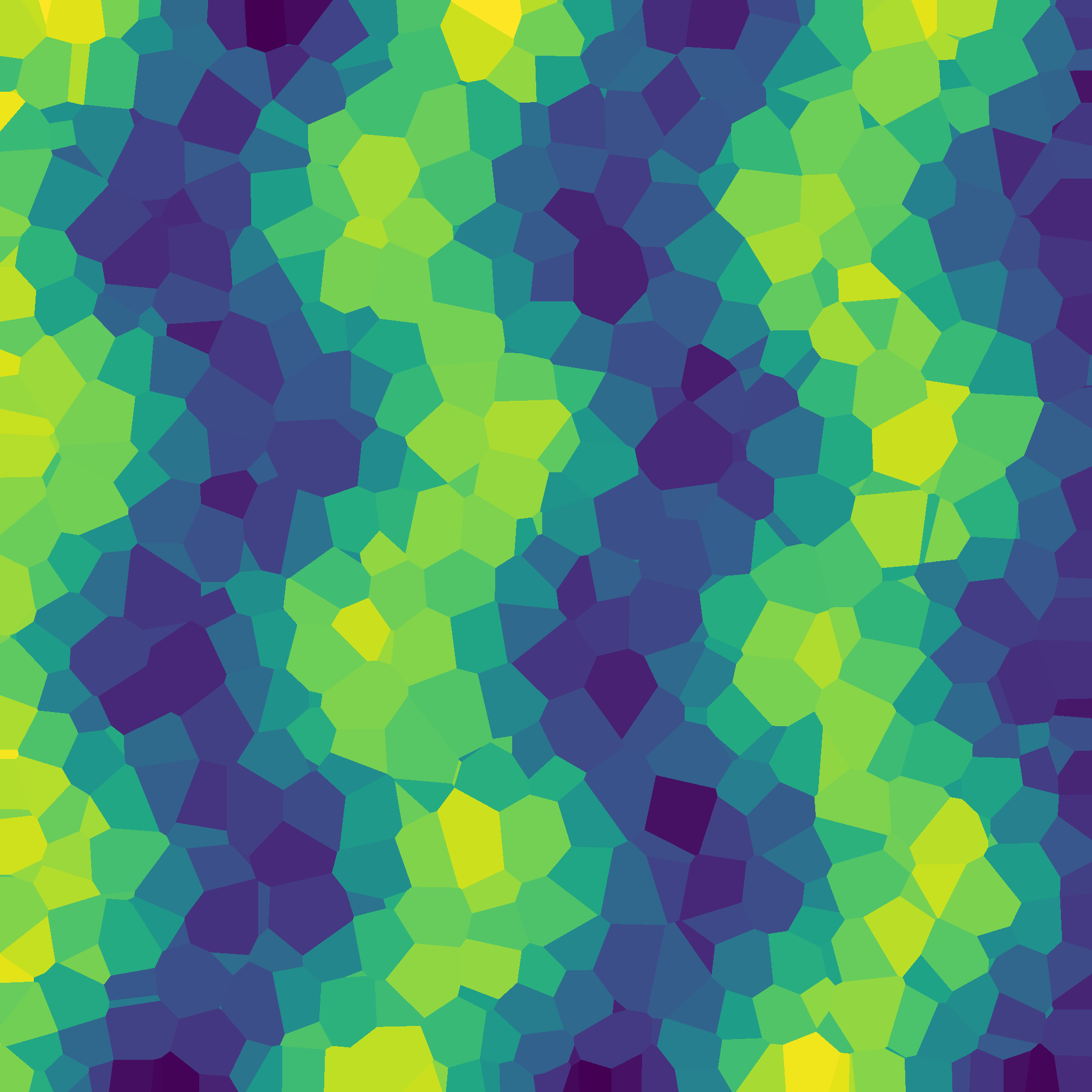} \\[1pt]
%\rotlabel{\scriptsize CTHead \\ (gradient)} &
\includegraphics[ width=0.1\linewidth, angle = 270, origin=c]{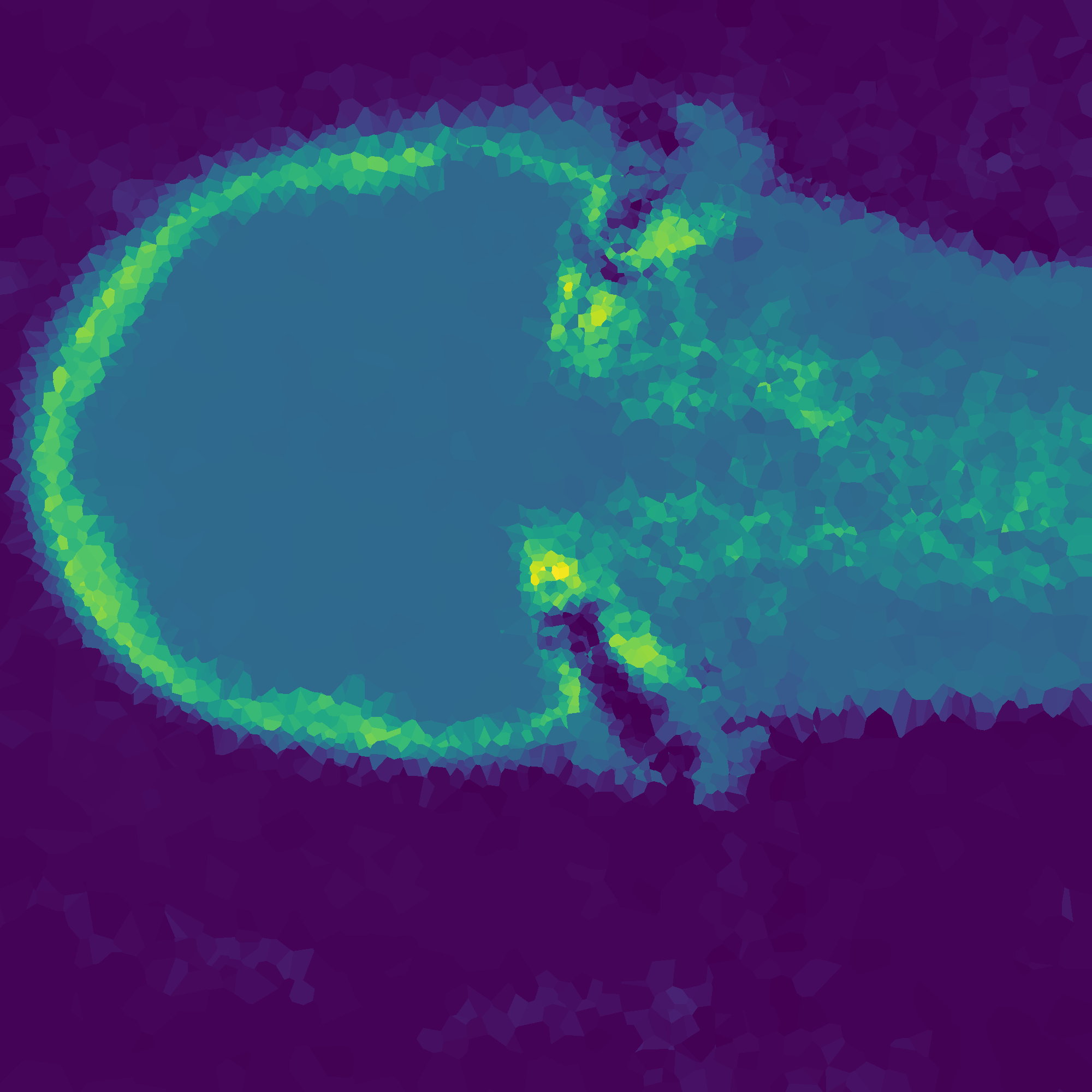} &
\includegraphics[ width=0.1\linewidth, angle = 270, origin=c]{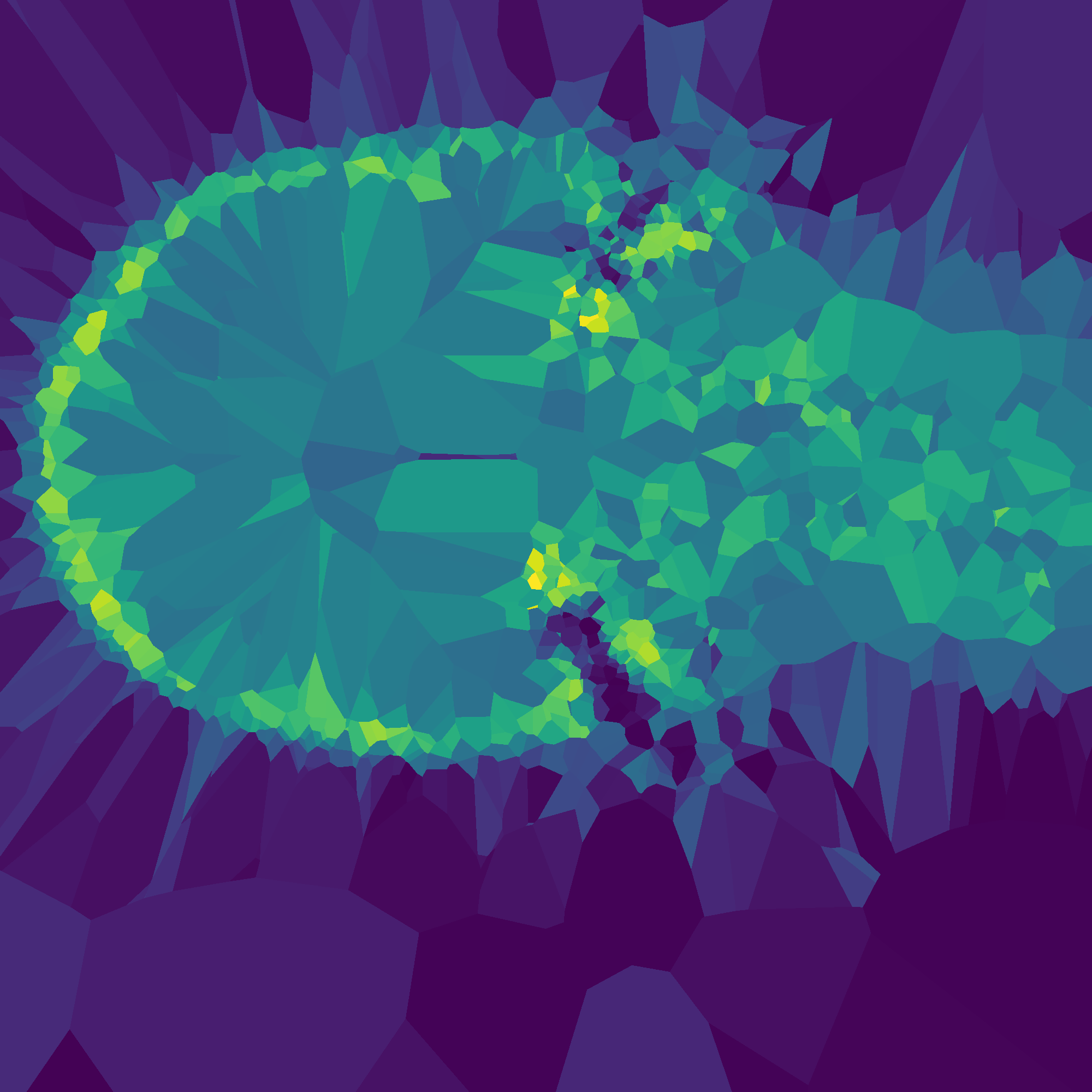} &
\includegraphics[ width=0.1\linewidth, angle = 270, origin=c]{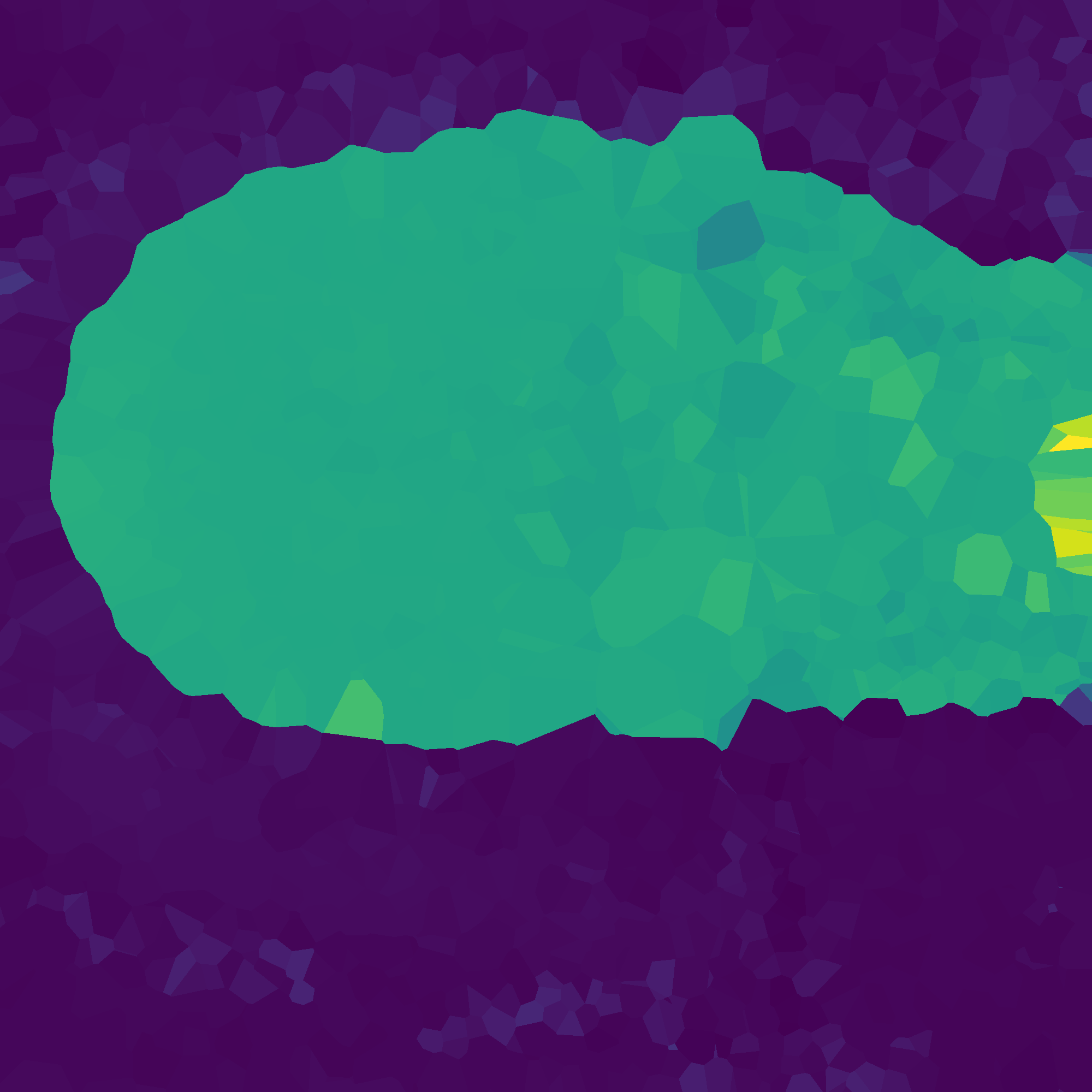} &
\includegraphics[ width=0.1\linewidth, angle = 270, origin=c]{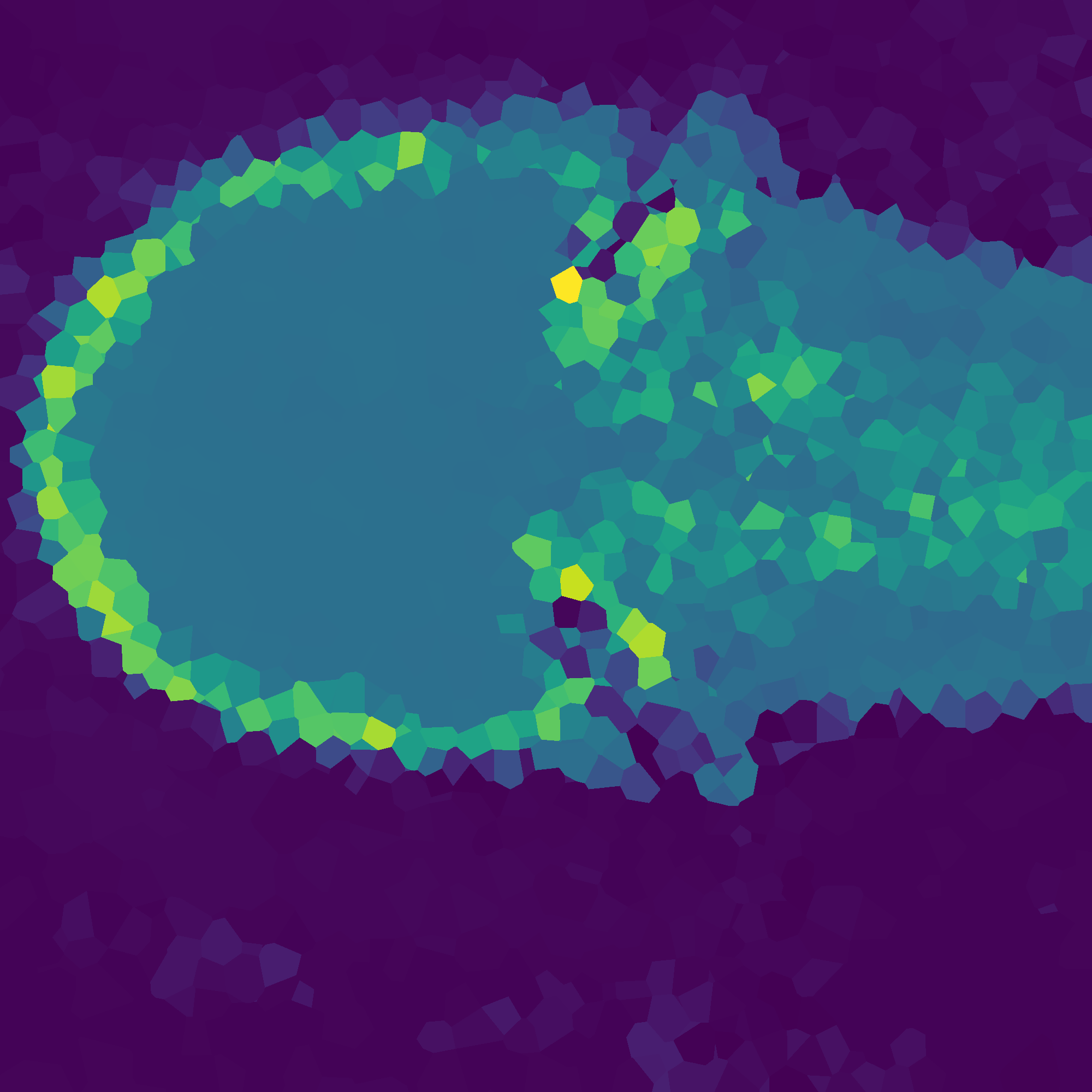} &
\includegraphics[ width=0.1\linewidth, angle = 270, origin=c]{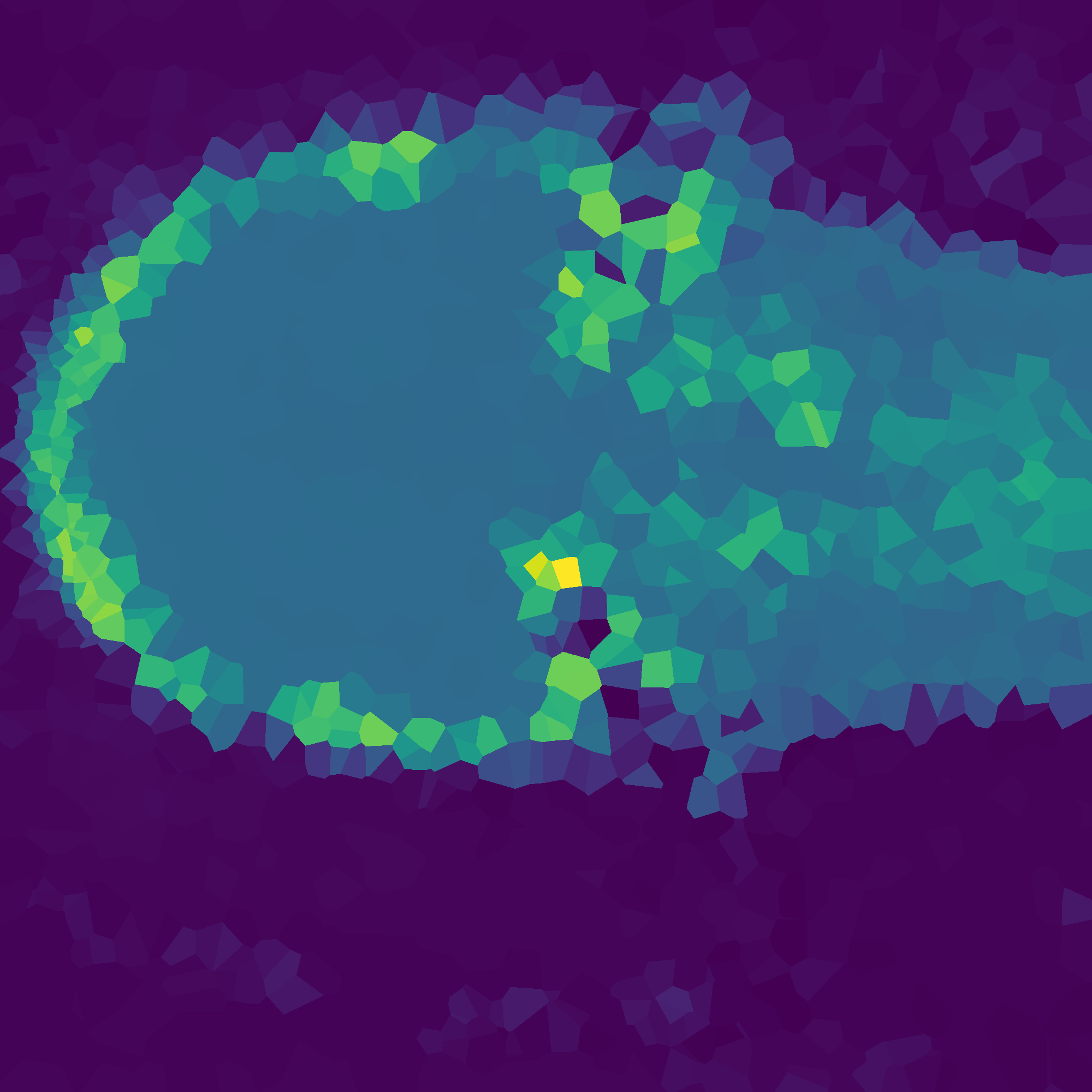} &
\includegraphics[ width=0.1\linewidth, angle = 270, origin=c]{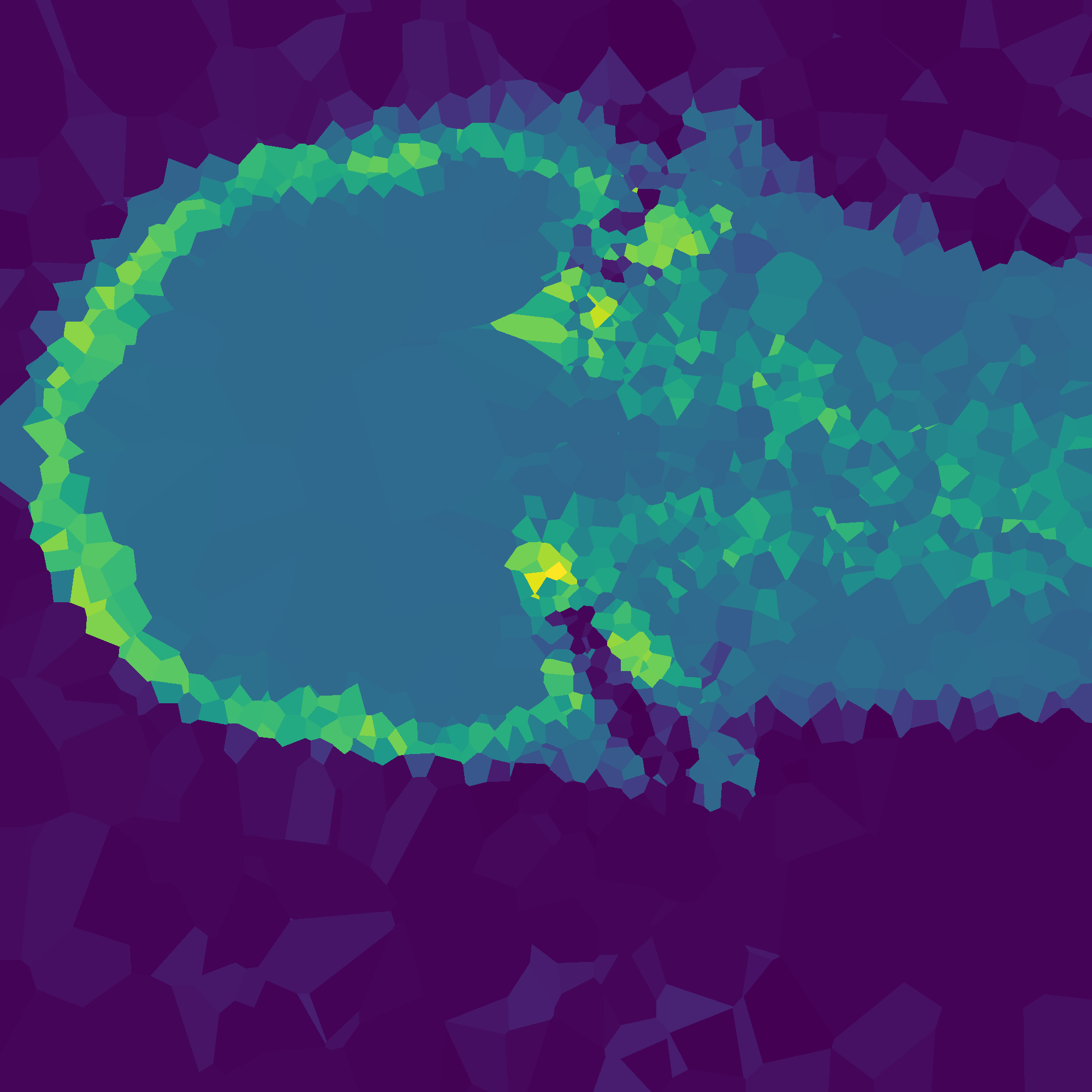} &
\includegraphics[ width=0.1\linewidth, angle = 270, origin=c]{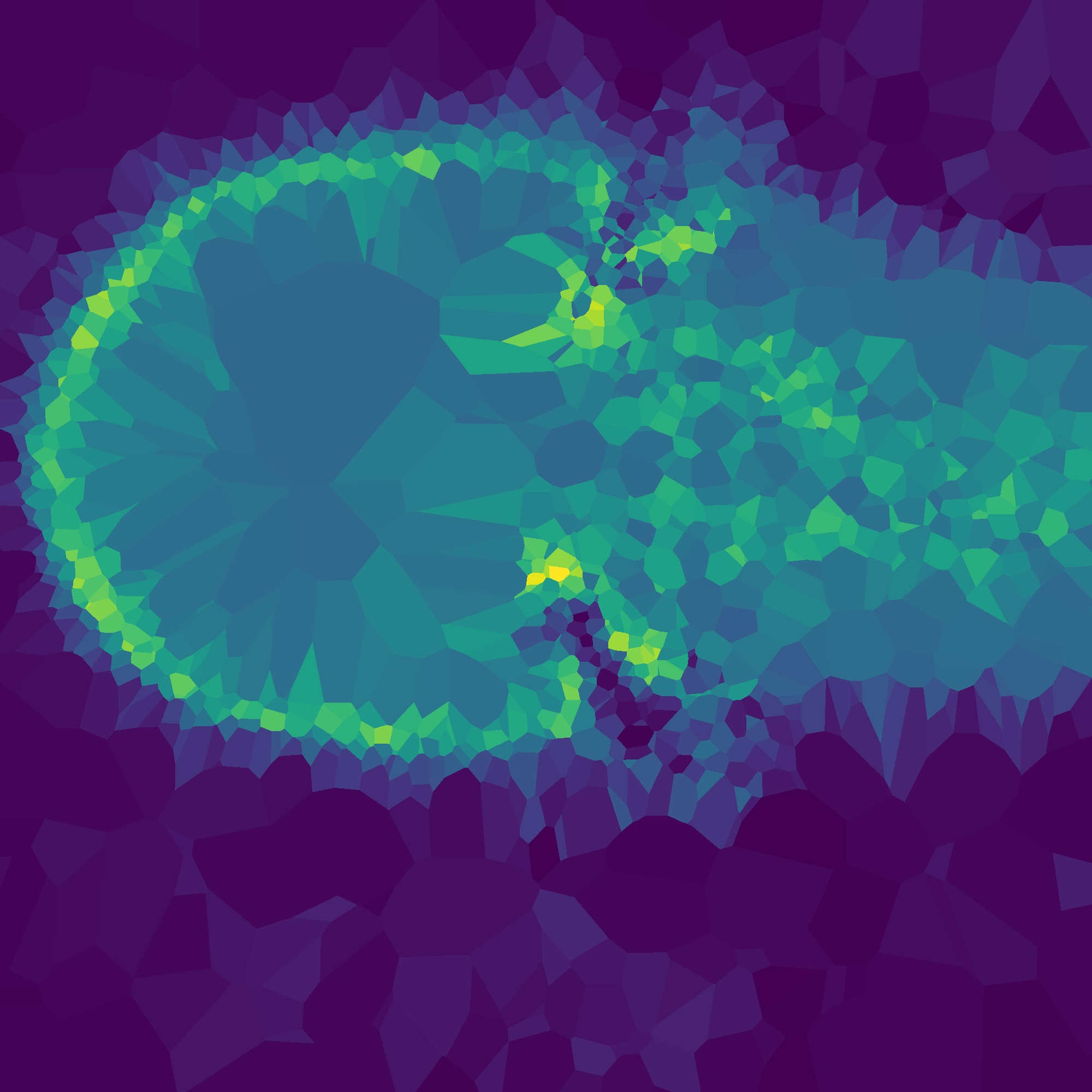} &
\includegraphics[ width=0.1\linewidth, angle = 270, origin=c]{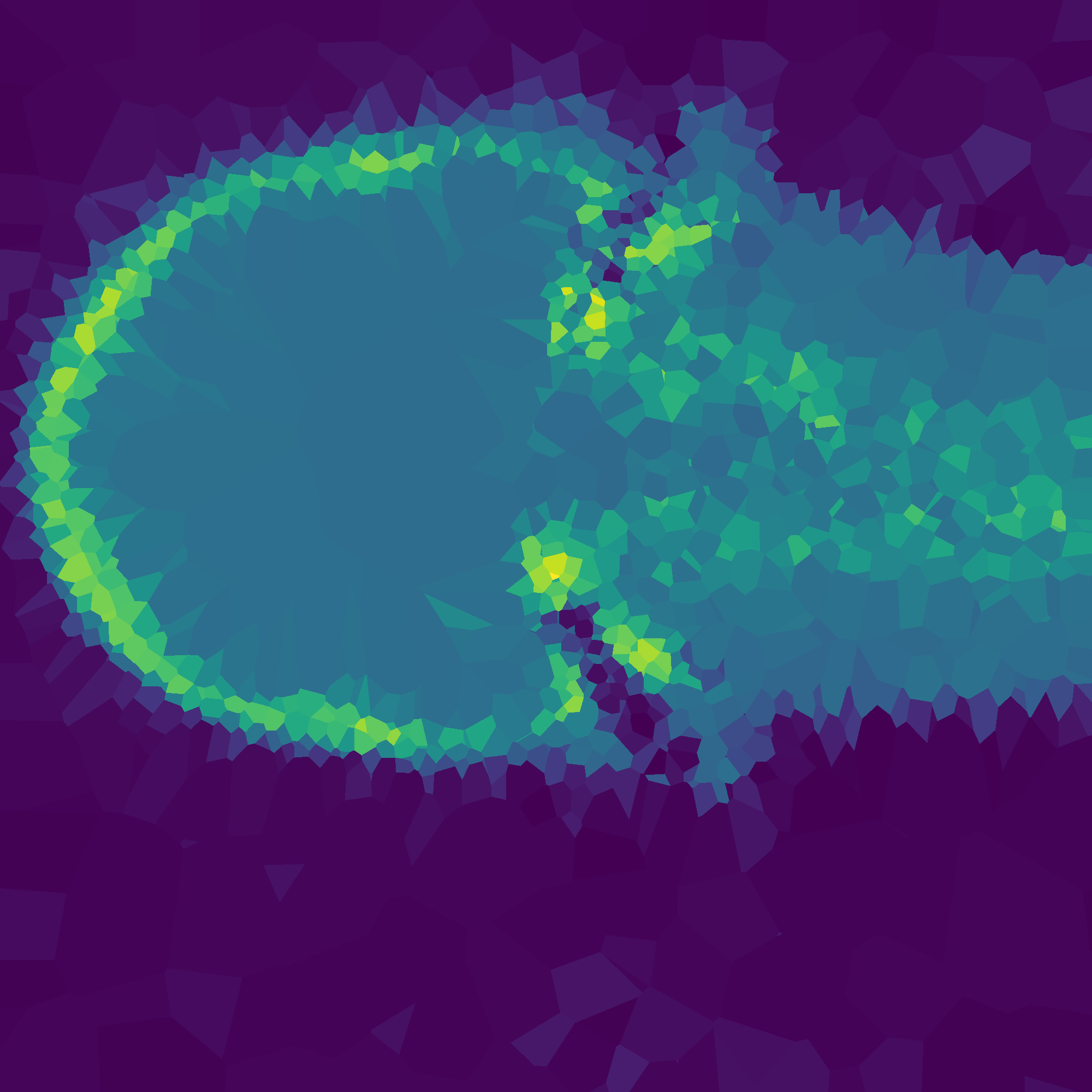} &
\includegraphics[ width=0.1\linewidth, angle = 270, origin=c]{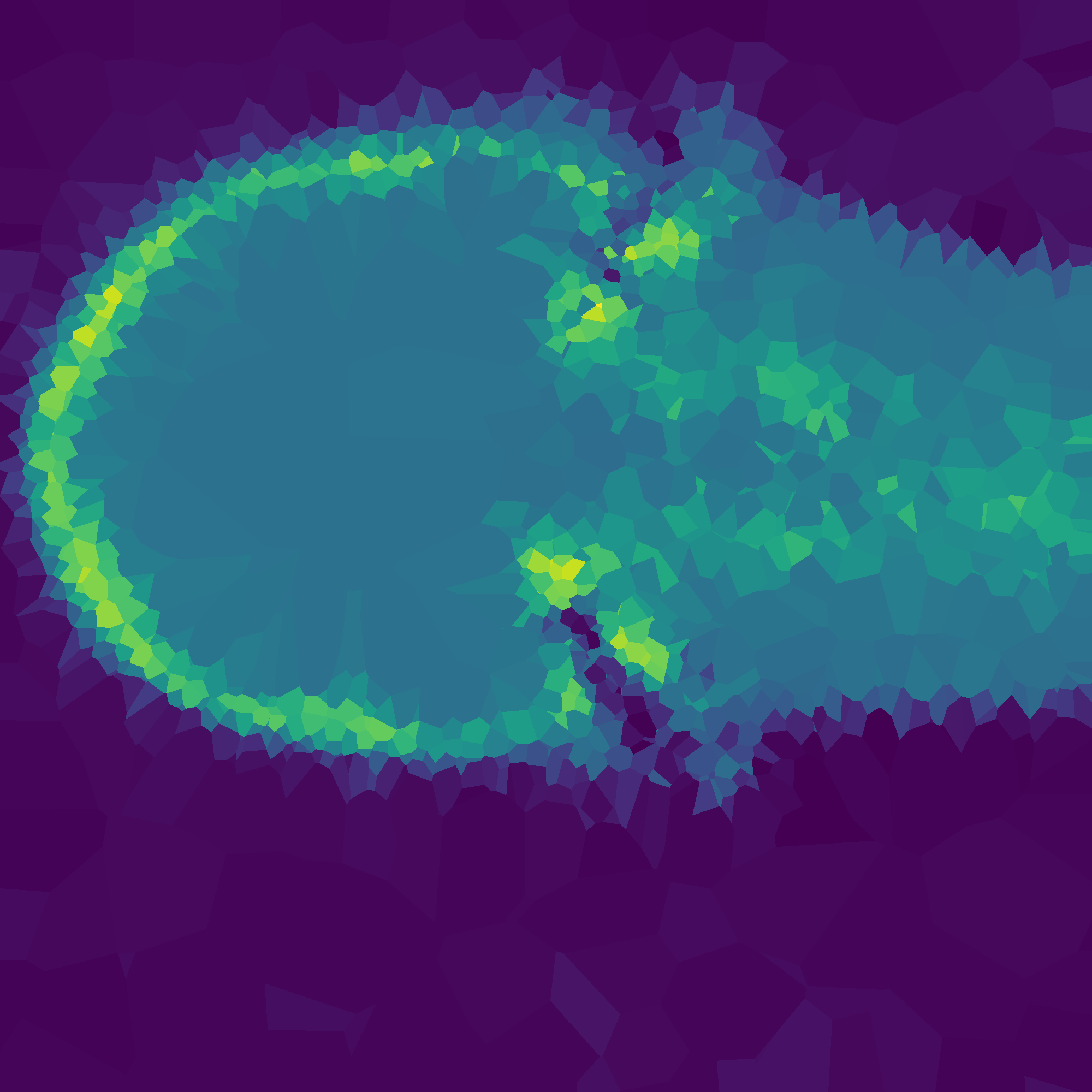} \\[1pt]
\includegraphics[ width=0.1\linewidth, angle = 270, origin=c]{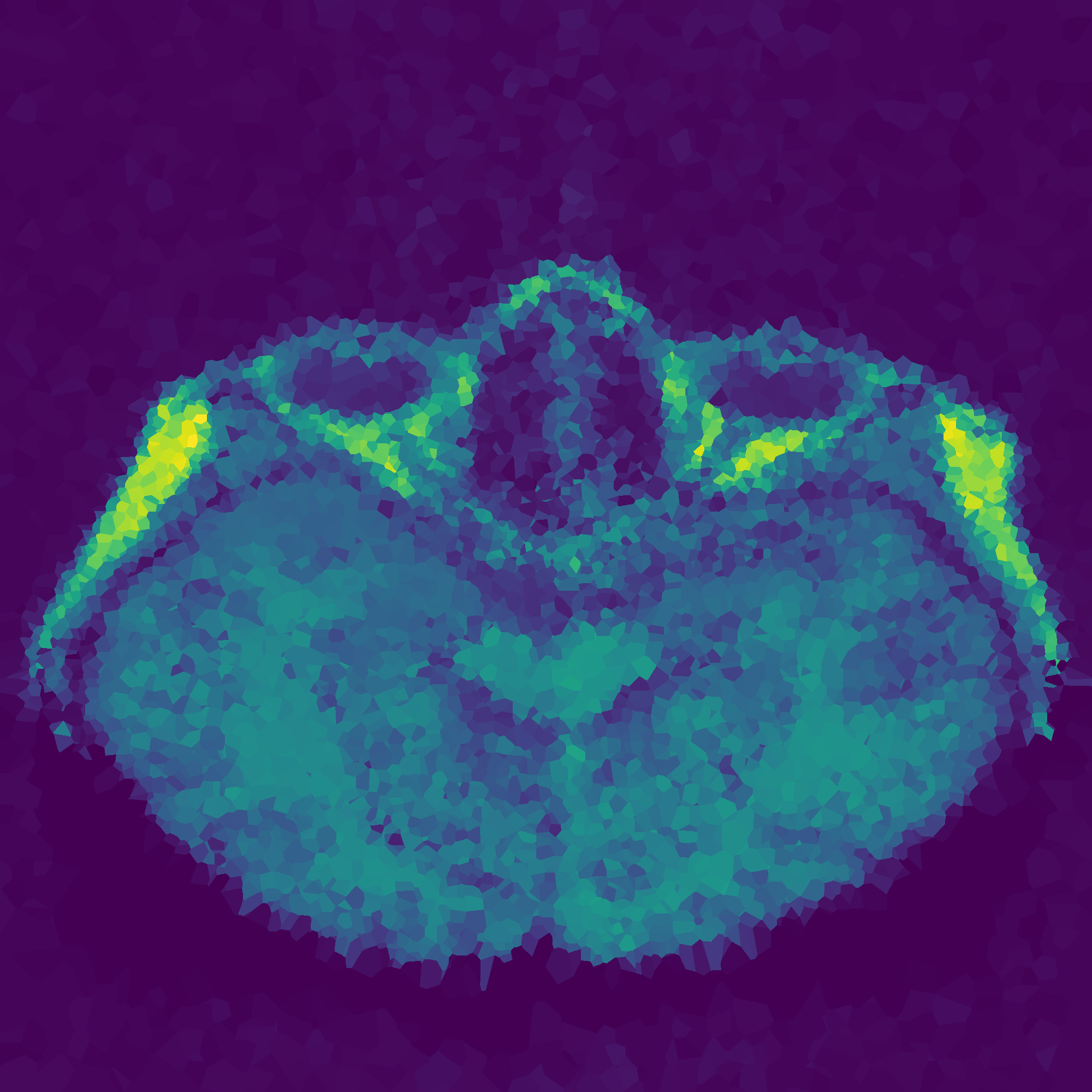} &
\includegraphics[ width=0.1\linewidth, angle = 270, origin=c]{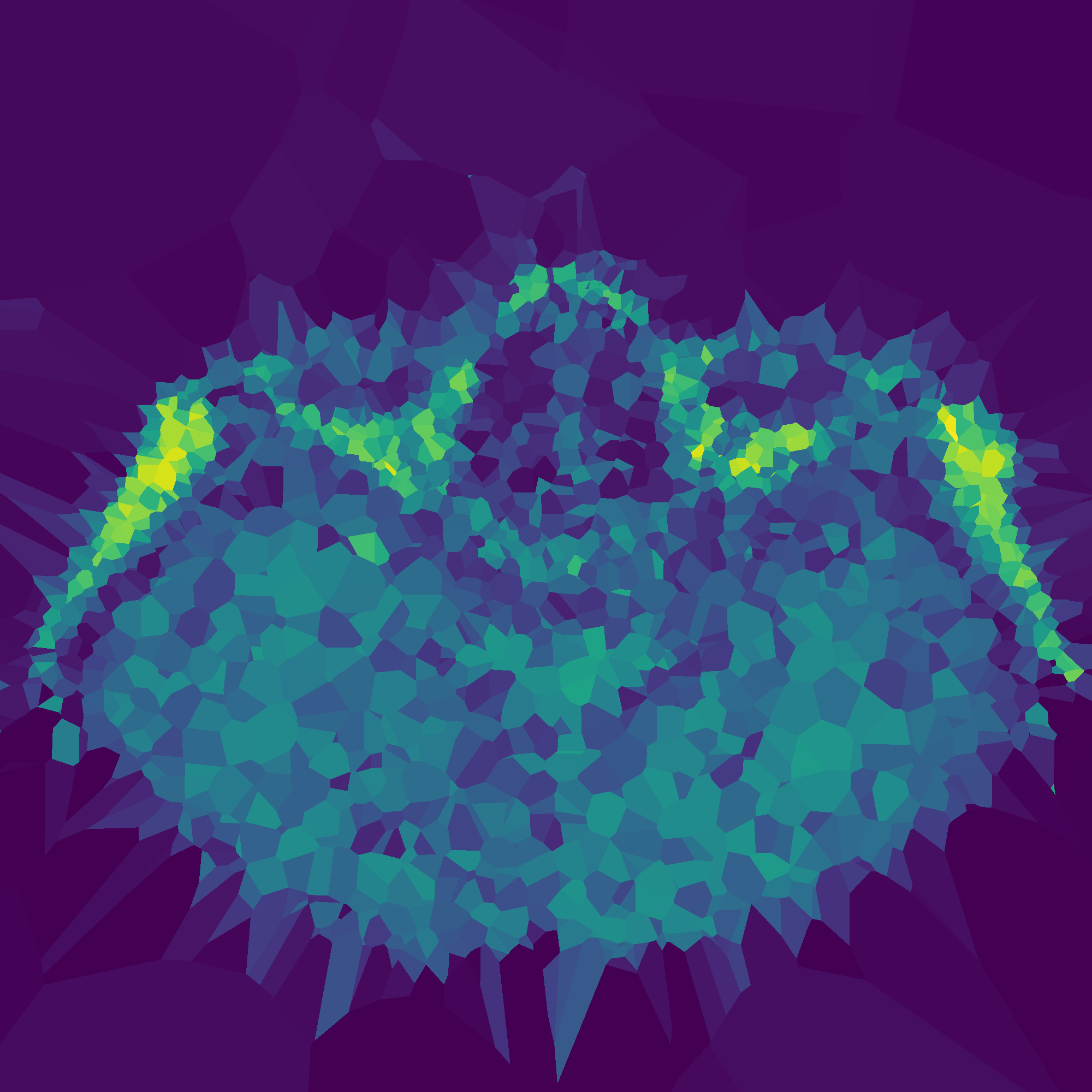} &
\includegraphics[ width=0.1\linewidth, angle = 270, origin=c]{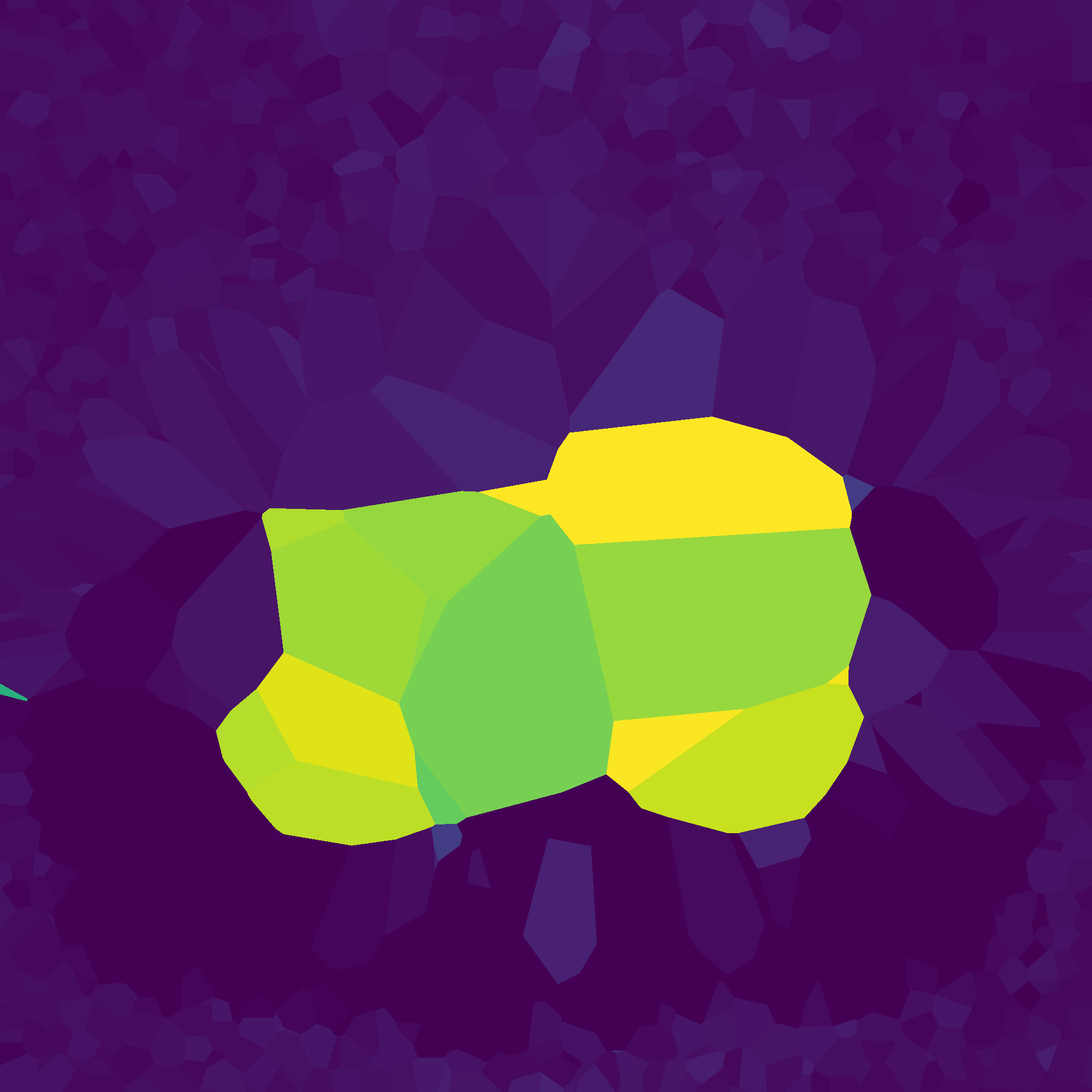} &
\includegraphics[ width=0.1\linewidth, angle = 270, origin=c]{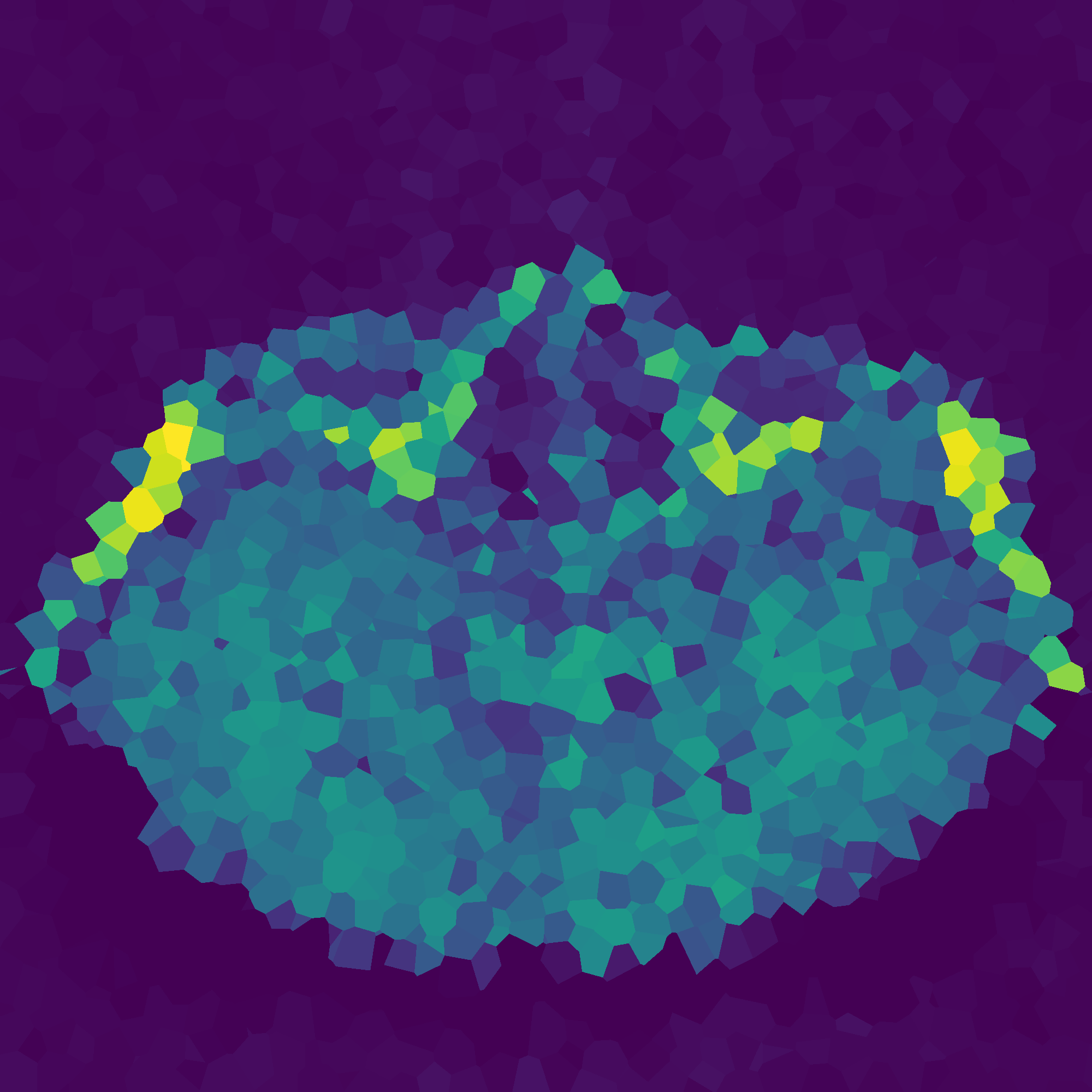} &
\includegraphics[ width=0.1\linewidth, angle = 270, origin=c]{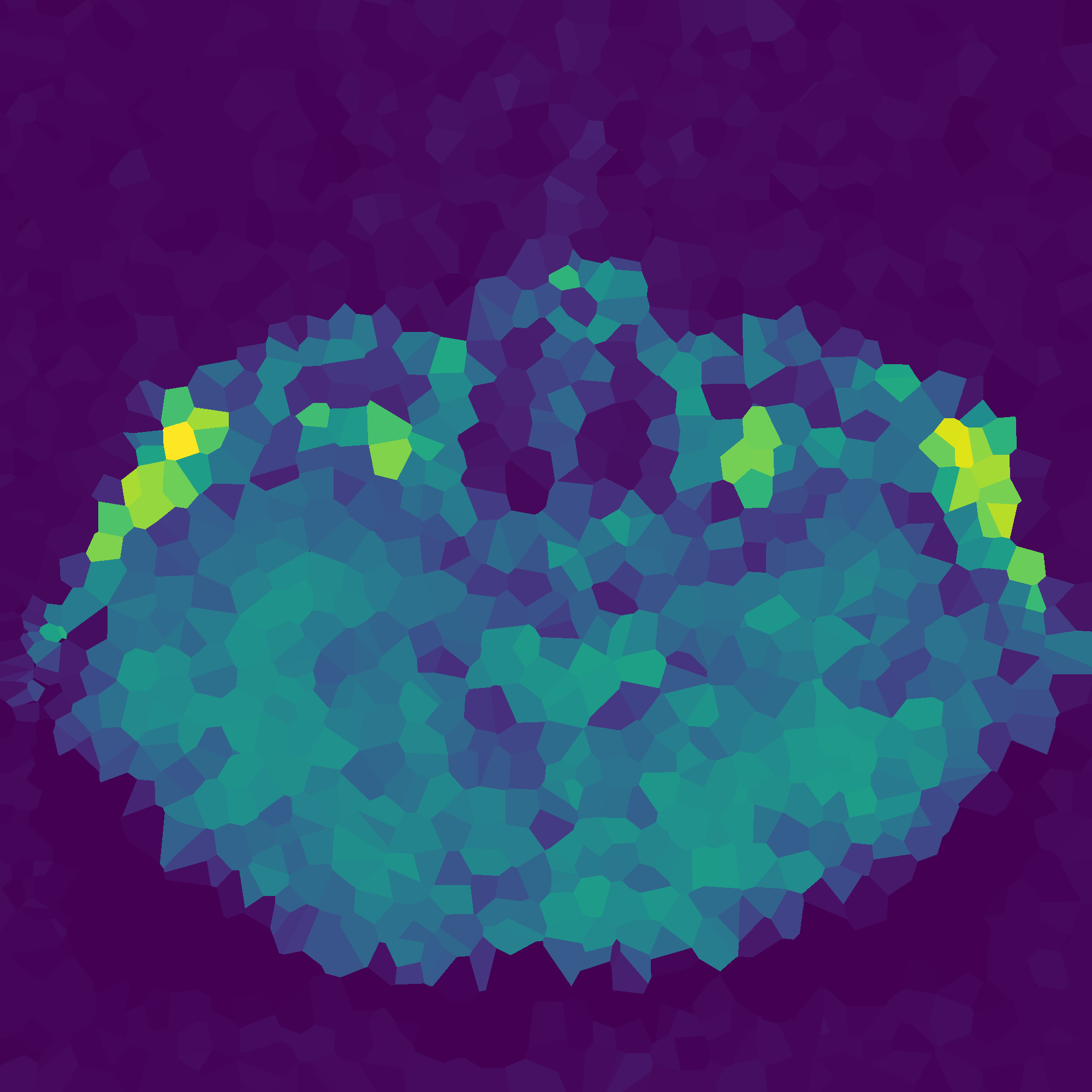} &
\includegraphics[ width=0.1\linewidth, angle = 270, origin=c]{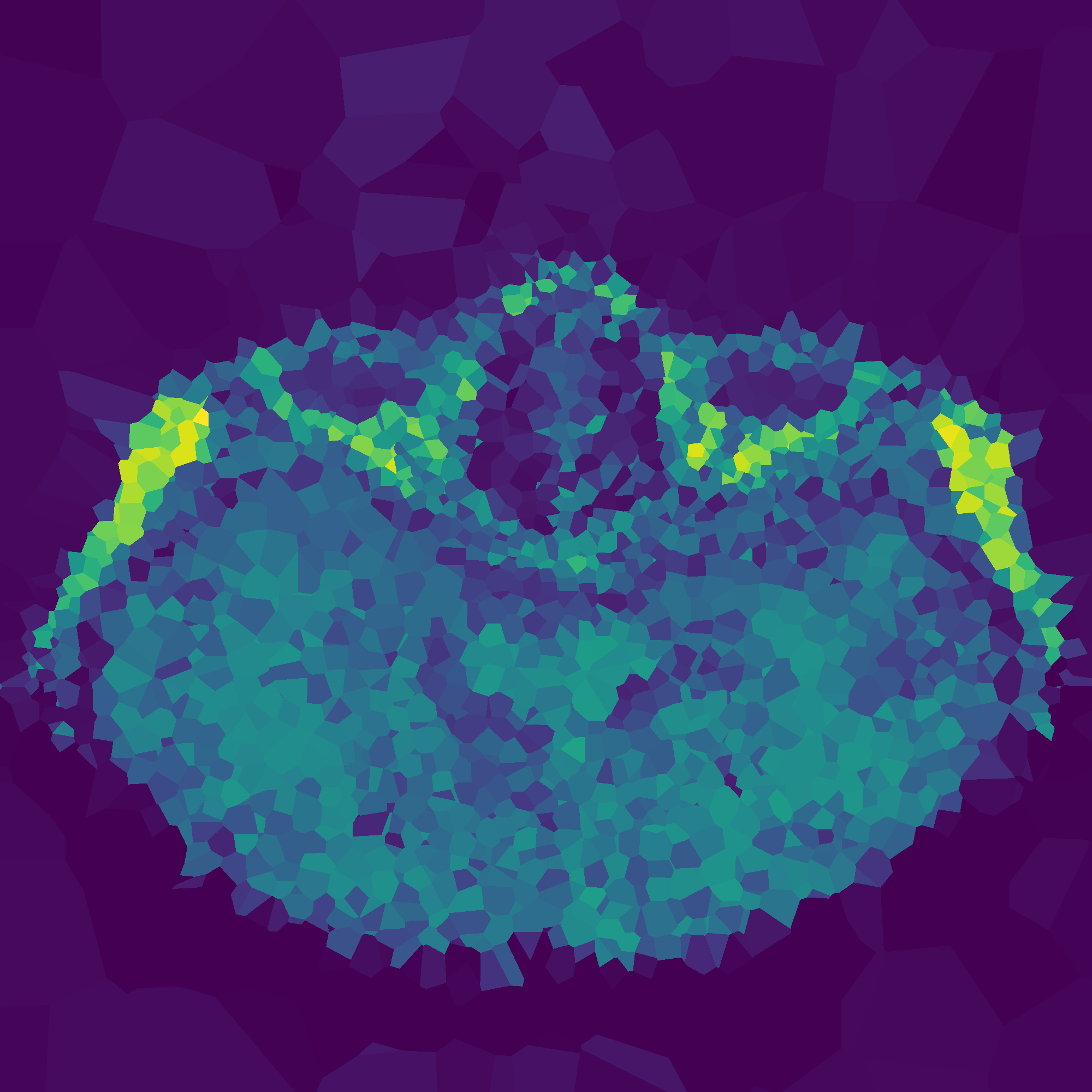} &
\includegraphics[ width=0.1\linewidth, angle = 270, origin=c]{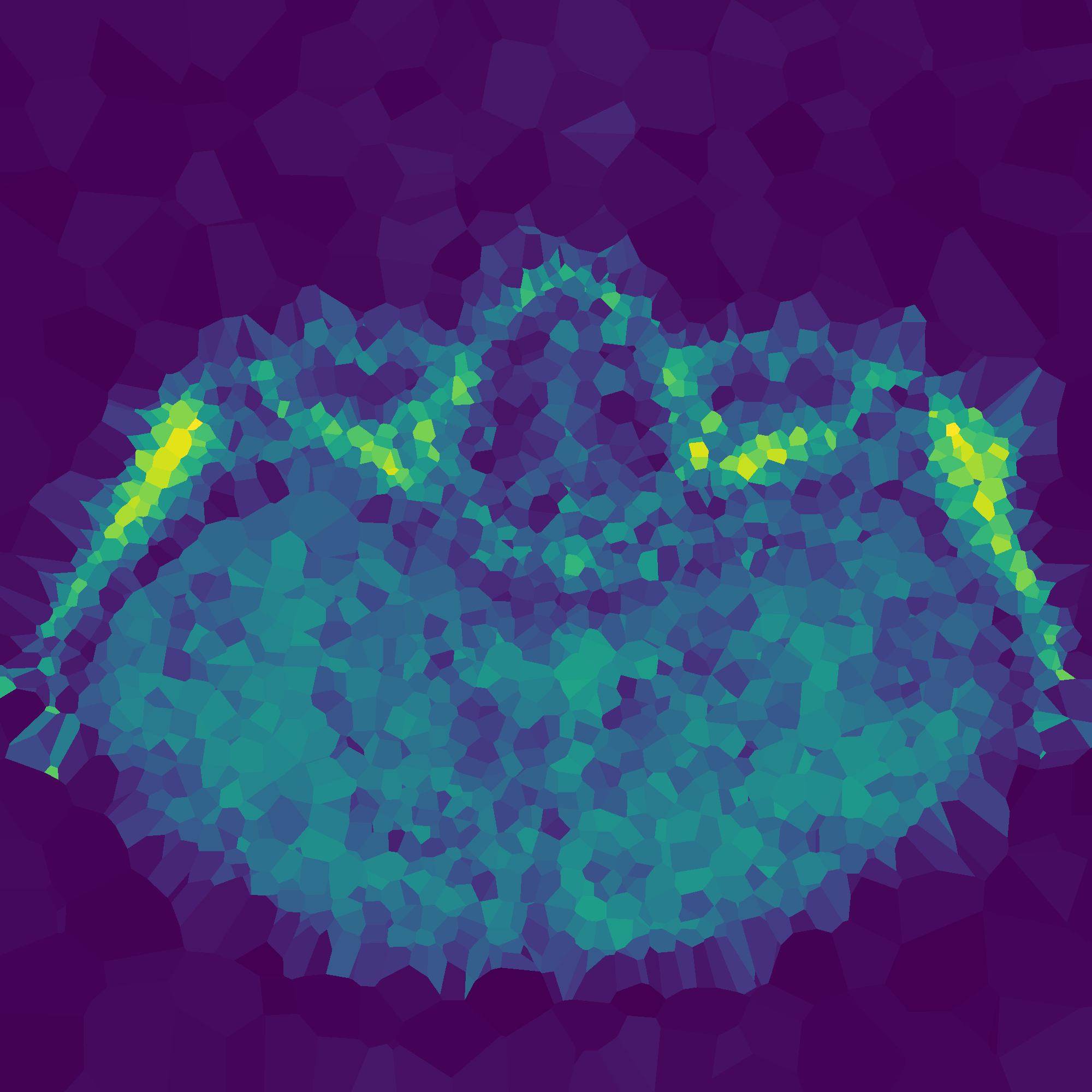} &
\includegraphics[ width=0.1\linewidth, angle = 270, origin=c]{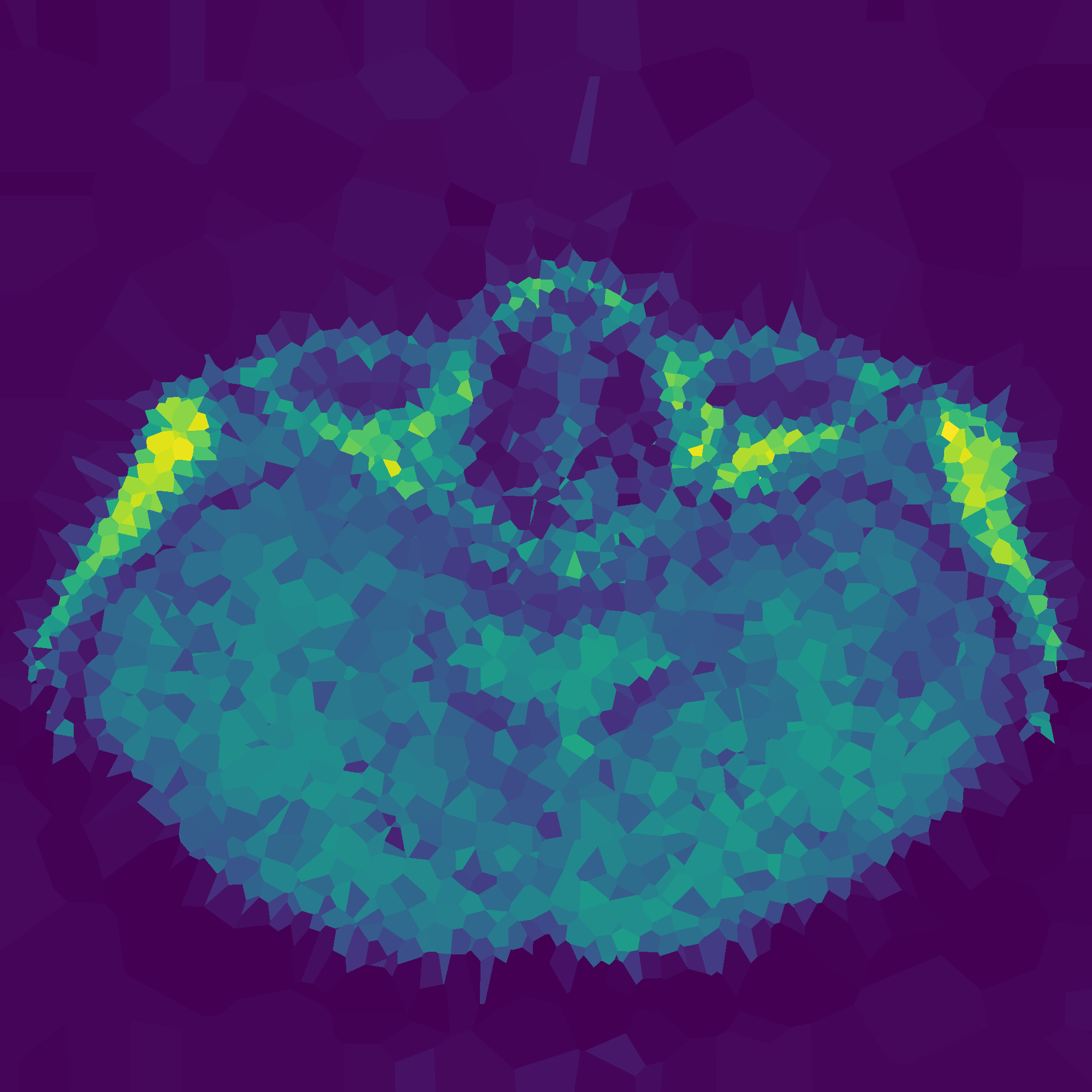} &
\includegraphics[ width=0.1\linewidth, angle = 270, origin=c]{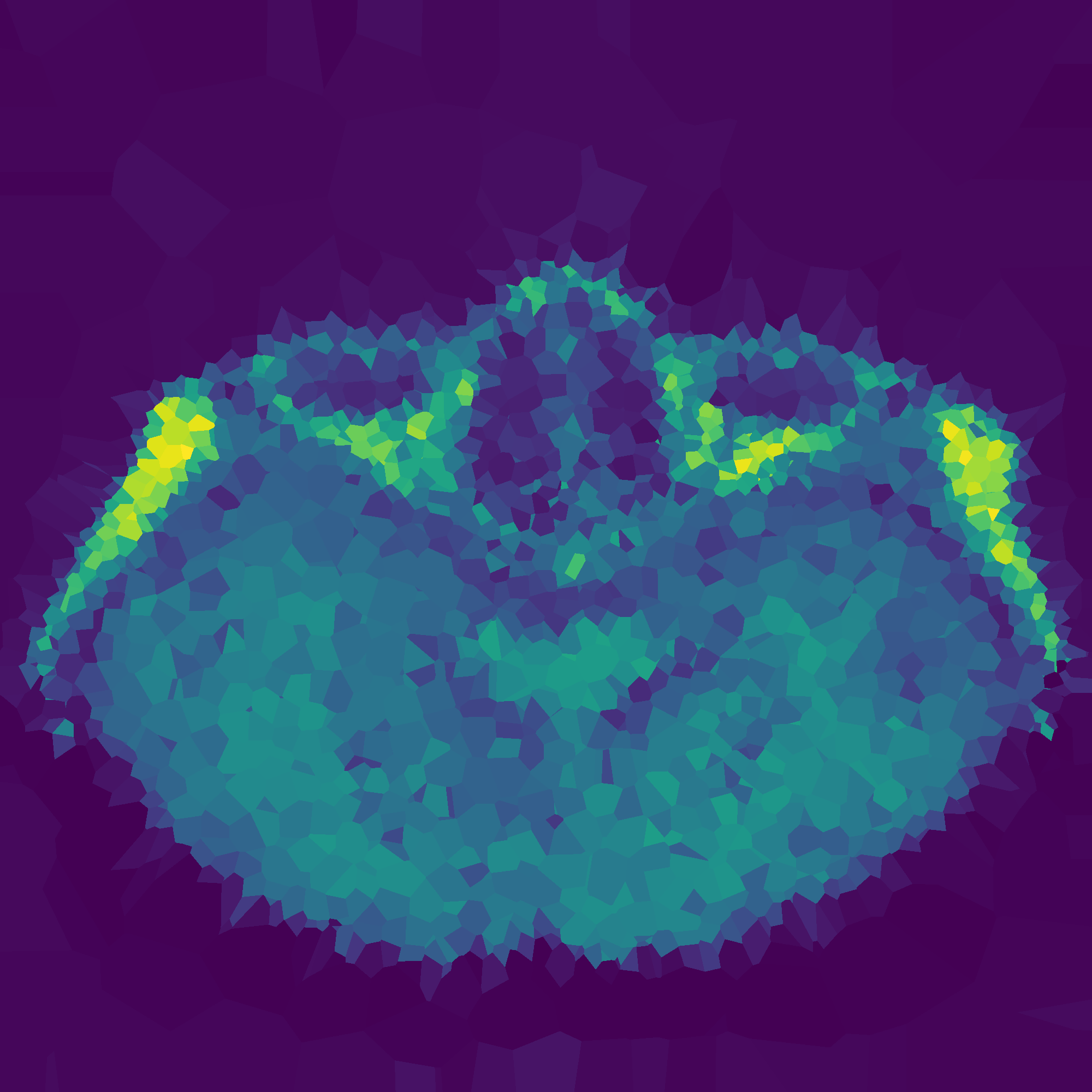} \\

%\addlinespace[4mm]
%\rotlabel{Row 3} &
%\includegraphics[width=0.09\linewidth]{example-image-a} &
%\includegraphics[width=0.09\linewidth]{example-image-a} &
%\includegraphics[width=0.09\linewidth]{example-image-a} &
%\includegraphics[width=0.09\linewidth]{example-image-a} &
%\includegraphics[width=0.09\linewidth]{example-image-a} &
%\includegraphics[width=0.09\linewidth]{example-image-a} &
%\includegraphics[width=0.09\linewidth]{example-image-a} &
%\includegraphics[width=0.09\linewidth]{example-image-a} &
%\includegraphics[width=0.09\linewidth]{example-image-a} \\
\end{tabular}
    \caption{Volume slice visualizations of models after reduction for the \emph{Gyroid smooth} (uniform) and \emph{CTHead} (gradient) datasets.}
    \label{fig:Visualizations}
\end{figure*}

\subsection{Data}
We evaluate our implementation across a wide variety of site distributions and underlying functions with both synthetic and real-world data. Figure~\ref{fig:data_slices} shows representative volume slices of evaluation inputs and Supplemental~\ref*{app:datasets} contains further description of how data was generated and all scenes are appended in unified format as Supplemental Material.

\paragraph*{Volumetric Scans}
To test the performance of our algorithm on real-world volumetric functions, we use data from the Stanford Volume Data archive~\cite{stanford_volume}, featuring CT scans of a human head and the Stanford Bunny, as well as an MR scan of a human head. The data is distributed as a dense voxel grid, which we downsample to $250,000$ sites using four different strategies: \texttt{uniform} (random uniform sampling), \texttt{importance} (sample proportional to voxel density), \texttt{gradient} (sample proportional to 3D Sobel gradient magnitudes) and \texttt{grid} (regular 3D lattice jittered to general position).

\paragraph*{Synthetic Dataset}
We generate a synthetic benchmark consisting of 17 controlled 3D datasets with 100k Voronoi sites each in a bounded, cubic volume. It was designed to systematically stress-test Voronoi decimation algorithms across decoupled combinations of spatial point densities and functional surface complexities, isolating algorithmic behaviour from real-world scan noise.

\paragraph*{Materials Science}
Polycrystalline materials naturally form Voronoi-like structures, making 3D Voronoi diagrams the natural representation for finite-element simulation and modelling~\cite{polycrystal}. Neper~\cite{neper} is a powerful software to construct and  run simulations on polycrystal structures. We generate five different 3D Voronoi models that emulate real-world material structures and compute the stress within the material to use as a function value for the reduction.

\subsection{Monte Carlo ISE}
Computing the exact ISE of a reduced point set is computationally infeasible for scenes with hundreds of thousands of cells, but can be effectively estimated without bias using Monte Carlo:
\begin{equation}
\hat{ISE} = \frac{Vol(\Omega)}{S}\sum_{i=1}^S \big(f(\mathbf{x}_i) - \hat{f}(\mathbf{x}_i)\big)^2,    
\end{equation}

with $S$ samples $\mathbf{x}_i$ uniformly sampled from the bounding box $\Omega$ of input Voronoi sites. The approximated functions can be efficiently evaluated using a kd-tree for nearest-neighbour lookups. For our evaluation we choose $S=5\cdot 10^6$ to ensure a converged result. For better comparability, we report $\hat{MSE}=\hat{ISE}/Vol(\Omega)$. To compute Area Under Curve (AUC), we run the Monte Carlo approximation every $5\%$ and linearly interpolate the values.

%\begin{table*}[htbp]
%\centering
%\caption{Real Astronomical Observations and Neper Microstructure Evaluation (90\% LOD Reduction). Values reported as Final MSE (Trajectory AUC Mean MSE) .}
%\label{tab:real_astro_neper}
%\begin{tabular}{lcccccc}
%\toprule
%Dataset Scene & Variance & Volume & FPS & Grid & Naive $L_2$ & \textbf{VoroDuce} \\
%\midrule
%sdss\_cosmic\_web & 1.030 (0.742) & 0.199 (0.050) & 0.395 (0.137) & 0.508 (0.151) & 0.903 (0.381) & \textbf{0.189 (0.042)} \\
%neper\_poisson\_100k & 0.082 (0.028) & 0.113 (0.028) & 0.067 (0.020) & 0.113 (0.033) & 0.071 (0.027) & \textbf{0.055 (0.015)} \\
%\midrule
%\textbf{Average} & 0.402 (0.265) & 0.144 (0.040) & 0.180 (0.063) & 0.244 (0.074) & 0.360 (0.147) & \textbf{0.117 (0.028)} \\
%\bottomrule
%\end{tabular}
%\end{table*}

%\subsection{Radiant Foam}

%Figure \ref{fig:Results} shows results for reducing primitive counts of models trained on scenes from the MipNeRF-360~\cite{MIPNERF360} (1 million cells each) and DTU~\cite{DTU} (250k cells each) datasets. We evaluate the visual error introduced in each step of the reduction using standard image quality metrics Peak Signal-to-Noise Ratio (PSNR), Structural Similarity Index Measure (SSIM) and Learned Perceptual Image Patch Similarity (LPIPS). Additionally, we plot the Monte Carlo estimate of the full scene ISE with $5\cdot 10^7$ samples. For the experiments, we use a collapse batch size $N=0.02$, meaning that at most 2\% of Voronoi sites are removed, before the Delaunay triangulation is recomputed.

\subsection{Results}
The resulting error metrics for all scenes and baselines are included in Table \ref{tab:combined_benchmarks} and corresponding ISE curves are plotted in Supplemental \ref*{app:ISE_curves}. Our method outperforms or matches baselines in almost all cases, both with fast and exact error computation. Inputs that use grid sampling are particularly challenging for our method, since a regular lattice contains many co-spherical points, representing a worst-case scenario for degenerate configurations in Delaunay triangulations. The instability of the underlying triangulation leads to sub-optimal results for our method. Exact volume computation outperforms the fast, approximate method in average, but only by a small margin. The biggest impact for the exact volume computation is shown on the more chaotic Microstructure dataset. This is also where Cignoni et al., the best performing baseline, places last among all tested methods.  Although very similar in structure to VoroDuce, Ward performs consistently poorly, showing the importance of considering both the geometry and topology of the Voronoi mesh. The sliced volume visualizations of resulting models in Figure~\ref{fig:Visualizations} show how VoroDuce more faithfully preserves fine details and overall structure. 
\definecolor{topone}{HTML}{9ECAE1}
\definecolor{toptwo}{HTML}{C6DBEF}
\definecolor{topthree}{HTML}{DEEBF7}
\newcommand{\firstplace}[1]{\cellcolor{topone}#1}
\newcommand{\secondplace}[1]{\cellcolor{toptwo}#1}
\newcommand{\thirdplace}[1]{\cellcolor{topthree}#1}
\begin{table*}[h]
\centering
\caption{Combined evaluation benchmarks for Stanford Volume Scans ($N=250,000$ cells $\rightarrow$ $37,500$ cells), Synthetic Volumetric Data ($N=100,000$ cells $\rightarrow$ $10,000$ cells), and Neper Polycrystal Microstructures ($N=100,000$ cells $\rightarrow$ $15,000$ cells). Each method has separate Final MSE and Trajectory AUC Mean MSE subcolumns, reported to 3 decimal places. Final MSE and AUC are ranked independently within each row; minima are in \textbf{bold}, and first, second, and third places are shaded blue (\colorbox{topone}{\textbf{1st}}, \colorbox{toptwo}{2nd}, and \colorbox{topthree}{3rd}).}
\label{tab:combined_benchmarks}
\setlength{\tabcolsep}{2.5pt}
\renewcommand{\arraystretch}{0.86}
\begin{adjustbox}{max totalsize={\textwidth}{0.78\textheight},center}
\begin{tabular}{ll*{8}{cc}}
\toprule
\multirow{2}{*}{Sampling} & \multirow{2}{*}{Scene} & \multicolumn{2}{c}{Neighbour} & \multicolumn{2}{c}{Volume} & \multicolumn{2}{c}{FPS} & \multicolumn{2}{c}{Grid} & \multicolumn{2}{c}{Cignoni et al.} & \multicolumn{2}{c}{Ward} & \multicolumn{2}{c}{Ours (Fast)} & \multicolumn{2}{c}{Ours (Exact)} \\
\cmidrule(lr){3-4} \cmidrule(lr){5-6} \cmidrule(lr){7-8} \cmidrule(lr){9-10} \cmidrule(lr){11-12} \cmidrule(lr){13-14} \cmidrule(lr){15-16} \cmidrule(lr){17-18}
& & MSE & AUC & MSE & AUC & MSE & AUC & MSE & AUC & MSE & AUC & MSE & AUC & MSE & AUC & MSE & AUC \\
\midrule
\multirow{3}{*}{uniform}
  & Head (CT) & 0.179 & 0.026 & 0.536 & 0.176 & 0.418 & 0.145 & 0.454 & 0.187 & \thirdplace{0.123} & \thirdplace{0.013} & 0.306 & 0.076 & \secondplace{0.067} & \secondplace{0.007} & \firstplace{\textbf{0.059}} & \firstplace{\textbf{0.006}} \\
  & Brain (MR) & 0.056 & 0.016 & 0.226 & 0.152 & 0.175 & 0.116 & 0.187 & 0.083 & \thirdplace{0.029} & \secondplace{0.004} & 0.138 & 0.076 & \secondplace{0.020} & \thirdplace{0.006} & \firstplace{\textbf{0.017}} & \firstplace{\textbf{0.002}} \\
  & bunny & 0.008 & 0.001 & 2.087 & 0.652 & 1.724 & 0.613 & 2.001 & 0.853 & \secondplace{0.003} & 0.001 & 6.706 & 6.459 & \secondplace{0.003} & 0.001 & \firstplace{\textbf{0.001}} & \firstplace{\textbf{0.000}} \\
\midrule
\multirow{3}{*}{importance}
  & Head (CT) & 0.501 & 0.098 & 1.363 & 0.270 & 0.351 & 0.108 & 0.416 & 0.147 & \thirdplace{0.220} & \thirdplace{0.036} & 0.440 & 0.144 & \secondplace{0.151} & \secondplace{0.024} & \firstplace{\textbf{0.145}} & \firstplace{\textbf{0.023}} \\
  & Brain (MR) & 0.161 & 0.146 & 0.959 & 0.298 & 0.160 & 0.148 & 0.182 & 0.067 & \thirdplace{0.092} & \secondplace{0.019} & 0.220 & 0.263 & \secondplace{0.054} & \thirdplace{0.036} & \firstplace{\textbf{0.050}} & \firstplace{\textbf{0.010}} \\
  & bunny & 0.001 & \firstplace{\textbf{0.000}} & 0.104 & 0.013 & 0.682 & 0.175 & 1.126 & 0.290 & 0.001 & 0.001 & 1.055 & 0.470 & 0.001 & \firstplace{\textbf{0.000}} & \firstplace{\textbf{0.000}} & \firstplace{\textbf{0.000}} \\
\midrule
\multirow{3}{*}{gradient}
  & Head (CT) & 0.770 & 0.197 & 1.003 & 0.311 & 0.361 & 0.121 & 0.439 & 0.160 & \thirdplace{0.306} & \thirdplace{0.081} & 0.439 & 0.151 & \secondplace{0.191} & \secondplace{0.052} & \firstplace{\textbf{0.188}} & \firstplace{\textbf{0.051}} \\
  & Brain (MR) & 0.210 & 0.115 & 0.763 & 0.547 & 0.173 & 0.172 & 0.193 & 0.082 & \thirdplace{0.113} & \secondplace{0.028} & 0.177 & 0.166 & \secondplace{0.074} & \thirdplace{0.047} & \firstplace{\textbf{0.070}} & \firstplace{\textbf{0.017}} \\
  & bunny & 8.896 & 1.884 & 2.265 & 0.861 & 2.441 & 0.931 & 2.668 & 1.066 & \thirdplace{1.262} & \thirdplace{0.292} & 5.941 & 1.743 & \secondplace{1.017} & \secondplace{0.217} & \firstplace{\textbf{0.986}} & \firstplace{\textbf{0.211}} \\
\midrule
\multirow{3}{*}{grid}
  & Head (CT) & 0.166 & 0.022 & 0.663 & 0.221 & 0.416 & 0.157 & 0.434 & 0.189 & \thirdplace{0.143} & \thirdplace{0.015} & 0.438 & 0.090 & \secondplace{0.128} & \secondplace{0.013} & \firstplace{\textbf{0.115}} & \firstplace{\textbf{0.011}} \\
  & Brain (MR) & 0.050 & 0.020 & 0.273 & 0.334 & 0.177 & 0.178 & 0.172 & 0.080 & \firstplace{\textbf{0.034}} & \firstplace{\textbf{0.004}} & 0.199 & 0.155 & \thirdplace{0.038} & \thirdplace{0.016} & \secondplace{0.036} & \firstplace{\textbf{0.004}} \\
  & bunny & 0.011 & 0.001 & 2.916 & 1.061 & 2.013 & 0.902 & 2.350 & 1.137 & \secondplace{0.004} & 0.001 & 5.136 & 4.211 & \firstplace{\textbf{0.003}} & \firstplace{\textbf{0.000}} & \secondplace{0.004} & \firstplace{\textbf{0.000}} \\
\midrule
\multicolumn{2}{l}{\textbf{Average}} & 0.917 & 0.210 & 1.096 & 0.408 & 0.758 & 0.314 & 0.885 & 0.362 & \thirdplace{0.194} & \thirdplace{0.041} & 1.766 & 1.167 & \secondplace{0.146} & \secondplace{0.035} & \firstplace{\textbf{0.139}} & \firstplace{\textbf{0.028}} \\
\bottomrule
\addlinespace[1.5ex]
\toprule
\multirow{2}{*}{Sampling} & \multirow{2}{*}{Scene} & \multicolumn{2}{c}{Neighbour} & \multicolumn{2}{c}{Volume} & \multicolumn{2}{c}{FPS} & \multicolumn{2}{c}{Grid} & \multicolumn{2}{c}{Cignoni et al.} & \multicolumn{2}{c}{Ward} & \multicolumn{2}{c}{Ours (Fast)} & \multicolumn{2}{c}{Ours (Exact)} \\
\cmidrule(lr){3-4} \cmidrule(lr){5-6} \cmidrule(lr){7-8} \cmidrule(lr){9-10} \cmidrule(lr){11-12} \cmidrule(lr){13-14} \cmidrule(lr){15-16} \cmidrule(lr){17-18}
& & MSE & AUC & MSE & AUC & MSE & AUC & MSE & AUC & MSE & AUC & MSE & AUC & MSE & AUC & MSE & AUC \\
\midrule
\multirow{8}{*}{uniform}
  & Checkerboard & 39.779 & 13.804 & 31.772 & 10.226 & \thirdplace{25.675} & 9.555 & 29.533 & 11.896 & 32.064 & \thirdplace{7.811} & 44.759 & 23.411 & \secondplace{21.551} & \secondplace{5.956} & \firstplace{\textbf{20.731}} & \firstplace{\textbf{5.788}} \\
  & Fractal & 0.404 & 0.105 & 0.113 & 0.024 & \firstplace{\textbf{0.058}} & \thirdplace{0.017} & 0.086 & 0.027 & 0.070 & 0.019 & 0.065 & 0.023 & \firstplace{\textbf{0.058}} & \firstplace{\textbf{0.014}} & \firstplace{\textbf{0.058}} & \firstplace{\textbf{0.014}} \\
  & Gyroid smooth & 6.487 & 1.893 & 1.897 & 0.392 & \firstplace{\textbf{0.936}} & \thirdplace{0.275} & 1.435 & 0.440 & 1.237 & 0.333 & 1.208 & 0.394 & \secondplace{1.131} & \firstplace{\textbf{0.266}} & \thirdplace{1.139} & \secondplace{0.268} \\
  & Gyroid solid & 16.985 & 1.906 & 13.054 & 3.731 & 9.123 & 3.272 & 11.483 & 4.273 & \thirdplace{8.459} & \thirdplace{0.981} & 39.759 & 19.359 & \secondplace{7.759} & \secondplace{0.767} & \firstplace{\textbf{6.467}} & \firstplace{\textbf{0.666}} \\
  & Sine high & 8.809 & 3.478 & 4.410 & 1.147 & \thirdplace{2.901} & \thirdplace{0.905} & 3.897 & 1.304 & 3.582 & 0.968 & 4.746 & 1.416 & \secondplace{2.492} & \secondplace{0.732} & \firstplace{\textbf{2.408}} & \firstplace{\textbf{0.723}} \\
  & Sine low & 0.750 & 0.263 & 0.186 & 0.037 & \thirdplace{0.086} & \thirdplace{0.025} & 0.137 & 0.041 & 0.122 & 0.032 & 0.091 & 0.031 & \firstplace{\textbf{0.074}} & \firstplace{\textbf{0.019}} & \secondplace{0.075} & \firstplace{\textbf{0.019}} \\
  & Sphere & \firstplace{\textbf{0.000}} & \firstplace{\textbf{0.000}} & 3.094 & 0.830 & 1.817 & 0.663 & 2.454 & 0.856 & \firstplace{\textbf{0.000}} & \firstplace{\textbf{0.000}} & 7.802 & 4.007 & \firstplace{\textbf{0.000}} & \firstplace{\textbf{0.000}} & \firstplace{\textbf{0.000}} & \firstplace{\textbf{0.000}} \\
  & Torus & \firstplace{\textbf{0.000}} & \firstplace{\textbf{0.000}} & 2.386 & 0.651 & 1.431 & 0.514 & 2.055 & 0.666 & \firstplace{\textbf{0.000}} & \firstplace{\textbf{0.000}} & 5.961 & 3.451 & \firstplace{\textbf{0.000}} & \firstplace{\textbf{0.000}} & \firstplace{\textbf{0.000}} & \firstplace{\textbf{0.000}} \\
\midrule
\multirow{3}{*}{gaussian}
  & Fractal & 1.098 & 0.442 & 0.227 & 0.032 & \thirdplace{0.041} & \thirdplace{0.008} & 0.063 & 0.012 & 0.076 & 0.009 & 0.055 & 0.014 & \firstplace{\textbf{0.030}} & \firstplace{\textbf{0.004}} & \firstplace{\textbf{0.030}} & \firstplace{\textbf{0.004}} \\
  & Sine low & 4.100 & 1.114 & 0.259 & 0.044 & \thirdplace{0.064} & \thirdplace{0.012} & 0.102 & 0.020 & 0.170 & 0.021 & 0.088 & 0.021 & \firstplace{\textbf{0.044}} & \firstplace{\textbf{0.006}} & \firstplace{\textbf{0.044}} & \firstplace{\textbf{0.006}} \\
  & Torus & 0.018 & 0.001 & 6.719 & 1.140 & 1.453 & 0.432 & 2.162 & 0.551 & \thirdplace{0.012} & 0.001 & 5.785 & 3.564 & \firstplace{\textbf{0.008}} & \firstplace{\textbf{0.000}} & \secondplace{0.011} & 0.001 \\
\midrule
\multirow{3}{*}{clustered}
  & Fractal & 1.034 & 0.462 & 0.105 & 0.018 & \thirdplace{0.026} & \thirdplace{0.005} & 0.043 & 0.008 & 0.048 & 0.006 & 0.043 & 0.010 & \firstplace{\textbf{0.019}} & \firstplace{\textbf{0.003}} & \firstplace{\textbf{0.019}} & \firstplace{\textbf{0.003}} \\
  & Sine low & 4.039 & 1.808 & 0.214 & 0.032 & \thirdplace{0.040} & \thirdplace{0.007} & 0.071 & 0.011 & 0.084 & 0.012 & 0.067 & 0.017 & \firstplace{\textbf{0.026}} & \firstplace{\textbf{0.004}} & \firstplace{\textbf{0.026}} & \firstplace{\textbf{0.004}} \\
  & Torus & \firstplace{\textbf{0.000}} & \firstplace{\textbf{0.000}} & 2.242 & 0.443 & 1.247 & 0.294 & 1.264 & 0.346 & \firstplace{\textbf{0.000}} & \firstplace{\textbf{0.000}} & 20.160 & 7.564 & \firstplace{\textbf{0.000}} & \firstplace{\textbf{0.000}} & \firstplace{\textbf{0.000}} & \firstplace{\textbf{0.000}} \\
\midrule
\multirow{3}{*}{exponential}
  & Fractal & 0.744 & 0.222 & 0.294 & 0.042 & \thirdplace{0.056} & 0.014 & 0.084 & 0.021 & 0.080 & 0.014 & 0.064 & 0.020 & \firstplace{\textbf{0.051}} & \firstplace{\textbf{0.010}} & \firstplace{\textbf{0.051}} & \firstplace{\textbf{0.010}} \\
  & Sine low & 1.756 & 0.420 & 0.701 & 0.095 & \thirdplace{0.083} & \thirdplace{0.020} & 0.129 & 0.032 & 0.144 & 0.025 & 0.094 & 0.028 & \secondplace{0.068} & \firstplace{\textbf{0.013}} & \firstplace{\textbf{0.067}} & \firstplace{\textbf{0.013}} \\
  & Torus & \firstplace{\textbf{0.000}} & \firstplace{\textbf{0.000}} & 3.014 & 0.584 & 1.319 & 0.398 & 1.953 & 0.518 & \firstplace{\textbf{0.000}} & \firstplace{\textbf{0.000}} & 5.428 & 2.310 & \firstplace{\textbf{0.000}} & \firstplace{\textbf{0.000}} & \firstplace{\textbf{0.000}} & \firstplace{\textbf{0.000}} \\
\midrule
\multicolumn{2}{l}{\textbf{Average}} & 5.059 & 1.525 & 4.158 & 1.145 & 2.727 & 0.966 & 3.350 & 1.236 & \thirdplace{2.715} & \thirdplace{0.602} & 8.010 & 3.861 & \secondplace{1.960} & \secondplace{0.459} & \firstplace{\textbf{1.831}} & \firstplace{\textbf{0.442}} \\
\bottomrule
\addlinespace[1.5ex]
\toprule
\multicolumn{2}{l}{\multirow{2}{*}{Microstructure Morphology}} & \multicolumn{2}{c}{Neighbour} & \multicolumn{2}{c}{Volume} & \multicolumn{2}{c}{FPS} & \multicolumn{2}{c}{Grid} & \multicolumn{2}{c}{Cignoni et al.} & \multicolumn{2}{c}{Ward} & \multicolumn{2}{c}{Ours (Fast)} & \multicolumn{2}{c}{Ours (Exact)} \\
\cmidrule(lr){3-4} \cmidrule(lr){5-6} \cmidrule(lr){7-8} \cmidrule(lr){9-10} \cmidrule(lr){11-12} \cmidrule(lr){13-14} \cmidrule(lr){15-16} \cmidrule(lr){17-18}
& & MSE & AUC & MSE & AUC & MSE & AUC & MSE & AUC & MSE & AUC & MSE & AUC & MSE & AUC & MSE & AUC \\
\midrule
\multicolumn{2}{l}{Columnar Additive (Elongated)} & 0.527 & 0.243 & \thirdplace{0.406} & \thirdplace{0.173} & 0.425 & 0.190 & 0.430 & 0.203 & 0.534 & 0.178 & 0.507 & 0.267 & \secondplace{0.350} & \secondplace{0.133} & \firstplace{\textbf{0.299}} & \firstplace{\textbf{0.123}} \\
\multicolumn{2}{l}{Dual-Phase Steel (Bimodal)} & 3.139 & 1.430 & \thirdplace{2.502} & 1.433 & 2.562 & 1.122 & 2.727 & 1.217 & 3.100 & \thirdplace{1.083} & 2.851 & 1.393 & \secondplace{2.011} & \secondplace{0.770} & \firstplace{\textbf{1.812}} & \firstplace{\textbf{0.740}} \\
\multicolumn{2}{l}{Equiaxed FCC (Isotropic)} & 1.685 & 0.653 & \thirdplace{1.084} & \thirdplace{0.460} & 1.136 & 0.503 & 1.150 & 0.541 & 1.836 & 0.482 & 2.069 & 0.794 & \secondplace{0.973} & \secondplace{0.312} & \firstplace{\textbf{0.776}} & \firstplace{\textbf{0.270}} \\
\multicolumn{2}{l}{Hexagonal Titanium (HCP)} & 12.773 & 5.086 & \thirdplace{8.263} & \thirdplace{3.555} & 8.781 & 3.911 & 8.835 & 4.159 & 14.626 & 3.978 & 14.307 & 6.183 & \secondplace{7.814} & \secondplace{2.530} & \firstplace{\textbf{6.231}} & \firstplace{\textbf{2.157}} \\
\multicolumn{2}{l}{Rolled Sheet (Anisotropic)} & 1.461 & 0.670 & \thirdplace{1.105} & \thirdplace{0.473} & 1.172 & 0.521 & 1.172 & 0.556 & 1.471 & 0.487 & 1.478 & 0.743 & \secondplace{0.968} & \secondplace{0.363} & \firstplace{\textbf{0.820}} & \firstplace{\textbf{0.333}} \\
\midrule
\multicolumn{2}{l}{\textbf{Average}} & 3.917 & 1.616 & \thirdplace{2.672} & \thirdplace{1.219} & 2.815 & 1.249 & 2.863 & 1.335 & 4.313 & 1.241 & 4.242 & 1.876 & \secondplace{2.423} & \secondplace{0.821} & \firstplace{\textbf{1.988}} & \firstplace{\textbf{0.725}} \\
\bottomrule
\end{tabular}
\end{adjustbox}
\end{table*}

\subsection{Scalability}
To evaluate the scalability of our algorithm in practice, we run decimation on the MRBrain scene sampled uniformly using 10k, 100k and 1 million input sites at $\beta=0.02$. The results in Table \ref{tab:combined_runtime_quality} show that runtimes scale roughly linearly with input cell size due to the sparse cost-recomputation. We find that for larger scenes, our fast cost computation performs is roughly $10\times$ faster than exact computation, resulting in up to $5\times$ overall faster reduction times, with only a minor increase in error metrics. %Note that the exact method is parallelized across 24 CPU cores and single-thread performance for the cost computation is roughly $20\times$ slower. 
More optimized implementations for the Delaunay triangulation may increase this difference even further. Note also that $\beta$ was chosen conservatively for this test to match the evaluation. Higher values of $\beta$ accelerate both Ours (Exact) and Ours (Fast) by a factor of up to $5\times$ with minor quality impact (cf. Supplemental \ref*{app:B_ablation}), especially benefitting the fast variant.
The Spearman correlation on the initial edge ordering validates the effectiveness of the volume approximations described in Section~\ref{sec:approximation}, showing only minimal difference in ranking compared to exact computation. In comparison, the \emph{Ward} baseline achieves poor correlation and significantly higher error, since it does not account for changes to neighbour cells. 

\begin{table*}[htbp]
\centering
\caption{Runtime, ranking agreement, and output quality at 90\% reduction on MRBrain (uniform). Time (total) represents total runtime without evaluation, while Time (cost) only counts the cumulative runtime of cost computations. $\rho$ is Spearman's rank correlation with the exact initial edge ordering. Experiments were run on a rig with RTX 4090 and AMD Ryzen 9 3900X.}
\label{tab:combined_runtime_quality}
\resizebox{\textwidth}{!}{%
\begin{tabular}{lrrrrrrrrrrrrrr}
\toprule
& \multicolumn{5}{c}{Ward}
& \multicolumn{5}{c}{Ours (Fast)}
& \multicolumn{4}{c}{Ours (Exact)} \\
\cmidrule(lr){2-6}\cmidrule(lr){7-11}\cmidrule(lr){12-15}

Input & {Time (total)} & {Time (cost)} & {Final MSE} & {AUC} & {$\rho$} & {Time (total)} & {Time (cost)} & {Final MSE} & {AUC} & {$\rho$} & {Time (total)} & {Time (cost)} & {Final MSE} & {AUC} \\
 & {(s)} & {(s)} &  &  &  & {(s)} & {(s)} &  &  &  & {(s)} & {(s)} &  &  \\
\midrule
10k & 3.9 & 0.1 & 0.707539 & 0.242835 & 0.657 & 7.7 & 4.0 & 0.125878 & 0.014627 & 0.948 & 17.6 & 16.0 & 0.105642 & 0.011521 \\
100k & 20.3 & 0.3 & 0.235609 & 0.090463 & 0.735 & 50.2 & 33.8 & 0.032492 & 0.003449 & 0.967 & 284.3 & 268.8 & 0.027663 & 0.002944 \\
1M & 340.7 & 1.2 & 0.065323 & 0.022660 & 0.779 & 428.2 & 140.8 & 0.008805 & 0.001105 & 0.970 & 1586.8 & 1516.0 & 0.008279 & 0.001060 \\
\bottomrule
\end{tabular}%
}
\end{table*}
\section{Discussion and Outlook}
In this paper, we introduced VoroDuce, the first GPU-accelerated, volume-aware reduction algorithm designed specifically for 3D Voronoi-based geometry representations. By treating the decimation process as a spatially-aware optimization problem, our method preserves structures and details of models more faithfully than traditional mesh or point-cloud downsampling techniques.
\paragraph*{Algorithmic Trade-offs and Approximations}
A central point of discussion is the deliberate trade-off between computational scalability and exact geometric precision for the fast variant. To avoid the computational bottleneck of exactly computing thousands of convex cell intersections per point removal, we implemented a novel approximation strategy that models intersection volumes as truncated prismatoids. While this formulation introduces minor inaccuracies in specific edge cases, such as when faces have disparate vertex counts or when intersections must be approximated as pyramids, it also enables massive parallelization on the GPU. Our evaluations on synthetic data, volume scans and Material Science simulations demonstrate that these localized approximations do not cascade into macroscopic structural failures, allowing us to maintain high accuracy.  

\paragraph*{Limitations}
VoroDuce always places the merged site at the midpoint of the collapsed Delaunay edge: Unlike quadric-based surface simplification, we do not optimize the position of the new site. Experiments on optimizing the site position (cf. Supplemental \ref*{app:merge}) have failed to produce statistically significant improvements to quality.

The algorithm assumes sites in general position. Near-degenerate inputs, such as the jittered regular lattices in our benchmark, produce unstable triangulations and many flips. Further, the algorithm is greedy and provides no guarantee of optimality.

\paragraph*{Future Work} 
While VoroDuce provides a robust framework for Voronoi reduction, several exciting avenues remain for future exploration:
\begin{enumerate}
    \item \textbf{Other Distance Metrics}: An interesting future direction would be generalizing the reduction algorithm to other distance metrics such as power or Manhattan distance.

    \item \textbf{GPU Optimization}: Enabling massive parallelization of the edge cost computation on the GPU accelerates the computations enormously, but further low-level optimization like kernel fusing and more optimized use of shared memory and registers could push the performance even further. 
    
    \item \textbf{Finite volume simulations}: Many finite volume methods in Astronomy~\cite{AstronomySim} or Fluid Dynamics~\cite{TESS} are based on 3D Voronoi meshes. Efficient Voronoi reduction could be used to accelerate these methods.

   % \item \textbf{Dynamic Level-of-Detail}: Our reduction algorithm could be used as the backbone for a full dynamic level-of-detail rendering system for radiance fields using Radiant Foam \cite{radfoam}. 
\end{enumerate}

%-------------------------------------------------------------------------
% bibtex
\bibliographystyle{eg-alpha-doi}
\pagebreak
\bibliography{egbibsample}       

\newcommand{\etalchar}[1]{$^{#1}$}
\begin{thebibliography}{\uppercase{MWLB25}}

\bibitem[ABE{\etalchar{*}}20]{vorocrust}
\textsc{Abdelkader A., Bajaj C.~L., Ebeida M.~S., Mahmoud A.~H., Mitchell S.~A., Owens J.~D., Rushdi A.~A.}:
\newblock {VoroCrust}: {Voronoi} meshing without clipping.
\newblock \emph{ACM Transactions on Graphics 39}, 3 (2020), 1--16.

\bibitem[AP18]{GaussianMixture3}
\textsc{Assa A., Plataniotis K.~N.}:
\newblock {Wasserstein}-distance-based {Gaussian} mixture reduction.
\newblock \emph{IEEE Signal Processing Letters 25}, 10 (2018), 1465--1469.

\bibitem[CCM{\etalchar{*}}00]{tetmesh2}
\textsc{Cignoni P., Costanza D., Montani C., Rocchini C., Scopigno R.}:
\newblock Simplification of tetrahedral meshes with accurate error evaluation.
\newblock In \emph{IEEE Visualization} (2000), IEEE, pp.~85--92.

\bibitem[CM02]{tetmesh4}
\textsc{Chopra P., Meyer J.}:
\newblock {TetFusion}: an algorithm for rapid tetrahedral mesh simplification.
\newblock In \emph{IEEE Visualization} (2002), IEEE, pp.~133--140.

\bibitem[Coo86]{PoissonDisk}
\textsc{Cook R.~L.}:
\newblock Stochastic sampling in computer graphics.
\newblock \emph{ACM Transactions on Graphics 5}, 1 (1986), 51--72.

\bibitem[DDI06]{AdaptiveThinning2}
\textsc{Demaret L., Dyn N., Iske A.}:
\newblock Image compression by linear splines over adaptive triangulations.
\newblock \emph{Signal Processing 86}, 7 (2006), 1604--1616.

\bibitem[DFG99]{CVT}
\textsc{Du Q., Faber V., Gunzburger M.}:
\newblock Centroidal {Voronoi} tessellations: Applications and algorithms.
\newblock \emph{SIAM Review 41}, 4 (1999), 637--676.

\bibitem[DFI02]{AdaptiveThinning1}
\textsc{Dyn N., Floater M.~S., Iske A.}:
\newblock Adaptive thinning for bivariate scattered data.
\newblock \emph{Journal of Computational and Applied Mathematics 145}, 2 (2002), 505--517.

\bibitem[DLA19]{SNet}
\textsc{Dovrat O., Lang I., Avidan S.}:
\newblock Learning to sample.
\newblock In \emph{IEEE/CVF Conference on Computer Vision and Pattern Recognition} (2019), IEEE, pp.~2755--2764.

\bibitem[DM11]{TESS}
\textsc{Duffell P.~C., MacFadyen A.~I.}:
\newblock Tess: a relativistic hydrodynamics code on a moving voronoi mesh.
\newblock \emph{The Astrophysical Journal Supplement Series 197}, 2 (2011), 15.

\bibitem[FC89]{MonteCarlo2}
\textsc{Fok D.~S., Crevier D.}:
\newblock Volume estimation by {Monte Carlo} methods.
\newblock \emph{Journal of Statistical Computation and Simulation 31}, 4 (1989), 223--235.

\bibitem[FSBP14]{Material2}
\textsc{Falco S., Siegkas P., Barbieri E., Petrinic N.}:
\newblock A new method for the generation of arbitrarily shaped {3D} random polycrystalline domains.
\newblock \emph{Computational Mechanics 54}, 6 (2014), 1447--1460.

\bibitem[GH97]{QEM}
\textsc{Garland M., Heckbert P.~S.}:
\newblock Surface simplification using quadric error metrics.
\newblock In \emph{SIGGRAPH Conference Proceedings} (1997), pp.~209--216.

\bibitem[GM15]{MonteCarlo1}
\textsc{Ge C., Ma F.}:
\newblock A fast and practical method to estimate volumes of convex polytopes.
\newblock In \emph{International Workshop on Frontiers in Algorithmics} (2015), Springer, pp.~52--65.

\bibitem[Gon85]{FPS}
\textsc{Gonzalez T.~F.}:
\newblock Clustering to minimize the maximum intercluster distance.
\newblock \emph{Theoretical Computer Science 38} (1985), 293--306.

\bibitem[GRV{\etalchar{*}}26]{powerfoam}
\textsc{Govindarajan S., Rebain D., Verbin D., Yi K.~M., Prabhu A., Tagliasacchi A.}:
\newblock {Power Foam}: Unifying real-time differentiable ray tracing and rasterization.
\newblock \emph{arXiv preprint arXiv:2604.24994} (2026).

\bibitem[GRYT25]{radfoam}
\textsc{Govindarajan S., Rebain D., Yi K.~M., Tagliasacchi A.}:
\newblock {Radiant Foam}: Real-time differentiable ray tracing.
\newblock In \emph{IEEE/CVF International Conference on Computer Vision} (2025), pp.~4135--4145.

\bibitem[GZZ21]{PointCloud3}
\textsc{Gong M., Zhang Z., Zeng D.}:
\newblock A new simplification algorithm for scattered point clouds with feature preservation.
\newblock \emph{Symmetry 13}, 3 (2021), 399.

\bibitem[Hop96]{Hoppe1996}
\textsc{Hoppe H.}:
\newblock Progressive meshes.
\newblock In \emph{Proceedings of the 23rd Annual Conference on Computer Graphics and Interactive Techniques} (New York, NY, USA, 1996), SIGGRAPH '96, Association for Computing Machinery, pp.~99--108.

\bibitem[KK94]{Material1}
\textsc{Kumar S., Kurtz S.~K.}:
\newblock Simulation of material microstructure using a {3D} {Voronoi} tesselation: Calculation of effective thermal expansion coefficient of polycrystalline materials.
\newblock \emph{Acta Metallurgica et Materialia 42}, 12 (1994), 3917--3927.

\bibitem[Lev00]{stanford_volume}
\textsc{Levoy M.}:
\newblock The {Stanford} volume data archive.
\newblock \url{https://graphics.stanford.edu/data/voldata/}, 2000.
\newblock Stanford Computer Graphics Laboratory. Accessed 2026-09-30.

\bibitem[Llo82]{Lloyd}
\textsc{Lloyd S.}:
\newblock Least squares quantization in {PCM}.
\newblock \emph{IEEE Transactions on Information Theory 28}, 2 (1982), 129--137.

\bibitem[LMA20]{SampleNet}
\textsc{Lang I., Manor A., Avidan S.}:
\newblock {SampleNet}: Differentiable point cloud sampling.
\newblock In \emph{IEEE/CVF Conference on Computer Vision and Pattern Recognition} (2020), IEEE, pp.~7575--7585.

\bibitem[LT98]{Lindstrom1998}
\textsc{Lindstrom P., Turk G.}:
\newblock Fast and memory efficient polygonal simplification.
\newblock In \emph{Proceedings Visualization'98 (Cat. No. 98CB36276)} (1998), IEEE, pp.~279--286.

\bibitem[MOG11]{ShapeAnalysis}
\textsc{M{\'e}rigot Q., Ovsjanikov M., Guibas L.~J.}:
\newblock {Voronoi}-based curvature and feature estimation from point clouds.
\newblock \emph{IEEE Transactions on Visualization and Computer Graphics 17}, 6 (2011), 743--756.

\bibitem[MP78]{DualSpace}
\textsc{Muller D.~E., Preparata F.~P.}:
\newblock Finding the intersection of two convex polyhedra.
\newblock \emph{Theoretical Computer Science 7}, 2 (1978), 217--236.

\bibitem[MWLB25]{Blackbox}
\textsc{Maderbacher B., Windisch F., Larrauri A., Bloem R.}:
\newblock Synthesis of controllers for continuous blackbox systems.
\newblock In \emph{Verification, Model Checking, and Abstract Interpretation} (Cham, 2025), Shankaranarayanan K., Sankaranarayanan S., Trivedi A., (Eds.), Springer Nature Switzerland, pp.~137--159.

\bibitem[{\"O}Y19]{Architecture}
\textsc{{\"O}ncel A.~C., Yaman U.}:
\newblock Generation of optimized {Voronoi} based interior structures for improved mechanical properties.
\newblock \emph{Procedia Manufacturing 38} (2019), 42--51.

\bibitem[PGK02]{PointSimplification}
\textsc{Pauly M., Gross M., Kobbelt L.~P.}:
\newblock Efficient simplification of point-sampled surfaces.
\newblock In \emph{IEEE Visualization} (2002), IEEE, pp.~163--170.

\bibitem[QK22]{neper}
\textsc{Quey R., Kasemer M.}:
\newblock The {Neper}/{FEPX} project: free/open-source polycrystal generation, deformation simulation, and post-processing.
\newblock In \emph{IOP Conference Series: Materials Science and Engineering} (2022), vol.~1249, IOP Publishing, p.~012021.

\bibitem[QR18]{polycrystal}
\textsc{Quey R., Renversade L.}:
\newblock Optimal polyhedral description of {3D} polycrystals: Method and application to statistical and synchrotron {X-ray} diffraction data.
\newblock \emph{Computer Methods in Applied Mechanics and Engineering 330} (2018), 308--333.

\bibitem[Run07]{GaussianMixture2}
\textsc{Runnalls A.~R.}:
\newblock {Kullback-Leibler} approach to {Gaussian} mixture reduction.
\newblock \emph{IEEE Transactions on Aerospace and Electronic Systems 43}, 3 (2007), 989--999.

\bibitem[SG98]{tetmesh3}
\textsc{Staadt O.~G., Gross M.~H.}:
\newblock Progressive tetrahedralizations.
\newblock In \emph{IEEE Visualization} (1998), IEEE, pp.~397--402.

\bibitem[SH74]{SutherlandHodgman}
\textsc{Sutherland I.~E., Hodgman G.~W.}:
\newblock Reentrant polygon clipping.
\newblock \emph{Communications of the ACM 17}, 1 (1974), 32--42.

\bibitem[Spr10]{AstronomySim}
\textsc{Springel V.}:
\newblock E pur si muove: Galilean-invariant cosmological hydrodynamical simulations on a moving mesh.
\newblock \emph{Monthly Notices of the Royal Astronomical Society 401}, 2 (2010), 791--851.

\bibitem[SZW{\etalchar{*}}22]{PointCloud2}
\textsc{Sun Y., Zhang S., Wang T., Lou F., Ma J., Wang C., Gui C.}:
\newblock An improved spatial point cloud simplification algorithm.
\newblock \emph{Neural Computing and Applications 34}, 15 (2022).

\bibitem[TCC24]{BioMed}
\textsc{Todd L., Chin M.~H., Coppens M.-O.}:
\newblock Two conjectures on {3D} {Voronoi} structures: a toolkit with biomedical case studies.
\newblock \emph{Molecular Systems Design \& Engineering 9}, 9 (2024), 912--919.

\bibitem[THJ99]{tetmesh1}
\textsc{Trotts I.~J., Hamann B., Joy K.~I.}:
\newblock Simplification of tetrahedral meshes with error bounds.
\newblock \emph{IEEE Transactions on Visualization and Computer Graphics 5}, 3 (1999), 224--237.

\bibitem[vdWS09]{AstroPhysics}
\textsc{van~de Weygaert R., Schaap W.}:
\newblock The cosmic web: geometric analysis.
\newblock In \emph{Data Analysis in Cosmology}. Springer, 2009, pp.~291--413.

\bibitem[WJ63]{Ward}
\textsc{Ward~Jr J.~H.}:
\newblock Hierarchical grouping to optimize an objective function.
\newblock \emph{Journal of the American Statistical Association 58}, 301 (1963), 236--244.

\bibitem[WM03]{GaussianMixture1}
\textsc{Williams J.~L., Maybeck P.~S.}:
\newblock Cost-function-based {Gaussian} mixture reduction for target tracking.
\newblock In \emph{International Conference on Information Fusion} (2003), vol.~2, International Society of Information Fusion Cairns, Australia, pp.~1047--1054.

\bibitem[WZPB23]{APES}
\textsc{Wu C., Zheng J., Pfrommer J., Beyerer J.}:
\newblock Attention-based point cloud edge sampling.
\newblock In \emph{IEEE/CVF Conference on Computer Vision and Pattern Recognition} (2023), IEEE, pp.~5333--5343.

\bibitem[ZQW{\etalchar{*}}19]{PointCloud1}
\textsc{Zhang K., Qiao S., Wang X., Yang Y., Zhang Y.}:
\newblock Feature-preserved point cloud simplification based on natural quadric shape models.
\newblock \emph{Applied Sciences 9}, 10 (2019), 2130.

\bibitem[ZWBS17]{CollisionAvoidance}
\textsc{Zhou D., Wang Z., Bandyopadhyay S., Schwager M.}:
\newblock Fast, on-line collision avoidance for dynamic vehicles using buffered {Voronoi} cells.
\newblock \emph{IEEE Robotics and Automation Letters 2}, 2 (2017), 1047--1054.

\end{thebibliography}

% biblatex with biber
% \printbibliography                

%-------------------------------------------------------------------------
%Color tables are no longer required for purely electronic publications.
\newpage

\newcommand{\secondtitle}[1]{%
 \begin{center}
  \twocolumn[
    {\LARGE\bfseries #1\par}
    \vspace{1.5em}]
  \end{center}
}
%\setcounter{section}{0}
%\renewcommand{\thesection}{\Alph{section}}
%\renewcommand{\thesection}{\thesection.\arabic{section}}
%\clearpage
%\setcounter{page}{1}
\setcounter{section}{0}
\renewcommand{\thesection}{\Alph{section}}
\secondtitle{Supplemental Material for VoroDuce: Massively Parallel 3D Voronoi Simplification with Volume-Aware Edge Collapse}

\setcounter{theorem}{0}
\setcounter{prop}{0}

\section{Example Illustrations}
Figure~\ref{fig:edge_collapse_leads_to_flips} shows an example of Delaunay edge flips caused by a single Voronoi cell merge and Figure~\ref{fig:voronoi_cases_2D} gives a 2D example of how these edge flips lead to small errors in our approximation.
\label{app:example}
\begin{figure}[h]
    \centering
    \includegraphics[width=0.99\linewidth]{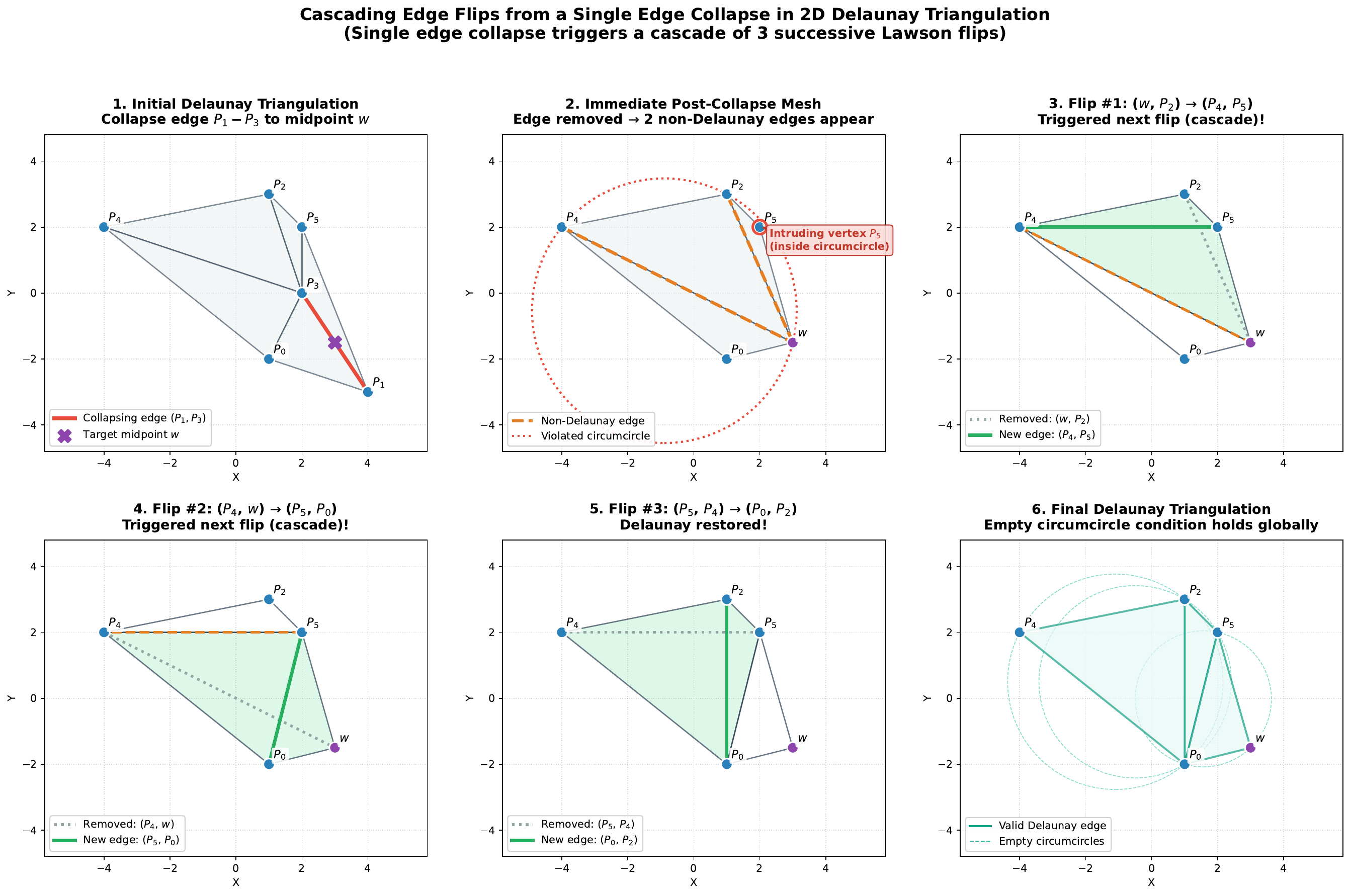}
    \caption{Example: A single edge collapse in a Delaunay Triangulation may cause several edge flips.}
    \label{fig:edge_collapse_leads_to_flips}
\end{figure}

\begin{figure*}[h]
    \centering
    \begin{subfigure}[b]{0.47\linewidth}
        \centering
        \includegraphics[width=0.8\linewidth]{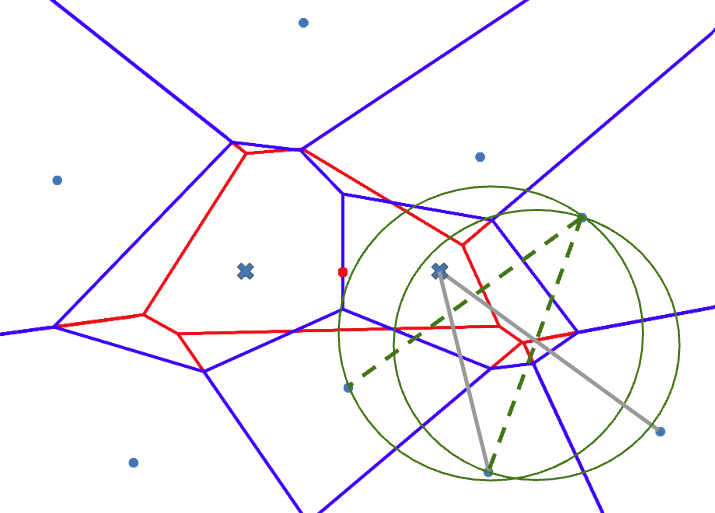}
        \caption{The collapse of the edge (red) makes supporting circles (green) for new edges (dotted, green) empty, resulting in two previous edges (gray) getting flipped. }
        \label{fig:edgeflips}
    \end{subfigure}
    \hspace{2mm}
    \begin{subfigure}[b]{0.47\linewidth}
        \centering
        \includegraphics[width=1\linewidth]{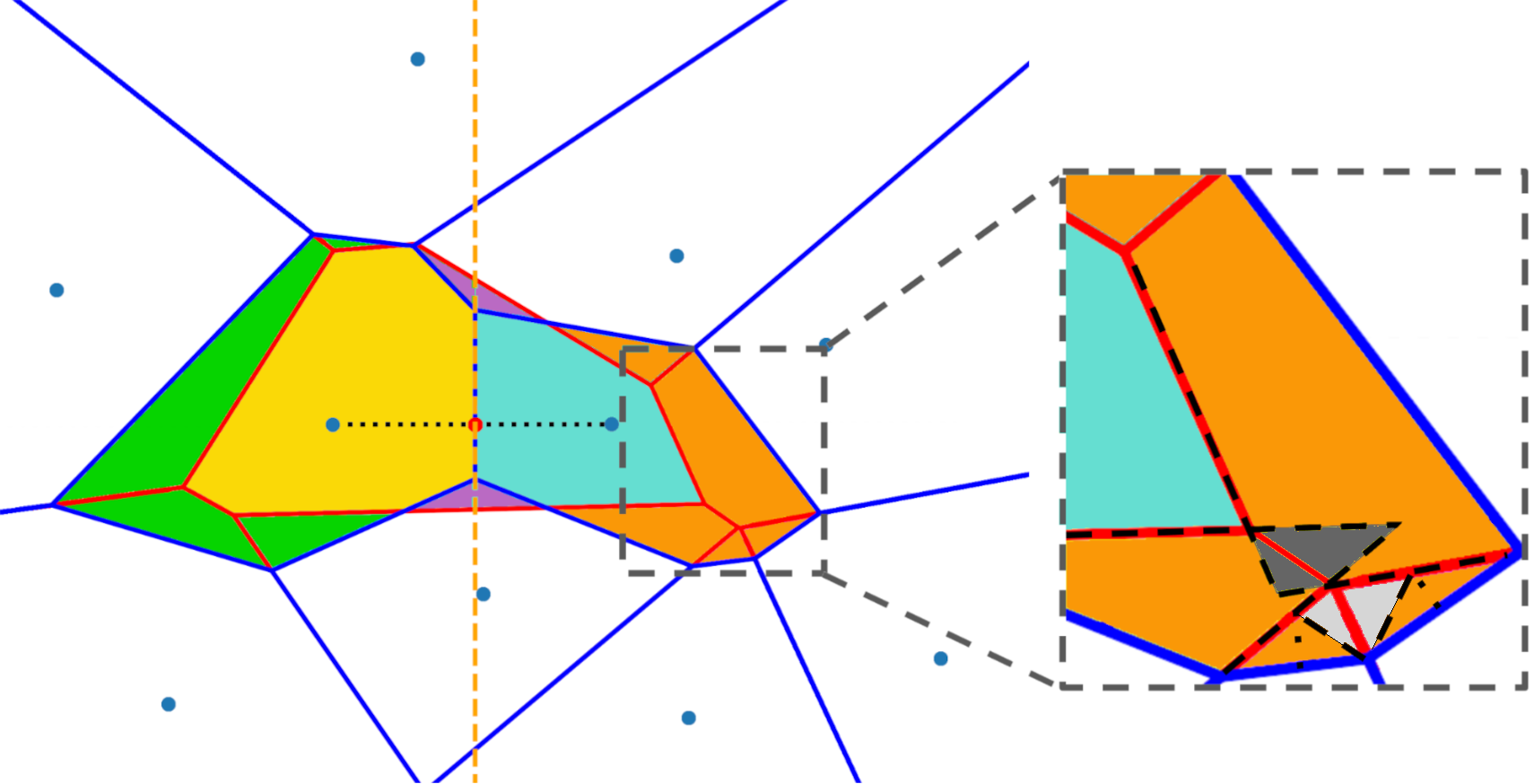}
        \caption{The neighbour edge flips result in non-standard cells. Still, the general convex cells can be approximated as quadrangles, double-counting the dark gray region and ignoring the light gray region. }
    \end{subfigure}

    \caption{Two cells (blue) are merged along a Delaunay edge (dotted) to a new cell (red). Areas where the nearest neighbour has changed are colored. Green intersection volumes form a quadrangle, cyan intersection volumes form two triangles and purple regions a single triangle. Yellow regions are convex cells with pinch vertices.}
    \label{fig:voronoi_cases_2D}
\end{figure*}

\section{Dataset Details}
In this section, we include additional details about the generation and collection for the fields used in the evaluation. 
\label{app:datasets}
\subsection{Synthetic Dataset}
Each dataset is generated via a decoupled two-stage pipeline: (1) sampling 3D seed coordinates $\mathbf{x}_i \in \Omega$ from a spatial density distribution, followed by (2) analytically mapping a volumetric scalar field to assign cell values $y_i = f(\mathbf{x}_i)$.
\paragraph*{Spatial Seed Distributions}
To benchmark algorithmic robustness against spatial anisotropy and non-uniform cell volume distributions, seed points $\mathbf{x}_i \in \Omega$ are sampled from five spatial patterns:
\begin{itemize}
    \item \textbf{Uniform:} Homogeneous spatial Poisson point process drawn uniformly from $\mathcal{U}([0, 10]^3)$.
    \item \textbf{Gaussian:} Centered radial point concentration sampled from $\mathcal{N}(\boldsymbol{\mu}, \sigma^2 \mathbf{I})$ with $\boldsymbol{\mu} = (5, 5, 5)^\top$ and $\sigma = 1.8$, clamped to $\Omega$.
    \item \textbf{Dual Clustered:} Bimodal Gaussian mixture with equal components centered at $\boldsymbol{\mu}_1 = (3, 3, 3)^\top$ and $\boldsymbol{\mu}_2 = (7, 7, 7)^\top$ ($\sigma = 1.2$), producing localized high-density clusters separated by sparse regions.
    \item \textbf{Exponential:} Severe unidirectional density gradient along the $x$-axis distributed as $p(x) \propto e^{-\lambda x}$ with decay rate $\lambda = 0.3$, combined with uniform transverse coordinates $y, z \sim \mathcal{U}([0, 10])$.
\end{itemize}
\paragraph*{Analytical Volumetric Fields}
Scalar values $y_i = f(\mathbf{x}_i)$ are evaluated directly at seed coordinates across eight representative functional categories:
\begin{itemize}
    \item \textbf{Harmonic Waves:} Low-frequency ($k = 2\pi/10$) and high-frequency ($k = 2\pi/1.5$) continuous sine fields:
    \begin{equation}
        f_{\text{sine}}(\mathbf{x}) = 5 \cdot \left(\sin(k x) \cos(k y) \sin(k z) + 1\right).
    \end{equation}
    \item \textbf{Multi-Scale Fractal:} A 4-octave harmonic field accumulating dyadically doubled spatial frequencies with halved amplitudes:
    \begin{equation}
       \tiny f_{\text{fractal}}(\mathbf{x}) = 5 \cdot \left( \frac{1}{\sum_{o=0}^3 2^{-o}} \sum_{o=0}^3 2^{-o} \sin(2^o \omega x + \phi_o) \cos(2^o \omega y + \psi_o) \sin(2^o \omega z + \xi_o) + 1 \right).
    \end{equation}
    \item \textbf{Minimal Surfaces (Gyroid):} Triply periodic minimal surfaces evaluated as either a smooth scalar field $f_{\text{gyroid}}(\mathbf{x})$ or a binary solid partition:
    \begin{equation}
        g(\mathbf{x}) = \sin(k x)\cos(k y) + \sin(k y)\cos(k z) + \sin(k z)\cos(k x), \quad k = \frac{\pi}{2},
    \end{equation}
    yielding $f_{\text{solid}}(\mathbf{x}) = 10 \cdot \mathbb{I}[g(\mathbf{x}) > 0]$ and $f_{\text{smooth}}(\mathbf{x}) = 5 \cdot (g(\mathbf{x})/1.5 + 1)$.
    \item \textbf{Sharp Step Geometries:} Piecewise-constant binary indicator functions $f(\mathbf{x}) \in \{0, 10\}$ modeling sharp material boundaries:
    \begin{itemize}
        \item \emph{Solid Sphere:} $f(\mathbf{x}) = 10 \cdot \mathbb{I}[\|\mathbf{x} - \mathbf{c}\|_2 \le 3.5]$ with center $\mathbf{c} = (5, 5, 5)^\top$.
        \item \emph{Solid Torus:} $f(\mathbf{x}) = 10 \cdot \mathbb{I}[(\sqrt{(x-c_x)^2 + (y-c_y)^2} - R)^2 + (z-c_z)^2 \le r^2]$ with major radius $R = 3.0$ and minor radius $r = 1.0$.
        \item \emph{3D Checkerboard:} $f(\mathbf{x}) = 10 \cdot \mathbb{I}[(\lfloor x \rfloor + \lfloor y \rfloor + \lfloor z \rfloor) \bmod 2 = 0]$.
    \end{itemize}
\end{itemize}

\subsection{Material Science Dataset}
Here we provide the exact neper commands used to generate the \emph{Material Science} dataset. %As we are no experts, we asked an LLM to construct commands to simulate realistic testing conditions. 
The value $\mathbf{y}$ assigned to each cell is the maximum resolved shear stress across active slip system.
\paragraph*{Datasets 1-3: Equiaxed FCC Polycrystal, Cold-Rolled Sheet Polycrystal and Additive Columnar Polycrystal}
Shear stress was evaluated across the 12 primary octahedral slip systems $\{111\}\langle 110\rangle$. Generation commands:
\begin{verbatim}
bash
<Path>/neper/src/build/neper -T -n 100000 \
  -statcell id,x,y,z,vol,diameq,sphericity,
  facenb,euler-bunge \
  -o /tmp/neper_work/neper_equiaxed_fcc_100k \
  -morpho voronoi \
  -crysym cubic \
  -ori random
\end{verbatim}
\begin{verbatim}
bash
<Path>/neper/src/build/neper -T -n 100000 \
  -statcell id,x,y,z,vol,diameq,sphericity,
  facenb,euler-bunge \
  -o /tmp/neper_work/neper_rolled_sheet_100k \
  -morpho voronoi \
  -transform "scale(2.0,1.0,0.5)" \
  -crysym cubic \
  -ori "fiber(1,1,0, 1,0,0, 15)"
\end{verbatim}

\begin{verbatim}
bash
<Path>/neper/src/build/neper -T -n 100000 \
  -statcell id,x,y,z,vol,diameq,sphericity,
  facenb,euler-bunge \
  -o /tmp/neper_work/neper_columnar_additive_100k \
  -morpho voronoi \
  -transform "scale(0.5,0.5,2.0)" \
  -crysym cubic \
  -ori "fiber(0,0,1, 0,0,1, 10)"
  \end{verbatim}
\paragraph*{Dataset 4: Dual-Phase Steel Polycrystal}
Shear stress was evaluated across the 3 Basal $\{0001\}\langle 11\bar{2}0\rangle$ and 3 Prismatic $\{10\bar{1}0\}\langle 11\bar{2}0\rangle$ slip systems. Generation command:
\begin{verbatim}
bash
<Path>/neper/src/build/neper -T -n 100000 \
  -statcell id,x,y,z,vol,diameq,sphericity,
  facenb,euler-bunge \
  -o /tmp/neper_work/neper_dualphase_steel_100k \
  -morpho voronoi \
  -crysym cubic \
  -ori "0.7*random+0.3*cube"
  \end{verbatim}
\paragraph*{Dataset 5: Hexagonal Close-Packed 
$\alpha$-Titanium}
Scaled with a bimodal Hall-Petch grain size hardening factor ($\times 1.35$ for fine grains below median volume, $\times 0.85$ for coarse grains). Generation command:
\begin{verbatim}
bash
<Path>/neper/src/build/neper -T -n 100000 \
  -statcell id,x,y,z,vol,diameq,sphericity,
  facenb,euler-bunge \
  -o /tmp/neper_work/neper_hexagonal_titanium_100k \
  -morpho voronoi \
  -crysym hexagonal \
  -ori "fiber(0,0,0,1, 0,0,1, 20)"
  \end{verbatim}

\section{Ablation of collapse batch size $\beta$}
\label{app:B_ablation}
 The merge loop (Algorithm 1) does not process one edge at a time, it groups them into blocks. After each Delaunay rebuild, all candidate edges are sorted by collapse cost and the $\lfloor \beta \cdot |\mathcal{S}|\rfloor$ of them are considered together, with a greedy independent-set matching picking the subset whose endpoints don't collide. As such, edge costs do not take into account merges that happened beforehand in the same block. Increasing $\beta$ also increases the number of these ``stale'' costs. Note that $\beta$ can be increased beyond $1$, as there are more Delaunay edges than vertices. To evaluate the sensitivity of this parameter, we run reductions for the volume scans dataset across a sweep of $\beta$ values. The resulting quality metrics and runtimes are visualized in Figure~\ref{fig:beta}. The results show that the reduction algorithm is generally insensitive to values up to $\beta=1$, which provides up to a $5\times$ speed-up over $\beta=0.01$ with minimal change in quality, while for $\beta> 1$, quality degrades quickly. Counterintuitively, the best quality is achieved at values of $\beta \in [0.1, 0.5]$. As such, $\beta$ cannot be interpreted simply as a trade-off between quality and performance. Instead it acts as a regularizer: Because each cell can only be merged once per block, a moderate value of $\beta$ prevents that a small set of cells is merged too aggressively. Excessive values of $\beta$ face the opposite problem, where good candidate cells are not merged often enough.

 Both our exact and fast variant gain significant speed-up with larger batch sizes, but the fast variant benefits more from the reduction in triangulation costs, achieving up to an $8\times$ overall speedup over exact costs.  
\begin{figure}
    \centering
    \includegraphics[width=0.9\linewidth]{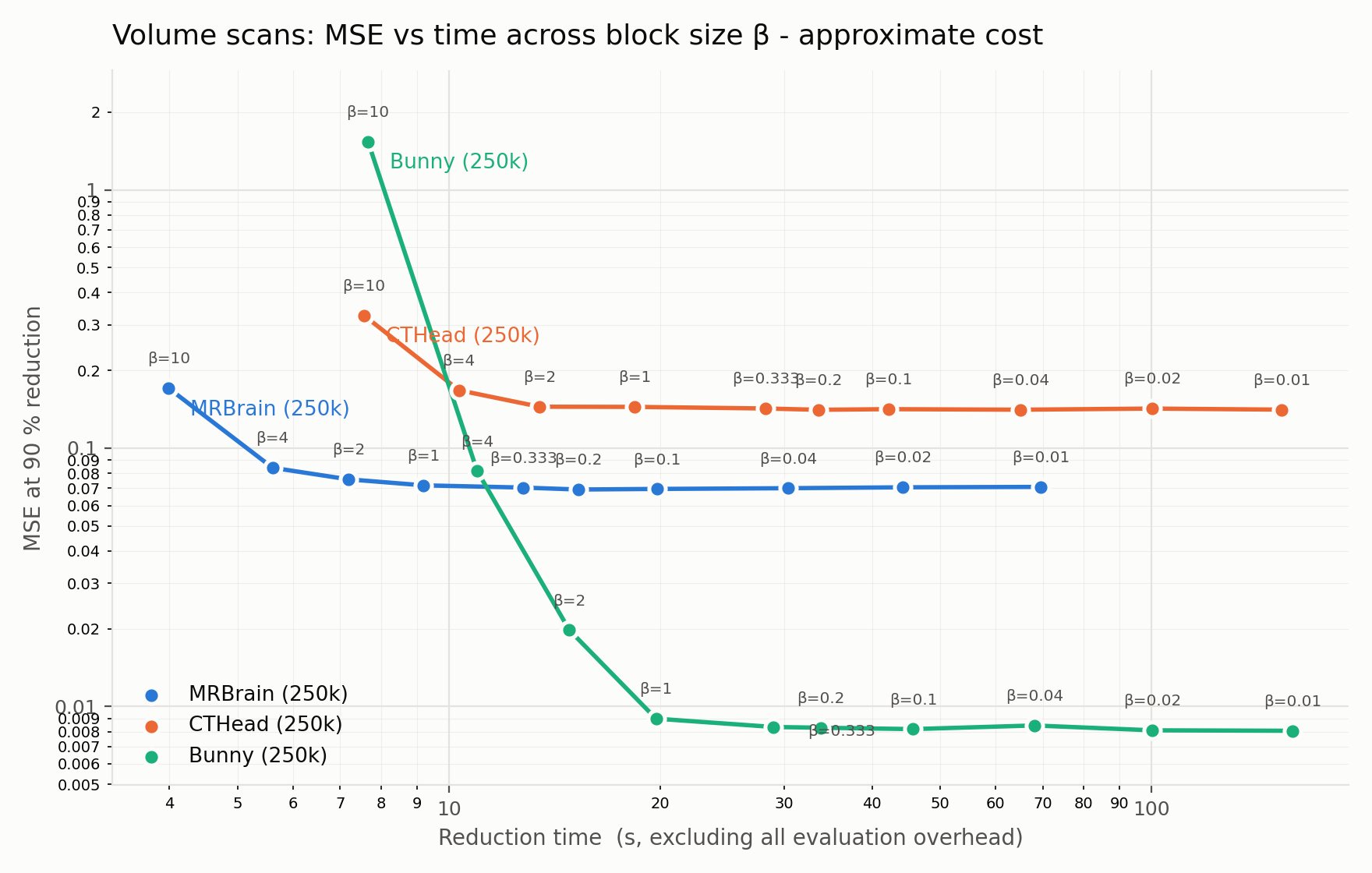}
    \includegraphics[width=0.9\linewidth]{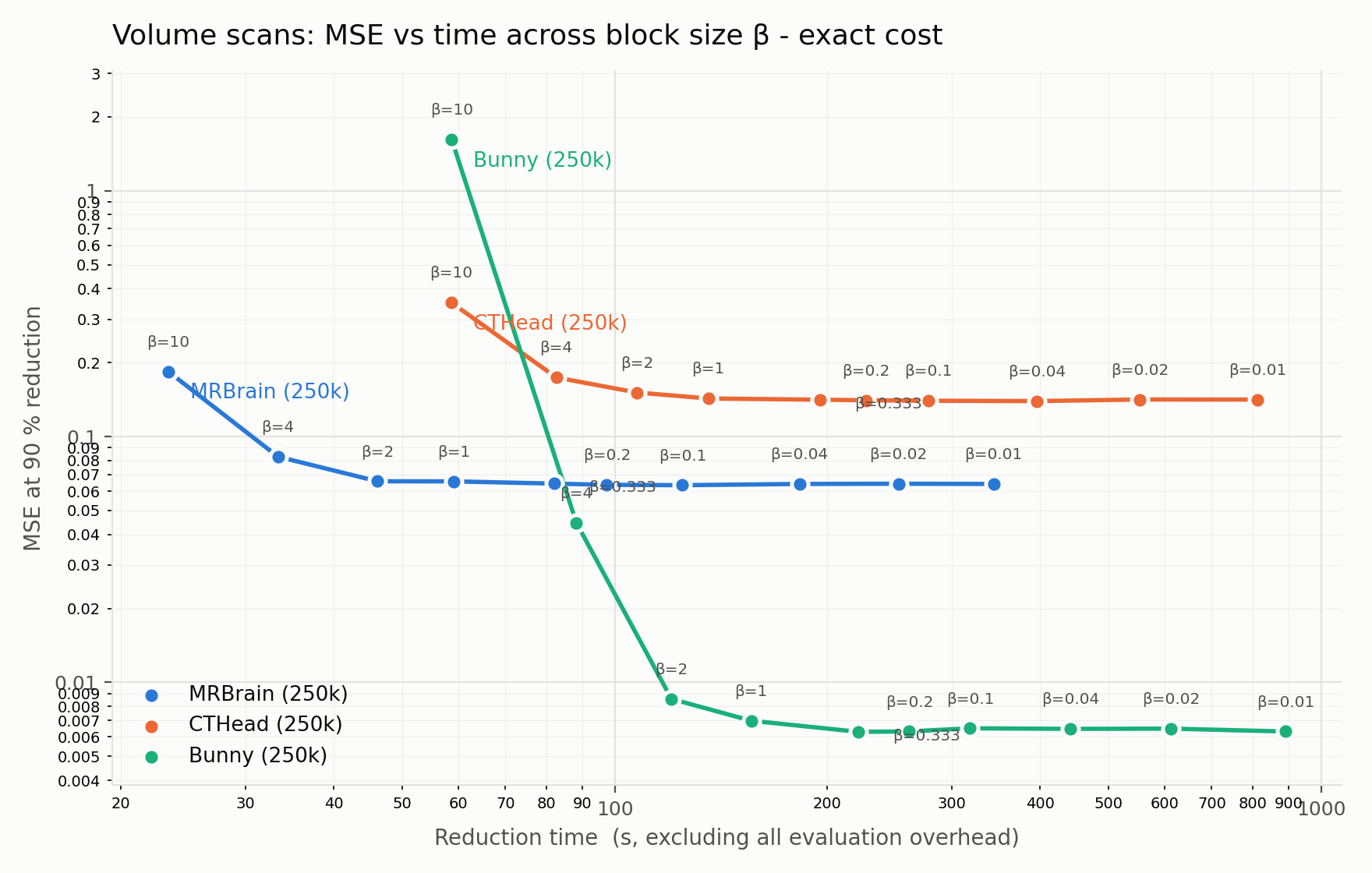}
    \caption{MSE and reduction time on the volume scan dataset with varying values of collapse batch size $\beta$.}
    \label{fig:beta}
\end{figure}
\section{Implementation Details}
We provide an optimized, hardware-accelerated implementation of our reduction algorithm using vectorized pytorch operations and CUDA kernels implemented in Warp. For computing the Delaunay triangulation, we use the GPU implementation by Govindarajan et al.~\cite{radfoam}, which attempts local re-triangulation with a fallback to full rebuild. 
\paragraph*{Infinite cells} The Voronoi cells that form the convex hull of the point set are infinite. Since merging infinite cells introduces infinite error, these points are excluded from the cost computation, except as neighbour cells. While it would not be difficult to implement clipping of all cells to a bounding box, we found that the number of hull vertices is too small in practice to have any effect on quality.% Otherwise no special handling is required for the algorithm to function.  

%\paragraph*{Exact volume intersection}
%Evaluating one Voronoi cell through scipy's QHull bindings costs 253 $µs$, of which 119 $µs$ is fixed per-call overhead; our kernel evaluates the same cell in 12.0 $µs$ on one core and 1.05 $µs$ across 24, while agreeing with QHull to 3·10⁻¹⁴ relative. For the cell-intersection volumes of the exact cost the gap widens to 66× single-threaded, since QHull additionally requires a linear program per intersection to obtain an interior point (940 µs of the 1067 $µs$ total).

\subsection{Merging Cells}
\label{app:merge}
When merging cells, we choose the midpoint of the collapsed Delaunay edge as the new site position, because it minimizes the error in our intersection volume approximations. From this, we can pick the new function value of the site (denoted as InterpolateValues($i, j$) in pseudocode) to minimize the resulting error. 
 Mathematically, we want to find $\mathbf{y}_{new}$ s.t. 
\begin{align}
\mathbf{y}_{new}^* &= \underset{\mathbf{y}_{new}}{\mathrm{argmin}} \ ISE(\mathcal{S}, \left(\mathcal{S} \setminus \{(P_i, \mathbf{y}_i), (P_j, \mathbf{y}_j)\} \right) \cup (\frac{P_i + P_j}{2}, \mathbf{y}_{new})) \\
   \underset{\mathbf{y}}{\mathrm{argmin}}& \ \sum_{k \in \mathcal{N}(i) \cup \mathcal{N}(j)} (\hat{\mathbf{y}}_M - \mathbf{y}_k)^2 \cdot \text{Vol}(\hat{V}_M \cap V_k)  \\
  &+ (\hat{\mathbf{y}}_M - \mathbf{y}_i)^2 \cdot \text{Vol}(\hat{V}_M \cap V_i)\\ 
  &+ (\hat{\mathbf{y}}_M - \mathbf{y}_j)^2 \cdot \text{Vol}(\hat{V}_M \cap V_j). 
\end{align} 
We set the derivative of the objective with regards to $\mathbf{y}$ to zero:
\begin{align}
   \frac{dISE}{d\mathbf{y}_{new}} =& \ \sum_{k \in (\mathcal{N}(i) \cup \mathcal{N}(j)) \setminus \{i, j\}) \setminus \{i, j\}} 2\cdot(\hat{\mathbf{y}}_M - \mathbf{y}_k) \cdot \text{Vol}(\hat{V}_M \cap V_k)  \\
  &+ 2\cdot(\hat{\mathbf{y}}_M - \mathbf{y}_i) \cdot \text{Vol}(\hat{V}_M \cap V_i)\\ 
  &+ 2\cdot(\hat{\mathbf{y}}_M - \mathbf{y}_j) \cdot \text{Vol}(\hat{V}_M \cap V_j) = 0. 
\end{align} 
Solving this for $\mathbf{y}_{new}$, we find that the optimal solution is exactly the mean of $f$ over the volume of $\hat{V}_M$:
\begin{align}
    \mathbf{y}_{new}^* &= \frac{1}{\text{Vol}(\hat{V}_M)} \cdot \Big( \sum_{k \in \mathcal{N}(i) \cup \mathcal{N}(j)}  \mathbf{y}_k \cdot \text{Vol}(\hat{V}_M \cap V_k)   \\
  &+  \mathbf{y}_i \cdot \text{Vol}(\hat{V}_M \cap V_i) + \mathbf{y}_j \cdot \text{Vol}(\hat{V}_M \cap V_j)\Big) \\&=\frac{1}{\text{Vol}(\hat{V}_M)} \cdot \int_{\mathbf{x} \in \hat{V}_m} f(\mathbf{x})dx
\end{align}

Note that this is also true for neighbour cells, where the mean value integrated over its new volume differs where it gains volume. In practice, we find no measurable benefit to this change and forego this optimization. \\

\paragraph*{Optimized site placement.} We ran the following experiment to evaluate if the site position can be improved from the midpoint placement: Instead of fixing the merged site at the edge midpoint, we place it at $\mathbf{x}_{new}(t) = \mathbf{x}_i + t(\mathbf{x}_j - \mathbf{x}_i)$ and optimize $t$ per edge. Since $\hat{\mathbf{y}}_{new}$ is the mean of $f$ over $\hat V_{new}$, the derivative of the cost $c(i,j)$ depends only on the motion of the faces $\mathcal{F}_{new}$ of $\hat V_{new}$:

$$\frac{\partial c(i,j)}{\partial \mathbf{x}_{new}} = \sum_{m \in \{i,j\}} \sum_{k \in \mathcal{N}(m)\setminus\{i,j\}} \sum_{s \in \{+,-\}} \Delta e_k^s \, \mathrm{Area}(\mathcal{F}_{new}^s) \, \frac{\mathbf{c}(\mathcal{F}_{new}^s) - \mathbf{x}_{new}}{\|\mathbf{x}_k - \mathbf{x}_{new}\|}$$

Here $\mathbf{c}(\cdot)$ is the area centroid, $\Delta e_k^+ = (\hat{\mathbf{y}}_{new} - \mathbf{y}_m)^2 - (\mathbf{y}_k - \mathbf{y}_m)^2$ and $\Delta e_k^- = (\hat{\mathbf{y}}_{new} - \mathbf{y}_k)^2$. The faces $\mathcal{F}_{new}^\pm$ are already computed in Algorithm 2, so the gradient adds little overhead.

Starting from $t = \tfrac12$, we take one secant step on $\partial c / \partial t$ and clamp $t \in [0.25, 0.75]$. The new position is accepted only if it lowers $c(i,j)$. The optimized cost is used both for edge ranking and for the collapse itself. In practice, we did not observe any significant quality improvement and even some regression in test scenes when using optimized site placement, compared to midpoint placement.

\subsection{Computing Cell Volumes From Delaunay Tetrahedra}
\label{App:Volume}
The previously introduced cost function requires knowing the original cell volumes of each Voronoi cell. Instead of computing the volumes of all cells individually, it is more efficient to instead compute the volumes of all Delaunay tetrahedra and then redistribute that volume to its four vertex cells.
This pass can then be re-run after each triangulation re-build and is fast, as the underlying Delaunay structure is already known and constant.

\subsection{Computing Prismatoid Volume}
\label{PismaVolume}
Given a large number of
truncated prismatoids, each specified by the ordered vertices of its two
faces, their volumes can be computed very efficiently in parallel on the GPU.
First, we ensure that both faces of each prismatoid have the same number of
vertices by padding the smaller face with duplicate vertices. With the information from the intersection with the neighbour tunnel, we can ensure that the vertex orderings of the two faces are consistently
aligned. We then
fan-triangulate each polygon and add edges between corresponding vertices of
the two faces, which decomposes the prismatoid into tetrahedra. The volume of
the prismatoid is obtained by summing the (signed) volumes
$V = \tfrac{1}{6}\,\bigl|(\mathbf{b}-\mathbf{a})\cdot\bigl((\mathbf{c}-\mathbf{a})
\times(\mathbf{d}-\mathbf{a})\bigr)\bigr|$ of all resulting tetrahedra. Every
step of this procedure is highly parallel. In particular, all polygon pairs
that share the same maximal vertex count can use exactly the same
tetrahedralization pattern, so every thread executes an identical instruction
sequence and warp divergence is eliminated entirely. An example of a tetrahedralized prismatoid is shown in Figure~\ref{fig:Tetrahedralization}.

\begin{figure}[h]
    \centering
        \centering
        \includegraphics[width=0.4\linewidth]{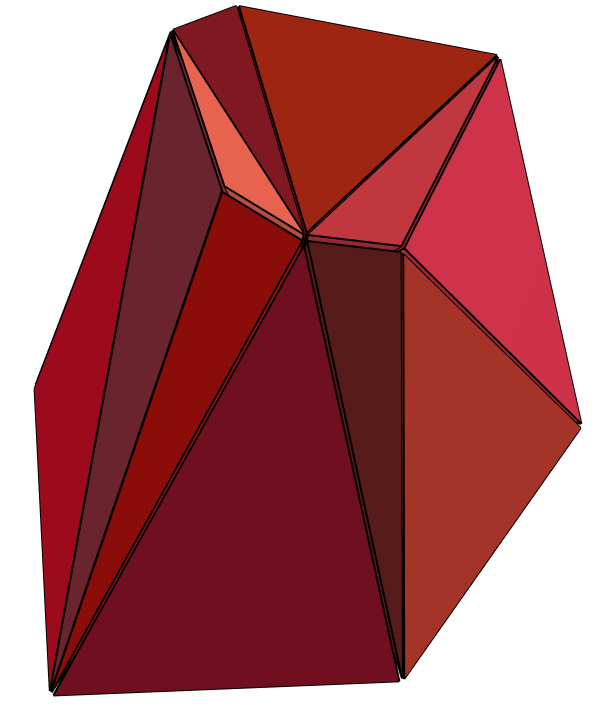}
    \caption{Tetrahedralization of a truncated prismatoid.}
    \label{fig:Tetrahedralization}
\end{figure}

%\todo{Describe the data structure for efficiently finding $\mathcal{H}$}

%\section{Radfoam Visual Evaluation}
%Scenes trained using Radfoam provide a perfect test bed for our algorithm, as they model a complex, continuous radiance field function using up to millions of 3D Voronoi cells. A reduction algorithm can be used to compress the representation and increase performance. 
%
%In the context of Radfoam, the underlying function $f$ represents the radiance field by mapping each 3D point to an RGB color $\mathbf{c}(\mathbf{x})$ and density $\sigma(\mathbf{x})$. For this application, it makes sense to slightly modify the objective function, since colour difference only matters if there is also density:
%\begin{equation}
%\int_{\mathbb{R}^{d_i}} I(\mathbf{x}) \Big( \sigma(\mathbf{x}) \cdot \big(\mathbf{c}(\mathbf{x}) -  \hat{\mathbf{c}}(\mathbf{x})\big)^2 + \lambda_{\sigma} \big( \sigma(\mathbf{x}) - \hat{\sigma}(\mathbf{x}) \big)^2\Big) \, d\mathbf{x}    
%\end{equation}
%with density weighting hyper-parameter $\lambda_{\sigma}$. We also include an importance term $I(\mathbf{x})$, which measures how important a cell is for the representation of the scene. This is computed as the sum of all blending weights to all of the camera views from which the scene was trained. 
%
%By default, Radfoam models view-dependent effects using per-cell spherical harmonic coefficients. For our test, we deactivate these view-dependent effects, as they would unnecessarily complicate the optimization objective.
\section{ISE Curves}
\label{app:ISE_curves}
Figure~\ref{fig:ISECurves} shows error curves for evaluated methods on a selection of datasets. \emph{Ours (Fast)} matches \emph{Ours (Exact)} very closely throughout the entire reduction. Where the relative performance of baselines varies significantly depending on the input data, VoroDuce achieves best accuracy throughout the entire reduction with few exceptions.
\begin{figure*}
    \centering
    \includegraphics[width=0.32\linewidth]{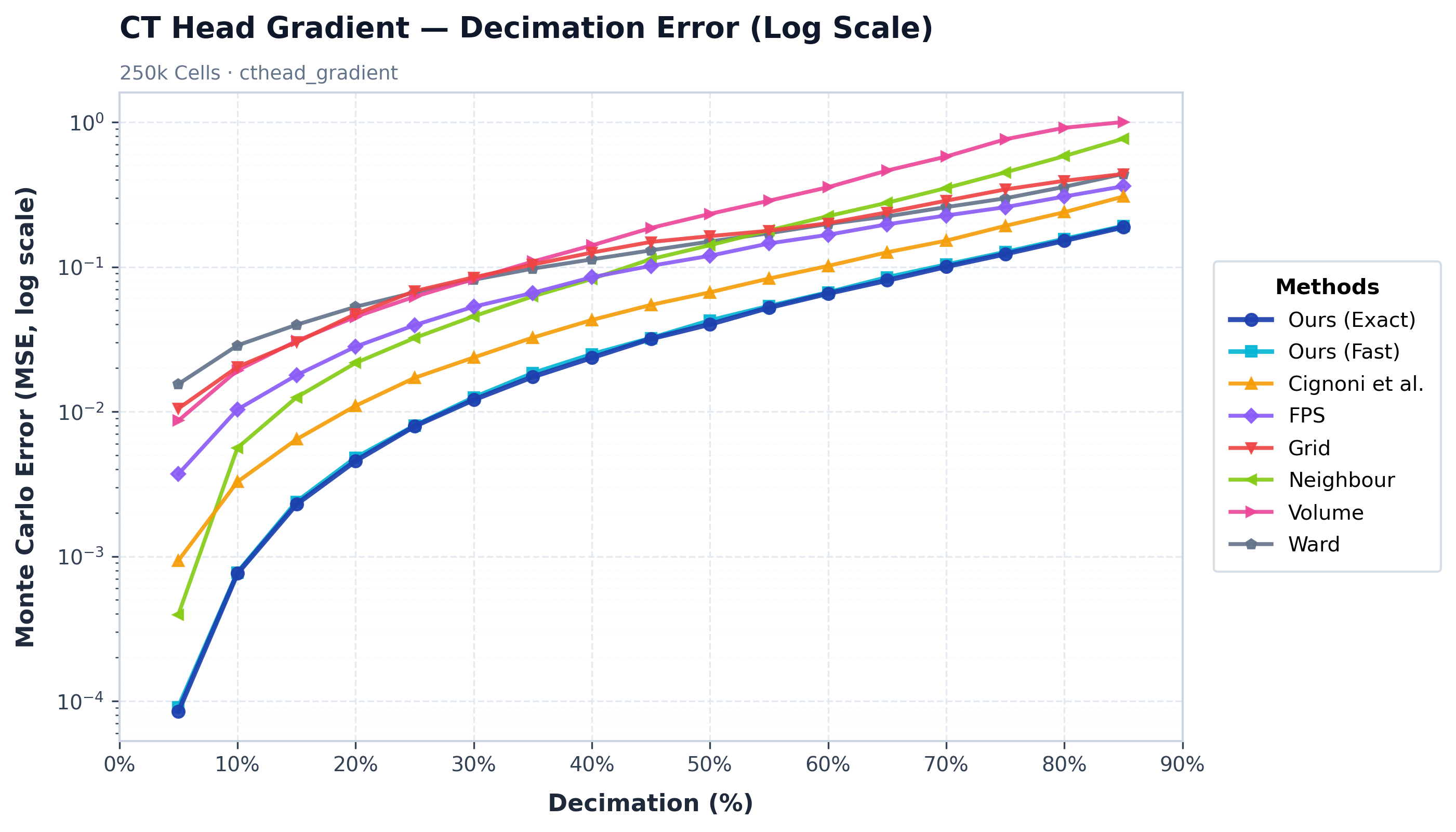}
    \includegraphics[width=0.32\linewidth]{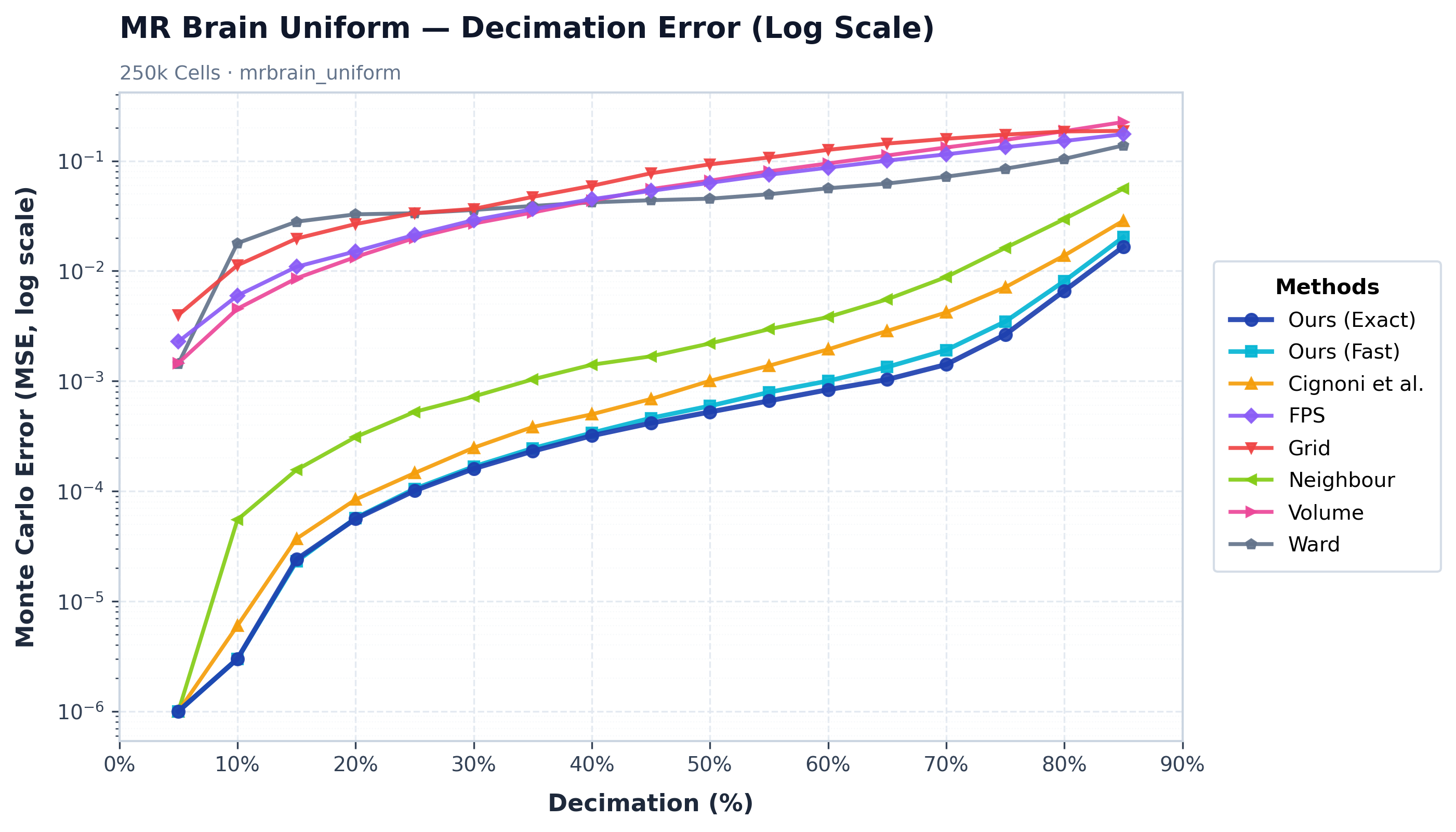}
    \includegraphics[width=0.32\linewidth]{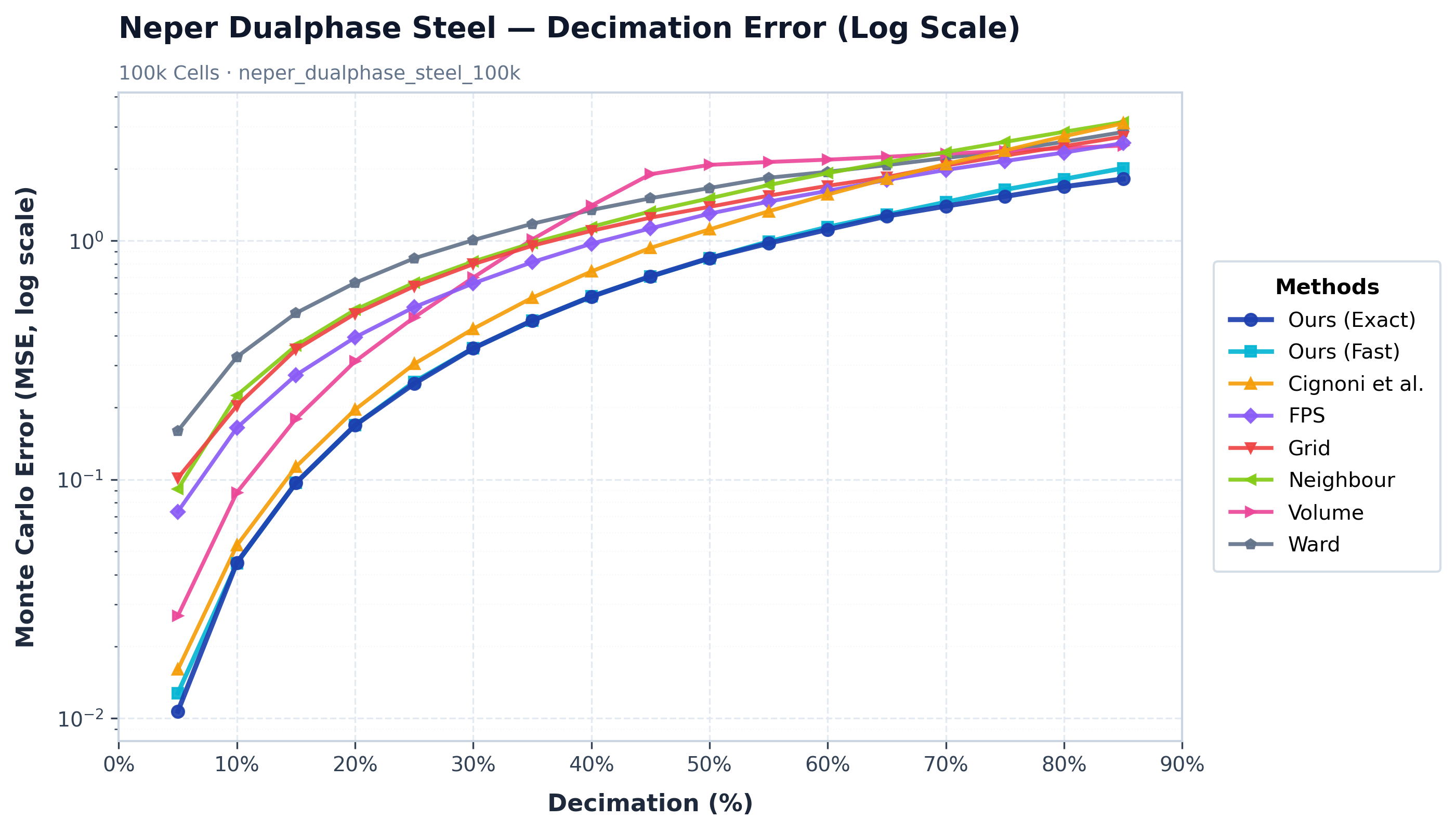}
    \includegraphics[width=0.32\linewidth]{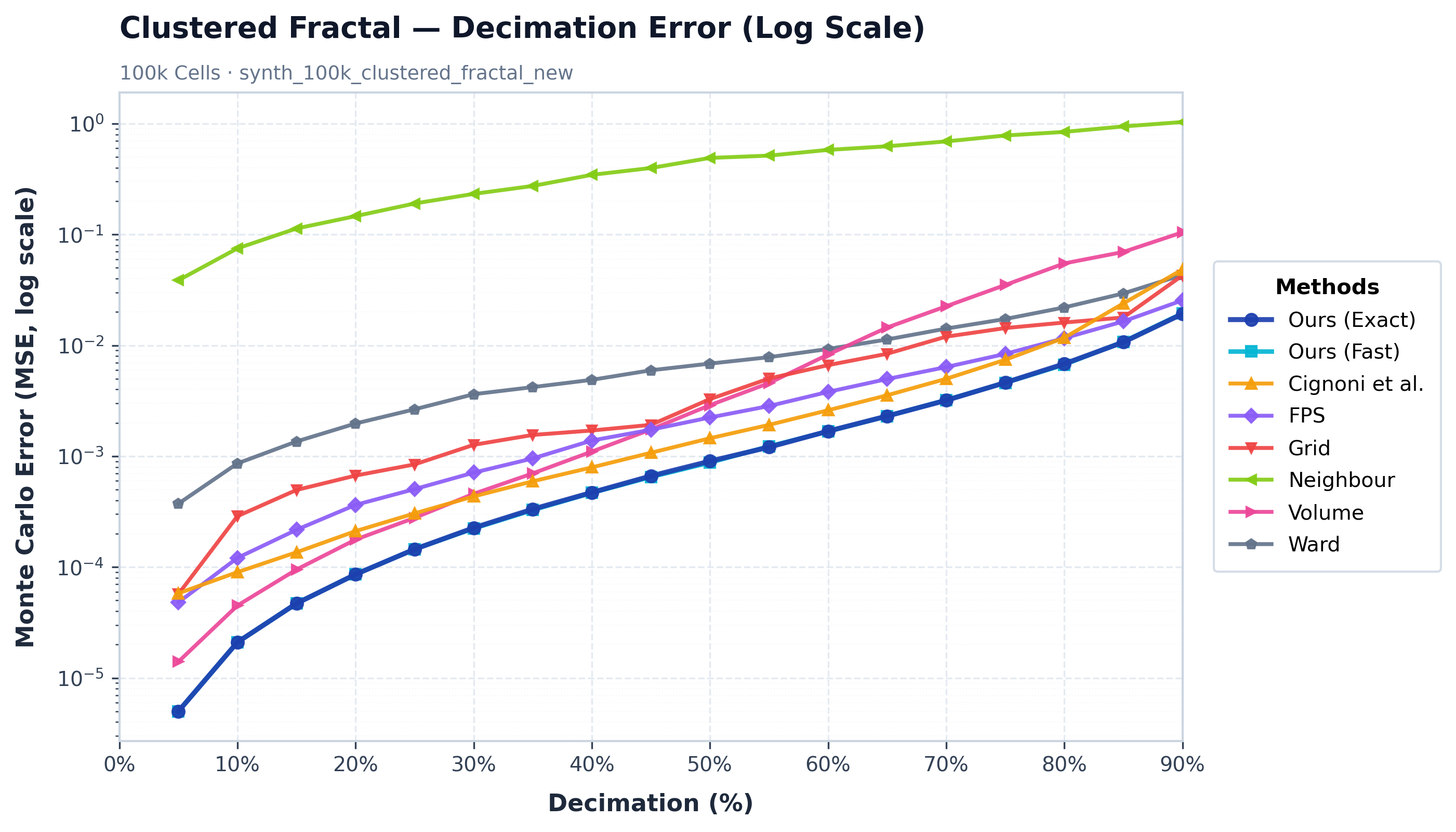}
    \includegraphics[width=0.32\linewidth]{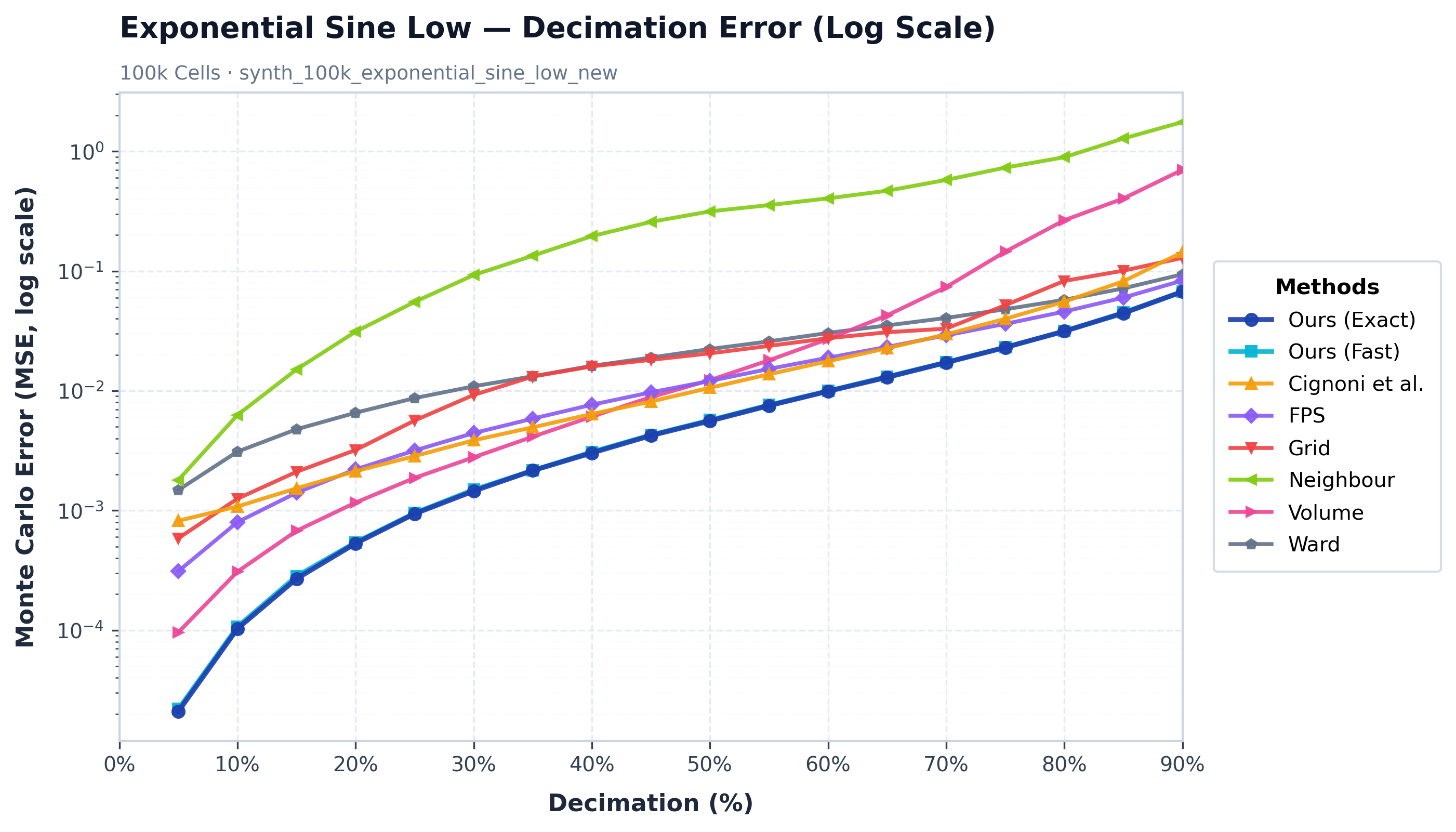}
    \includegraphics[width=0.32\linewidth]{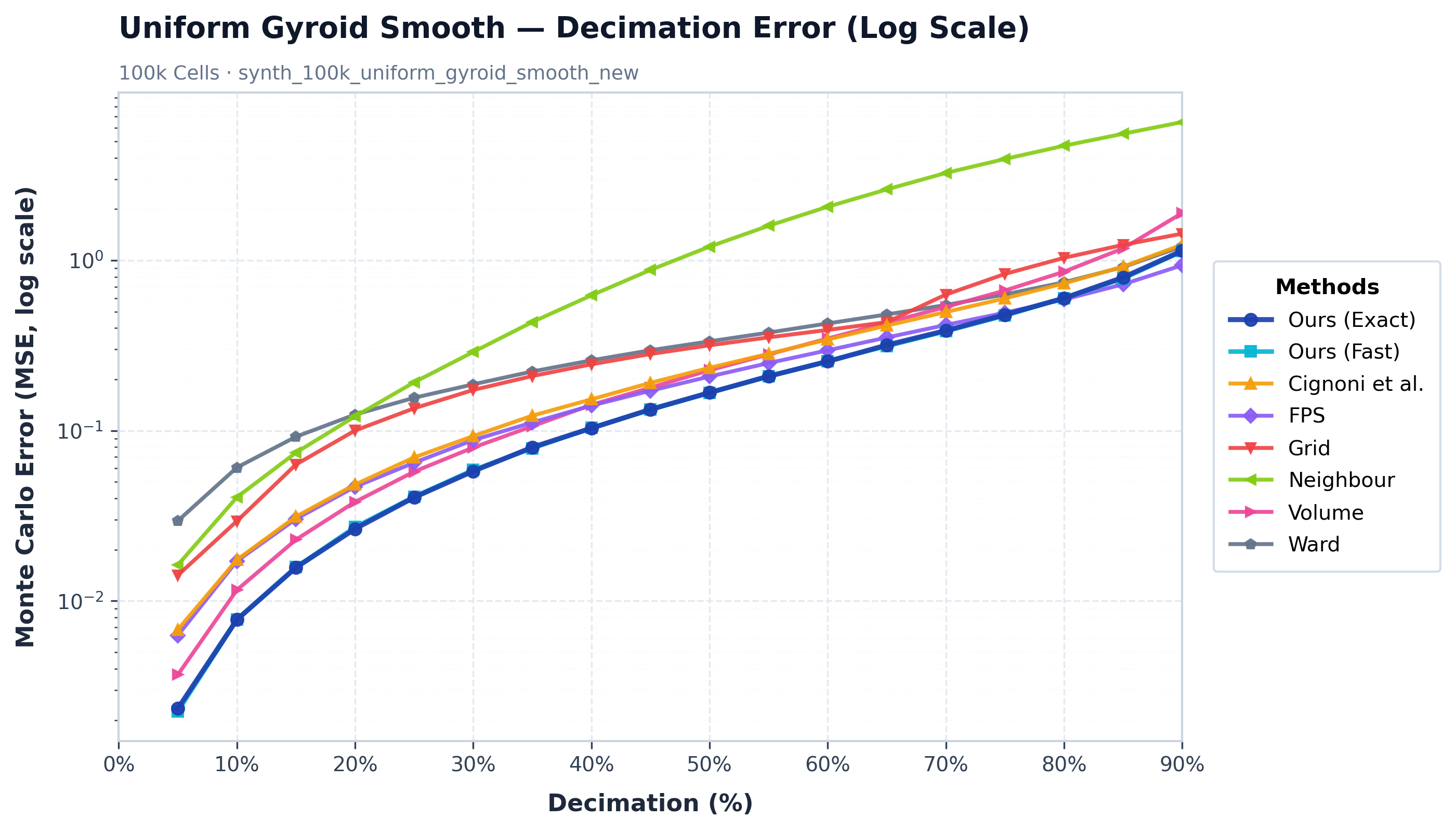}
    \caption{Error curves over evaluated methods on a selection of datasets. Log scale is used to better differentiate small differences. }
    \label{fig:ISECurves}
\end{figure*}

\section{Empirical Frequency of Delaunay flips}
\label{app:Flip_Frequency}
During cell merging, edge flips in the Delaunay triangulation cause special cases and inaccuracies in our approximations. In this section, we evaluate the frequency of the explored cases of edge flips.

The case in Equation \ref*{eq6}, where $\hat{V}_{new} \cap V_n \neq \emptyset$ for some cell $n \in \mathcal{S} \setminus (\mathcal{N}(i) \cup \mathcal{N}(j))$, i.e. the new cell gains a neighbour that was not connected to either of its parents, is not detected by our approximation. However, it is also infrequent, only accounting for $3.4\%$ of neighbours of $\hat{V}_{new}$ on average. Further, the intersection volume $Vol(\hat{V}_{new} \cap V_n)$ is small, representing less than 0.1\% of the $Vol(\hat{V}_{new})$ on average. We tested explicit handling of this case in the exact volume method, but found no statistically significant improvement in quality.

The only case we directly detect is case 3, where $(k, new) \not \in \mathcal{T}_\mathcal{E}$, which occurs for about 30\% of neighbouring cells. Note that these cells are also intrinsically small in volume and error, accounting for only $8.2\%$ of the neighbour volumes on average. 

To test the frequency of the undetected flip types, we run a simulation of $40,000$ merges on random neighbour cells with $20,000$ uniformly sampled points. We find that for any cell $k \in \mathcal{N}(i) \cup \mathcal{N}(j)$ adjacent to merging cells $i,j$, new neighbours caused by case 2 account for $4.27\%$ of cells in $\mathcal{N}(k) \cap (\mathcal{N}(i) \cup \mathcal{N}(j))$ and pinch vertices are generated  by $6.42\%$ of neighbours.

\subsection{Proofs}
\label{app:proofs}
In this section, we provide detailed proofs for the Propositions and Theorems in the paper.

\begin{theorem}[Exactness without flips (restated)]
\label{thm:noflip}
Let $k\in\mathcal{N}(i)\setminus\mathcal{N}(j)$ and $\hat{\mathcal{T}} = \mathcal{T}/(i,j)$.
Assume $\mathcal{S}\cup\{\mathbf{x}_{new}\}$ is in general position, and $V_i$ and
$\hat V_{new}$ are bounded. Then
\begin{enumerate}
  \item $\Pi_{\mathbf{x}_k,\mathbf{x}_{new}}\cap \textstyle\bigcap \mathcal{H}_{k,i} = \hat{\mathcal{F}}_{k,new}$;
  \item if $\mathcal{F}_{new} \cap \mathcal{F}_{orig} = \emptyset$, then
        $G_k = \conv\!\big(\mathcal{F}_{orig}\cup\mathcal{F}_{new}\big)$ and
        $L_k = \emptyset$ (or vice versa, if $\sigma_i<0$ on $\mathcal{F}_{orig}$);
  \item otherwise, $G_k = \conv\!\big(\mathcal{F}^+_{orig}\cup\mathcal{F}^+_{new}\big)$ and
        $L_k = \conv\!\big(\mathcal{F}^-_{orig}\cup\mathcal{F}^-_{new}\big)$.
        Both are truncated prismatoids with positive volume.
\end{enumerate}
\end{theorem}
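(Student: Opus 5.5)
Everything rests on the affine function $\sigma_i(\mathbf{p}) := \|\mathbf{p}-\mathbf{x}_{new}\|^2-\|\mathbf{p}-\mathbf{x}_i\|^2$, whose zero set is $\Pi_{\mathbf{x}_i,\mathbf{x}_{new}}$, together with the combinatorics of the contraction $\hat{\mathcal{T}}=\mathcal{T}/(i,j)$.

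\textbf{Step 1: claim 1.} Since $k\notin\mathcal{N}(j)$, no tetrahedron incident to both $k$ and $i$ contains $j$. Contraction therefore maps the link of the edge $(k,i)$ in $\mathcal{T}$ bijectively onto the link of $(k,new)$ in $\hat{\mathcal{T}}$, with the same cyclic order. In particular the common neighbours satisfy $\hat{\mathcal{N}}(k)\cap\hat{\mathcal{N}}(new)\supseteq\mathcal{N}(k)\cap\mathcal{N}(i)$. The first set may also contain neighbours of $j$, but those are not in $\mathcal{N}(k)$, so they cannot be common neighbours of $k$.

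By Equation~\ref{Eq:Face} applied to $\hat{\mathcal{T}}$,
\[
\hat{\mathcal{F}}_{k,new}=\Pi_{\mathbf{x}_k,\mathbf{x}_{new}}\cap\bigcap_{l\in\hat{\mathcal{N}}(k)\cap\hat{\mathcal{N}}(new)}\Pi^+_{\mathbf{x}_k,\mathbf{x}_l},
\]
and the index set equals the one defining $\mathcal{H}_{k,i}$. This gives claim~1. General position of $\mathcal{S}\cup\{\mathbf{x}_{new}\}$ ensures that the link is a proper cycle and that every wall appears as a genuine edge of the face. Each vertex of $\mathcal{F}_{new}$ is dual to a tetrahedron $\{k,new,l,l'\}$, matching the vertex of $\mathcal{F}_{orig}$ dual to $\{k,i,l,l'\}$. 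Both vertices lie on the same Voronoi line $\Pi_{\mathbf{x}_k,\mathbf{x}_l}\cap\Pi_{\mathbf{x}_k,\mathbf{x}_{l'}}$, which gives the vertex correspondence.

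\textbf{Step 2: localise $G_k$ and $L_k$.} I show that $G_k\subseteq\{\sigma_i\ge0\}$ and $L_k\subseteq\{\sigma_i\le0\}$. If $\mathbf{p}\in G_k$, then $\|\mathbf{p}-\mathbf{x}_i\|\le\|\mathbf{p}-\mathbf{x}_k\|\le\|\mathbf{p}-\mathbf{x}_{new}\|$; the case of $L_k$ is symmetric. Flip-freeness means $\hat V_k$ near the swept region is cut out only by the walls $\mathcal{H}_{k,i}$ and by $\Pi^+_{\mathbf{x}_k,\mathbf{x}_{new}}$. Likewise $V_k$ is cut out there by $\mathcal{H}_{k,i}$ and $\Pi^+_{\mathbf{x}_k,\mathbf{x}_i}$. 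This is exactly the tunnel statement of Step~1, and it uses that $V_i$ and $\hat V_{new}$ are bounded. Writing $T=\bigcap\mathcal{H}_{k,i}$, it follows that
\[
G_k = T\cap\Pi^-_{\mathbf{x}_k,\mathbf{x}_i}\cap\Pi^+_{\mathbf{x}_k,\mathbf{x}_{new}},
\qquad
L_k = T\cap\Pi^+_{\mathbf{x}_k,\mathbf{x}_i}\cap\Pi^-_{\mathbf{x}_k,\mathbf{x}_{new}},
\]
where $\Pi^-$ denotes the side facing $\mathbf{x}_i$ or $\mathbf{x}_{new}$ respectively.

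\textbf{Step 3: identify each region as a convex hull.} This is the main obstacle. $G_k$ is a convex polytope, so it is the convex hull of its vertices. Its facets lie on the two bisector planes and on lateral walls from $\mathcal{H}_{k,i}$. A vertex lies on three facets, and I must show that none lies strictly between the two planes. Such a vertex would be the intersection of three lateral walls, dual to a tetrahedron $\{k,l,l',l''\}$ that separates the two faces. That is precisely the pinch of Case~1 and would require a flip, contradicting $\hat{\mathcal{T}}=\mathcal{T}/(i,j)$. The cleanest way to argue this is that under flip-freeness each lateral facet on $\Pi_{\mathbf{x}_k,\mathbf{x}_l}$ joins the edge of $\mathcal{F}_{orig}$ on that wall to the corresponding edge of $\mathcal{F}_{new}$, by Step~1's bijection. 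Each lateral facet is therefore a quadrilateral, or a triangle near the crossing line, with all of its vertices on the two bases. So every vertex of $G_k$ lies on one of the two bases, and $G_k=\conv(\mathcal{F}_{orig}\cup\mathcal{F}_{new})$.

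\textbf{Step 4: the two cases.} If the faces are disjoint, $\sigma_i$ has constant sign on $T$ between them, so one of $G_k$, $L_k$ is empty. If the faces cross, the crossing line lies in $\Pi_{\mathbf{x}_k,\mathbf{x}_i}\cap\Pi_{\mathbf{x}_k,\mathbf{x}_{new}}\subseteq\{\sigma_i=0\}$. Clipping both regions by $\{\pm\sigma_i\ge0\}$ then yields the $\pm$ parts, with the crossing vertices shared by both bases. Positive volume follows from general position: the two bases are non-coplanar polygons, and each half has nonempty relative interior.
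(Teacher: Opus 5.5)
Your skeleton is the paper's: the sign function $\sigma_i$, the link bijection for part 1, showing that every vertex of $G_k$ and $L_k$ lies on one of the two planes, then splitting by the sign of $\sigma_i$. The gap is that the step carrying the theorem is assumed rather than proved.

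In Step 2 you replace $\hat V_k$ by the tunnel $T=\bigcap\mathcal H_{k,i}$ ``near the swept region'' and call this the tunnel statement of Step 1. Step 1 only identifies the two planar sections of $T$. It says nothing about walls $\Pi_{\mathbf x_k,\mathbf x_l}$ with $l\in\mathcal N(k)\setminus(\mathcal N(i)\cup\{i\})$ cutting through the region between the planes, and excluding that is exactly what the theorem asserts. So the formula is assumed, not derived.

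What the contraction hypothesis gives directly is $\hat{\mathcal N}(k)=(\mathcal N(k)\setminus\{i\})\cup\{new\}$. With $W=\bigcap_{l\in\mathcal N(k)\setminus\{i\}}\Pi^+_{\mathbf x_k,\mathbf x_l}$ this yields $V_k=W\cap\Pi^+_{\mathbf x_k,\mathbf x_i}$, $\hat V_k=W\cap\Pi^+_{\mathbf x_k,\mathbf x_{new}}$, and $G_k=W\cap\Pi^+_{\mathbf x_i,\mathbf x_k}\cap\Pi^+_{\mathbf x_k,\mathbf x_{new}}$. For the inclusion $\supseteq$, note that $\|\mathbf p-\mathbf x_i\|\le\|\mathbf p-\mathbf x_{new}\|$ forces $\|\mathbf p-\mathbf x_i\|\le\|\mathbf p-\mathbf x_j\|$ because $\mathbf x_{new}$ is the midpoint. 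Boundedness, needed to write $G_k$ as the hull of its vertices, then comes from $G_k\subseteq V_i$.

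Step 3 must therefore exclude a vertex strictly between the planes lying on \emph{any} three walls of $W$. Your ``cleanest way'' is circular. Knowing that the facet on $\Pi_{\mathbf x_k,\mathbf x_l}$ contains corresponding edges of both faces does not rule out further vertices on that facet, or further facets on other walls, and that is precisely what is at stake.

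The missing argument is an empty-sphere one. Such a vertex $\mathbf v$ is the circumcentre of $\tau=\{k,l_1,l_2,l_3\}$, which avoids $i,j,new$, so $\tau\in\mathcal T$ iff $\tau\in\mathcal T/(i,j)$.
\begin{itemize}
\item If $\mathbf v\in G_k$, then $\mathbf v\in\hat V_k$ makes the circumsphere empty of new sites, so $\tau\in\hat{\mathcal T}$. But $\mathbf v$ lies strictly on $\mathbf x_i$'s side of $\Pi_{\mathbf x_k,\mathbf x_i}$, which puts $\mathbf x_i$ strictly inside that sphere, so $\tau\notin\mathcal T$.
\item For $L_k$, exchange the roles of the old and new site sets.
\end{itemize}
``Would require a flip'' is the conclusion of this argument, not a replacement for it.

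There are two smaller problems.

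\emph{Step 1.} The claim that neighbours of $j$ are not in $\mathcal N(k)$ is false: $k\notin\mathcal N(j)$ does not preclude common neighbours. So $\hat{\mathcal N}(k)\cap\hat{\mathcal N}(new)$ can strictly contain $\mathcal N(k)\cap\mathcal N(i)$. Part 1 still holds, because Equation~\ref{Eq:Face} is valid for any index set lying between the link of the edge and the full neighbourhood. Your own link bijection puts the link of $(k,new)$ inside $\mathcal N(k)\cap\mathcal N(i)$, which is all you need.

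\emph{Step 4.} ``Constant sign on $T$ between them'' is unjustified. Directly you only get a constant strict sign on each face separately: a zero of $\sigma_i$ on $\mathcal F_{orig}$ lies on $\Pi_{\mathbf x_k,\mathbf x_{new}}$, hence in $\mathcal F_{orig}\cap\mathcal F_{new}$. To exclude opposite signs on the two faces, and to get positive volume in part 3, use the hull formula together with a half-ball argument.
\begin{itemize}
\item Take a relative-interior point $\mathbf p$ of $\mathcal F_{orig}$ with $\sigma_i(\mathbf p)>0$ and a small ball $B$ around it. Then $B\cap\Pi^+_{\mathbf x_i,\mathbf x_k}\subseteq G_k$.
\item Hence $G_k$ is not the flat polygon $\mathcal F_{orig}$, which forces $\mathcal F^+_{new}\neq\emptyset$.
\item In part 3 you additionally need general position of $\mathcal S\cup\{\mathbf x_{new}\}$, so that the zero line of $\sigma_i$ misses the vertices of $\mathcal F_{orig}$ and crosses its relative interior.
\end{itemize}
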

 
\begin{proof}
We proceed in six steps.
 
\emph{Step 1: Common walls.}
Since $k\notin\mathcal{N}(j)$, no simplex of $\mathcal{T}$ contains both $k$
and $j$. The contraction therefore maps the edge $(k,i)$ to $(k,new)$ and
leaves all other edges at $k$ unchanged, so
$\hat{\mathcal{N}}(k) = (\mathcal{N}(k)\setminus i)\cup new$, with $\mathcal{N}$ denoting the neighbourhood function after the merge. Let
\begin{equation*}
  W := \textstyle\bigcap_{l\in\mathcal{N}(k)\setminus\{i\}}\Pi^+_{\mathbf{x}_k,\mathbf{x}_l}
\end{equation*}
denote the walls that $V_k$ and $\hat V_k$ have in common. Then
$V_k = W\cap\Pi^+_{\mathbf{x}_k,\mathbf{x}_i}$ and
$\hat V_k = W\cap\Pi^+_{\mathbf{x}_k,\mathbf{x}_{new}}$. Hence
$\mathcal{F}_{k,i} = W\cap\Pi_{\mathbf{x}_k,\mathbf{x}_i}$ and
$\hat{\mathcal{F}}_{k,new} = W\cap\Pi_{\mathbf{x}_k,\mathbf{x}_{new}}$.
Recall the definition of lost and gained neighbour volumes:
\begin{align}
  G_k &:= V_i \cap \hat V_k      && \text{(volume $V_k$ gains from $V_i$)}, \label{eq:gain_region2}\\
  L_k &:= V_k \cap \hat V_{new}  && \text{(volume $V_k$ loses to $\hat V_{new}$)}. \label{eq:loss_region2}
\end{align}
All combined, this results in:
\begin{align}
  G_k &= W\cap\Pi^+_{\mathbf{x}_i,\mathbf{x}_k}\cap\Pi^+_{\mathbf{x}_k,\mathbf{x}_{new}}, \label{eq:gain_W}\\
  L_k &= W\cap\Pi^+_{\mathbf{x}_k,\mathbf{x}_i}\cap\Pi^+_{\mathbf{x}_{new},\mathbf{x}_k}. \label{eq:loss_W}
\end{align}
Because $\mathcal{N}(k)\cap\mathcal{N}(i)\subseteq\mathcal{N}(k)\setminus\{i\}$,
we also have $W\subseteq\bigcap\mathcal{H}_{k,i}$.
 
\emph{Step 2: Sign identities.}
Define the affine function
\begin{equation}
  \sigma_i(\mathbf{p}) := \|\mathbf{p}-\mathbf{x}_{new}\|^2 - \|\mathbf{p}-\mathbf{x}_i\|^2 .
\end{equation}
Its zero set is the bisector $\Pi_{\mathbf{x}_i,\mathbf{x}_{new}}$.

On $\Pi_{\mathbf{x}_k,\mathbf{x}_i}$ we have
$\|\mathbf{p}-\mathbf{x}_i\| = \|\mathbf{p}-\mathbf{x}_k\|$, so
$\sigma_i(\mathbf{p}) = \|\mathbf{p}-\mathbf{x}_{new}\|^2-\|\mathbf{p}-\mathbf{x}_k\|^2$. Hence
\begin{equation}
  \begin{aligned}
  \mathbf{p}\in\Pi^+_{\mathbf{x}_k,\mathbf{x}_{new}}&\iff\sigma_i(\mathbf{p})\ge 0,\\
  \mathbf{p}\in\Pi^+_{\mathbf{x}_{new},\mathbf{x}_k}&\iff\sigma_i(\mathbf{p})\le 0.
  \end{aligned}
  \label{eq:sign_orig}
\end{equation}
On $\Pi_{\mathbf{x}_k,\mathbf{x}_{new}}$ we have
$\|\mathbf{p}-\mathbf{x}_{new}\| = \|\mathbf{p}-\mathbf{x}_k\|$, so
$\sigma_i(\mathbf{p}) = \|\mathbf{p}-\mathbf{x}_k\|^2-\|\mathbf{p}-\mathbf{x}_i\|^2$. Hence
\begin{equation}
  \begin{aligned}
  \mathbf{p}\in\Pi^+_{\mathbf{x}_i,\mathbf{x}_k}&\iff\sigma_i(\mathbf{p})\ge 0,\\
  \mathbf{p}\in\Pi^+_{\mathbf{x}_k,\mathbf{x}_i}&\iff\sigma_i(\mathbf{p})\le 0.
  \end{aligned}
  \label{eq:sign_new}
\end{equation}
 
\emph{Step 3: Part (1).}
Let $\mathcal{C}_{k,i} = \bigcap\mathcal{H}_{k,i}$.
Since $W\subseteq\mathcal{C}_{k,i}$, Step~1 gives
$\hat{\mathcal{F}}_{k,new}\subseteq\Pi_{\mathbf{x}_k,\mathbf{x}_{new}}\cap\mathcal{C}_{k,i}=\mathcal{F}_{new}$.
 
For the converse, note that the edges of a Voronoi face are dual to the
Delaunay triangles containing its dual edge. The contraction maps the triangles
$\{k,i,l\}$ of $\mathcal{T}$ bijectively onto the triangles $\{k,new,l\}$ of
$\hat{\mathcal{T}}$; since $k\notin\mathcal{N}(j)$, no triangle $\{k,j,l\}$
exists. Hence
$\mathcal{E}(\hat{\mathcal{F}}_{k,new}) = \mathcal{E}(\mathcal{F}_{k,i})\subseteq\mathcal{N}(k)\cap\mathcal{N}(i)$.
The face $\hat{\mathcal{F}}_{k,new}$ is a non-empty convex polygon, since
$(k,new)\in\hat{\mathcal{T}}_\mathcal{E}$. Within its plane, it is the
intersection of the half-planes supporting its edges. Each of these is the
restriction of a half-space in $\mathcal{H}_{k,i}$, so
$\mathcal{F}_{new}\subseteq\hat{\mathcal{F}}_{k,new}$.
 
The same argument with $\Pi_{\mathbf{x}_k,\mathbf{x}_i}$ in place of
$\Pi_{\mathbf{x}_k,\mathbf{x}_{new}}$ gives $\mathcal{F}_{orig}=\mathcal{F}_{k,i}$.
Therefore
\begin{equation}
  \mathcal{F}_{orig} = W\cap\Pi_{\mathbf{x}_k,\mathbf{x}_i},
  \qquad
  \mathcal{F}_{new} = W\cap\Pi_{\mathbf{x}_k,\mathbf{x}_{new}}.
  \label{eq:faces_W}
\end{equation}
 
\emph{Step 4: The faces meet exactly where $\sigma_i$ vanishes.}
By definition,
$\mathcal{F}_{orig}\cap\mathcal{F}_{new} = \Pi_{\mathbf{x}_k,\mathbf{x}_i}\cap\Pi_{\mathbf{x}_k,\mathbf{x}_{new}}\cap\mathcal{C}_{k,i}$.
A point of $\Pi_{\mathbf{x}_k,\mathbf{x}_i}$ lies on $\Pi_{\mathbf{x}_k,\mathbf{x}_{new}}$
if and only if $\sigma_i=0$, by Step~2. The same holds with the roles of the two
planes exchanged. Hence
\begin{equation}
  \mathcal{F}_{orig}\cap\mathcal{F}_{new}
  = \mathcal{F}_{orig}\cap\{\sigma_i=0\}
  = \mathcal{F}_{new}\cap\{\sigma_i=0\}.
  \label{eq:meet_zero}
\end{equation}
 
\emph{Step 5: Vertices of $G_k$ and $L_k$.}
Both sets are bounded, since $G_k\subseteq V_i$ and $L_k\subseteq\hat V_{new}$.
A bounded convex polyhedron is the convex hull of its vertices. We show that
every vertex lies on $\Pi_{\mathbf{x}_k,\mathbf{x}_i}$ or on
$\Pi_{\mathbf{x}_k,\mathbf{x}_{new}}$.
 
Let $\mathbf{v}$ be a vertex on neither plane. By Equations~\ref{eq:gain_W}
and~\ref{eq:loss_W}, it lies on at least three walls
$\Pi_{\mathbf{x}_k,\mathbf{x}_{l_1}},\Pi_{\mathbf{x}_k,\mathbf{x}_{l_2}},\Pi_{\mathbf{x}_k,\mathbf{x}_{l_3}}$
with $l_m\in\mathcal{N}(k)\setminus\{i\}$. In particular, $l_m\notin\{i,j,new\}$.
So $\mathbf{v}$ is equidistant from $\mathbf{x}_k$ and
$\mathbf{x}_{l_1},\mathbf{x}_{l_2},\mathbf{x}_{l_3}$, and the tetrahedron
$\tau=\{k,l_1,l_2,l_3\}$ is incident to none of $i,j,new$. Hence
$\tau\in\mathcal{T}/(i,j)$ if and only if $\tau\in\mathcal{T}$.
\begin{itemize}
  \item \emph{Vertex of $G_k$.} Since $\mathbf{v}\in\hat V_k$, the four sites
        of $\tau$ are nearest to $\mathbf{v}$ among the new sites. Hence
        $\tau\in\hat{\mathcal{T}}$. Since $\mathbf{v}\notin\Pi_{\mathbf{x}_k,\mathbf{x}_i}$,
        it lies strictly on $\mathbf{x}_i$'s side, so $\mathbf{x}_i$ is
        strictly inside the circumsphere of $\tau$. Hence $\tau\notin\mathcal{T}$.
  \item \emph{Vertex of $L_k$.} Since $\mathbf{v}\in V_k$, the four sites of
        $\tau$ are nearest to $\mathbf{v}$ among the old sites. Hence
        $\tau\in\mathcal{T}$. Since $\mathbf{v}\notin\Pi_{\mathbf{x}_k,\mathbf{x}_{new}}$,
        $\mathbf{x}_{new}$ is strictly inside the circumsphere of $\tau$. Hence
        $\tau\notin\hat{\mathcal{T}}$.
\end{itemize}
Both contradict $\hat{\mathcal{T}}=\mathcal{T}/(i,j)$.
 
\emph{Step 6: Parts (2) and (3).}
By Step~5, $G_k$ and $L_k$ are the convex hulls of their intersections with the
two planes. We identify these intersections using Equations~\ref{eq:gain_W},
\ref{eq:loss_W}, \ref{eq:faces_W}, \ref{eq:sign_orig} and~\ref{eq:sign_new}:
\begin{align*}
  G_k\cap\Pi_{\mathbf{x}_k,\mathbf{x}_i}      &= \mathcal{F}_{orig}\cap\Pi^+_{\mathbf{x}_k,\mathbf{x}_{new}} = \mathcal{F}^+_{orig},\\
  G_k\cap\Pi_{\mathbf{x}_k,\mathbf{x}_{new}}  &= \mathcal{F}_{new}\cap\Pi^+_{\mathbf{x}_i,\mathbf{x}_k} = \mathcal{F}^+_{new},\\
  L_k\cap\Pi_{\mathbf{x}_k,\mathbf{x}_i}      &= \mathcal{F}_{orig}\cap\Pi^+_{\mathbf{x}_{new},\mathbf{x}_k} = \mathcal{F}^-_{orig},\\
  L_k\cap\Pi_{\mathbf{x}_k,\mathbf{x}_{new}}  &= \mathcal{F}_{new}\cap\Pi^+_{\mathbf{x}_k,\mathbf{x}_i} = \mathcal{F}^-_{new}.
\end{align*}
Hence, without further assumptions,
\begin{equation}
  \begin{aligned}
  G_k &= \conv\!\big(\mathcal{F}^+_{orig}\cup\mathcal{F}^+_{new}\big),\\
  L_k &= \conv\!\big(\mathcal{F}^-_{orig}\cup\mathcal{F}^-_{new}\big).
  \end{aligned}
  \label{eq:general_hulls}
\end{equation}
 
We next show that $G_k$ has positive volume whenever $\sigma_i>0$ at some point
$\mathbf{p}$ in the relative interior of $\mathcal{F}_{orig}$. By general
position, $\mathbf{p}$ lies strictly inside every wall of $W$, and it lies
strictly inside $\Pi^+_{\mathbf{x}_k,\mathbf{x}_{new}}$ by
Equation~\ref{eq:sign_orig}. Hence a small ball $B$ around $\mathbf{p}$
satisfies all constraints of Equation~\ref{eq:gain_W} except
$\Pi^+_{\mathbf{x}_i,\mathbf{x}_k}$, and the half-ball
$B\cap\Pi^+_{\mathbf{x}_i,\mathbf{x}_k}\subseteq G_k$ has positive volume.
Symmetrically, $L_k$ has positive volume whenever $\sigma_i<0$ at a point in
the relative interior of $\mathcal{F}_{orig}$.
 
\emph{Part (2).} Suppose $\mathcal{F}_{orig}\cap\mathcal{F}_{new}=\emptyset$.
By Equation~\ref{eq:meet_zero}, $\sigma_i$ has no zero on $\mathcal{F}_{orig}$
or on $\mathcal{F}_{new}$. Both faces are convex, hence connected, and $\sigma_i$
is continuous. So $\sigma_i$ has a constant strict sign on each face.
 
Assume $\sigma_i>0$ on $\mathcal{F}_{orig}$. Then
$\mathcal{F}^+_{orig}=\mathcal{F}_{orig}$ and $\mathcal{F}^-_{orig}=\emptyset$,
and $G_k$ has positive volume. If $\mathcal{F}^+_{new}$ were empty,
Equation~\ref{eq:general_hulls} would give
$G_k = \mathcal{F}_{orig}\subset\Pi_{\mathbf{x}_k,\mathbf{x}_i}$, which has zero
volume. Hence $\sigma_i>0$ somewhere on $\mathcal{F}_{new}$, and therefore
everywhere on it. Thus $\mathcal{F}^+_{new}=\mathcal{F}_{new}$ and
$\mathcal{F}^-_{new}=\emptyset$. Equation~\ref{eq:general_hulls} yields
$G_k=\conv(\mathcal{F}_{orig}\cup\mathcal{F}_{new})$ and $L_k=\emptyset$.
 
The case $\sigma_i<0$ follows by exchanging the roles of $G_k$ and $L_k$.
 
\emph{Part (3).} Suppose $\mathcal{F}_{orig}\cap\mathcal{F}_{new}\neq\emptyset$.
By Equation~\ref{eq:meet_zero}, $\sigma_i$ vanishes somewhere on
$\mathcal{F}_{orig}$. By general position of $\mathcal{S}\cup\{\mathbf{x}_{new}\}$,
it does not vanish at a vertex of $\mathcal{F}_{orig}$. A vertex of
$\mathcal{F}_{orig}$ is the circumcentre of a tetrahedron $\{k,i,l,l'\}$, and
$\sigma_i=0$ there would place $\mathbf{x}_{new}$ on that circumsphere. Hence
the zero line of the affine function $\sigma_i$ crosses the relative interior of
$\mathcal{F}_{orig}$, and $\sigma_i$ takes both signs there. By the argument
above, both $G_k$ and $L_k$ have positive volume.
Equation~\ref{eq:general_hulls} is the claimed identity. Each region has all
its vertices on the two planes, with convex bases, and is therefore a truncated
prismatoid.
 
The final statement follows from parts (2) and (3), since $\sigma_i$ has a
constant strict sign on $\mathcal{F}_{orig}$ if and only if the faces are
disjoint.
\end{proof}

\begin{prop}[Rest Volumes (restated)]
Let $\mathcal{S}$ be a set of samples, $i, j \in \mathcal{T}_\mathcal{E} (\mathcal{S})$ and $\hat{\mathcal{S}} = \left(\mathcal{S} \setminus \{(\mathbf{x}_i, \mathbf{y}_i), (\mathbf{x}_j, \mathbf{y}_j)\} \right) \cup (\mathbf{x}_{new}, \mathbf{y}_{new})$. Then,

\begin{equation} Vol(V_i) - \sum_{k \in \mathcal{N}(i)} \text{Vol}(V_i \cap \hat{V}_k)  = \text{Vol}(\hat{V}_{new} \cap V_i).\end{equation}    
\end{prop}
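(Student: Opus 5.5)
The plan is to show that the new Voronoi diagram $\{\hat V_l\}_{l}$ partitions $V_i$ up to measure zero, and that only $\hat V_{new}$ and the cells $\hat V_k$ with $k\in\mathcal{N}(i)\setminus\{j\}$ can meet $V_i$ in positive volume. Since the Voronoi cells of $\hat{\mathcal{S}}$ cover $\mathbb{R}^3$ and pairwise overlap only on bisector planes (a null set), additivity of Lebesgue measure gives
\begin{equation*}
\Vol(V_i) = \Vol(\hat V_{new}\cap V_i) + \sum_{l\in\hat{\mathcal{S}}\setminus\{new\}} \Vol(V_i\cap \hat V_l).
\end{equation*}
It then remains to show that every surviving site $l$ with $\Vol(V_i\cap\hat V_l)>0$ lies in $\mathcal{N}(i)\setminus\{j\}$. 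The index $j$ is excluded automatically because $j$ is no longer a site of $\hat{\mathcal{S}}$. In the restated version the sum runs over all of $\mathcal{N}(i)$; I would read the $k=j$ term as $0$, or equivalently as the term for $\hat V_{new}$ moved to the other side.

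The key step is a locality lemma. Suppose $\mathbf{p}\in V_i$ and $\mathbf{p}\in\hat V_l$ for a surviving $l\neq new$. Since $\mathbf{p}\in V_i$, we have $\|\mathbf{p}-\mathbf{x}_i\|\le\|\mathbf{p}-\mathbf{x}_m\|$ for all old sites $m$. After removing $i$ and $j$, the nearest site is $l$, so $\mathbf{x}_l$ is the closest point to $\mathbf{p}$ among the sites other than $i$ and $j$.

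I would argue via the segment from $\mathbf{p}$ toward the region where $l$ is nearest. More cleanly: take $\mathbf{p}$ in the interior of $V_i\cap\hat V_l$, and consider the ball $B$ centred at $\mathbf{p}$ with radius $\|\mathbf{p}-\mathbf{x}_l\|$. This ball contains no old sites in its interior except possibly $\mathbf{x}_i$ and $\mathbf{x}_j$, and $\mathbf{x}_i$ is inside it. Shrink the ball continuously, keeping $\mathbf{x}_l$ on its boundary, by moving its centre from $\mathbf{p}$ toward $\mathbf{x}_l$. The interior shrinks monotonically, since these are nested balls tangent at $\mathbf{x}_l$. At the first moment the last of $\{\mathbf{x}_i,\mathbf{x}_j\}$ that was inside hits the boundary, we obtain an empty sphere through $\mathbf{x}_l$ and either $\mathbf{x}_i$ or $\mathbf{x}_j$. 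This gives $l\in\mathcal{N}(i)\cup\mathcal{N}(j)$ by the Delaunay edge condition, using general position to make the sphere strictly empty of the others after a tiny perturbation.

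This is the main obstacle. If $\mathbf{x}_j$ is the last point to leave, the argument only yields $l\in\mathcal{N}(j)$, not $l\in\mathcal{N}(i)$. To rule that out I would use $\mathbf{p}\in V_i$ directly. The ball of radius $\|\mathbf{p}-\mathbf{x}_i\|$ about $\mathbf{p}$ is empty of all old sites, and we also have $\|\mathbf{p}-\mathbf{x}_i\|\le\|\mathbf{p}-\mathbf{x}_l\|$. I would then first shrink toward $\mathbf{x}_l$ only until $\mathbf{x}_i$ reaches the boundary. If $\mathbf{x}_j$ is still inside at that moment, I would instead run the shrinking with $\mathbf{x}_i$ and $\mathbf{x}_l$ kept on the boundary, moving the centre within their bisector $\Pi_{\mathbf{x}_i,\mathbf{x}_l}$ to expel $\mathbf{x}_j$. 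The family of spheres through two fixed points is a pencil, and moving along it monotonically removes points on one side, which should give an empty sphere through $\mathbf{x}_i,\mathbf{x}_l$. Showing this monotonicity cleanly is where I expect most of the effort to go.

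Should this fail, a fallback is to note that the paper only applies the identity in the setting of Equation~\ref{eq6} being neglected. In that case it suffices to observe that any such $l\in\mathcal{N}(j)\setminus\mathcal{N}(i)$ contributes to the sum over $\mathcal{N}(i)$ only through the splitting convention used for common neighbours. Finally, the sets $V_i\cap\hat V_k$ for distinct $k$ and $V_i\cap\hat V_{new}$ are interior-disjoint, so no volume is double counted, and subtracting yields the claim. The analogous statement for $j$ follows by symmetry.
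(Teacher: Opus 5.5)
Your decomposition matches the paper's: the cells of $\hat{\mathcal{S}}$ tile $V_i$ up to null sets, so everything rests on the containment $V_i\subseteq\hat V_{new}\cup\bigcup_{k\in\mathcal{N}(i)\setminus\{j\}}\hat V_k$. The paper only asserts this containment. Your argument for it has a genuine gap.

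From $\mathbf{p}\in\hat V_l$ you only use that $\mathbf{x}_l$ is the nearest site of $\mathcal{S}\setminus\{i,j\}$. Together with $\mathbf{p}\in V_i$, this does \emph{not} imply $l\in\mathcal{N}(i)$. Place $\mathbf{x}_j$ very close to the open segment between $\mathbf{x}_i$ and $\mathbf{x}_l$, and all other sites far away in every direction. Every sphere through $\mathbf{x}_i,\mathbf{x}_l$ that excludes $\mathbf{x}_j$ then has enormous radius and swallows distant sites, so $(i,l)\notin\mathcal{T}_\mathcal{E}$. Yet points of $V_i$ just behind $\mathbf{x}_i$ still have $\mathbf{x}_l$ as their nearest site in $\mathcal{S}\setminus\{i,j\}$. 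Those points actually lie in $\hat V_{new}$, a comparison your argument never makes.

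Hence your pencil step, which moves the centre inside $\Pi_{\mathbf{x}_i,\mathbf{x}_l}$ to expel $\mathbf{x}_j$, cannot be rescued by a cleaner monotonicity argument: the set of admissible centres can simply be empty. The fallback via the algorithm's splitting convention does not help either. The proposition is a statement about exact Voronoi cells, not about how the cost routine handles common neighbours.

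The missing idea is to compare with the new site, which is exactly where the midpoint placement enters. Let $r=\|\mathbf{p}-\mathbf{x}_l\|$. Since $\mathbf{p}\in\hat V_l$, you have $\|\mathbf{p}-\mathbf{x}_{new}\|\ge r$. Since $\mathbf{p}\in V_i$, you have $\|\mathbf{p}-\mathbf{x}_i\|\le r$. The parallelogram identity
\[
\|\mathbf{p}-\mathbf{x}_{new}\|^2=\tfrac12\|\mathbf{p}-\mathbf{x}_i\|^2+\tfrac12\|\mathbf{p}-\mathbf{x}_j\|^2-\tfrac14\|\mathbf{x}_i-\mathbf{x}_j\|^2
\]
then gives $\|\mathbf{p}-\mathbf{x}_j\|^2\ge r^2+\tfrac12\|\mathbf{x}_i-\mathbf{x}_j\|^2>r^2$.

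So $\mathbf{x}_j$ lies strictly outside your initial ball, and hence outside every ball of the nested family shrinking toward $\mathbf{x}_l$. The case you were worried about never arises. Stop at the first moment $\mathbf{x}_i$ reaches the boundary. This gives a centre equidistant from $\mathbf{x}_i$ and $\mathbf{x}_l$ with no site of $\mathcal{S}$ strictly closer, i.e.\ a point of $V_i\cap V_l$. General position then yields $l\in\mathcal{N}(i)\setminus\{j\}$.

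With this one inequality your proof goes through. It also supplies the justification missing from the paper's one-line locality claim. As phrased (``only changes the nearest neighbour of points in $V_i$, $V_j$ or their neighbours''), that claim would only give $l\in\mathcal{N}(i)\cup\mathcal{N}(j)$.
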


\begin{proof}
Because all new Voronoi cells $\hat{V}_k$, $\hat{V}_{new}$ have empty intersection,
\begin{equation}
    \sum_{k \in \mathcal{N}(i)} \text{Vol}(V_i \cap \hat{V}_k) + \text{Vol}(\hat{V}_{new} \cap V_i) \leq Vol(V_i).
    \label{less_than}
\end{equation} 
Edge collapse only changes the nearest neighbour of points in $V_i$, $V_j$ or their neighbours, which implies 
\begin{equation}
    V_i \subset \Big( \hat{V}_{new} \cup\bigcup_{k \in  \mathcal{N}(i)} \hat{V}k \Big)
    \label{union}
.\end{equation} From \ref{union}, it follows that

\begin{equation}
    \sum_{k \in \mathcal{N}(i)} \text{Vol}(V_i \cap \hat{V}_k) + \text{Vol}(\hat{V}_{new} \cap V_i) \geq Vol(V_i).
    \label{greater_than}
\end{equation}
Equations \ref{less_than} and \ref{greater_than} combine to:
\begin{equation}
     \sum_{k \in \mathcal{N}(i)} \text{Vol}(V_i \cap \hat{V}_k) + \text{Vol}(\hat{V}_{new} \cap V_i) = Vol(V_i)
\end{equation}
and re-arranging proves the result.
\end{proof}

\end{document}